\newtheorem{theo}{Theorem}
\newtheorem{prop}[theo]{Proposition}
\newtheorem{defi}[theo]{Definition}
\newtheorem{lem}[theo]{Lemma}
\newtheorem{cor}[theo]{Corollary}
\author{Alexandre Goy}
\newcommand{\D}{\mathbb{D}}
\newcommand{\I}{\mathbb{I}}
\newcommand{\N}{\mathbb{N}}
\newcommand{\HKC}{\texttt{HKC}}
\newcommand{\R}{\mathbb{R}}
\newcommand{\M}{\mathcal{M}}
\newcommand{\e}{\mathfrak{e}}
\renewcommand{\P}{\mathbb{P}}
\newcommand{\gr}{\textbf}
\newcommand{\il}{\textit}
\newcommand{\itemz}{\item[$\triangleright$]}
\renewcommand{\1}{\oplus}
\newcommand{\0}{*}
\renewcommand{\ll}{\pmb{\llbracket}}
\newcommand{\rr}{\pmb{\rrbracket}}
\begin{document}

\title{Trace semantics via determinization for probabilistic transition systems}

\begin{titlepage}

\newcommand{\HRule}{\rule{\linewidth}{0.5mm}}
\begin{center}

\begin{figure}[!h]
   \begin{minipage}[c]{.46\linewidth}
      \includegraphics[scale=0.1]{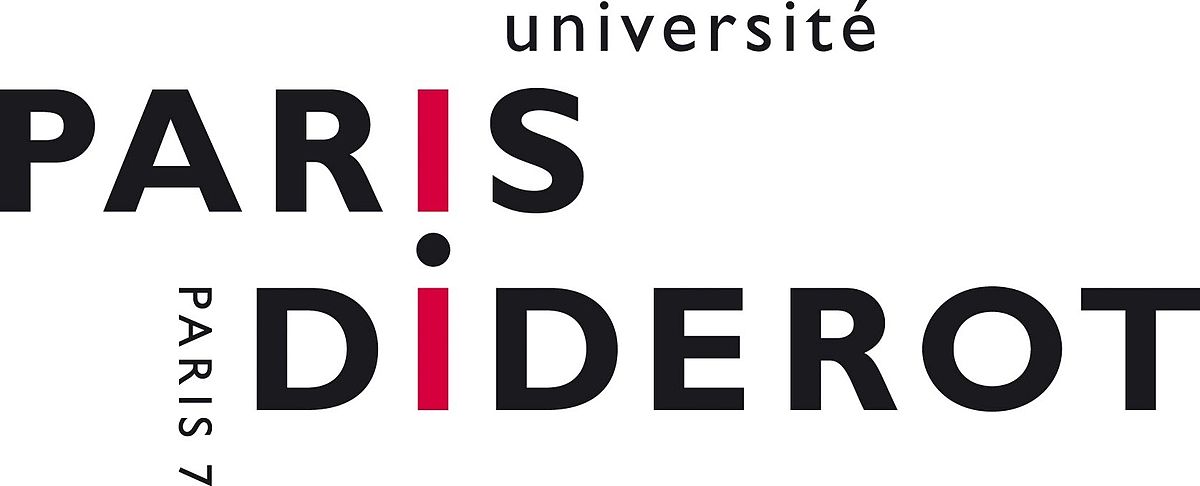}
   \end{minipage} \hfill
   \begin{minipage}[c]{.46\linewidth}
      \includegraphics[scale=0.35]{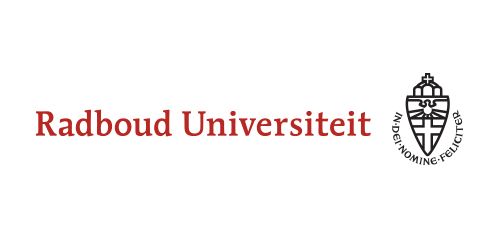}
   \end{minipage}
\end{figure}
\textsc{\Large Master 2}\\[0.5cm]
\textsc{\large Logique Mathématique et Fondements de l'Informatique}\\[0.5cm]

\HRule \\[0.4cm]
{ \huge \bfseries Trace semantics via determinization for probabilistic transition systems}\\[0.4cm]
\HRule \\[1.0cm]
 
\begin{minipage}{0.4\textwidth}
\begin{flushleft} \large
\emph{Author:}\\
Alexandre \textsc{Goy}
\end{flushleft}
\end{minipage}
~
\begin{minipage}{0.4\textwidth}
\begin{flushright} \large
\emph{Supervisor:} \\
Dr. Jurriaan \textsc{Rot}
\end{flushright}
\end{minipage}\\[0.5cm]

{\large \today}\\[0.5cm]
\tableofcontents
\vfill
\end{center}
\end{titlepage}
\newpage

\abstract
A coalgebraic definition of finite and infinite trace semantics for probabilistic transition systems has recently been given using a certain Kleisli category. In this paper this semantics is developed using a coalgebraic method which is an instance of general determinization. Once applied to discrete systems, this point of view allows the exploitation of the determinized structure by up-to techniques. Thereby it becomes possible to algorithmically check the equivalence of two finite probabilistic transition systems.

\section*{Introduction}
\addcontentsline{toc}{section}{Introduction}  

Automata theory is certainly one of the most explored branches of computer science. To meet the growing needs in probabilistic programming, model checking, or randomized algorithms, many kinds of automata are shaped with a probabilistic behaviour. Here is one of them: a generative probabilistic transition system (PTS) consists of a state space $X$, where every state $x$ can either terminate or transition to any state, with a certain probability. Each non-terminating transition outputs a letter $a$ from an alphabet $A$. With this informal definition, one can see that given a state $x$, each word $w$ is generated by the automaton with a certain probability $\llbracket x \rrbracket (w)$. The function $w \mapsto \llbracket x \rrbracket (w)$ is itself a probability distribution if we take into account both finite and infinite words. The aim of this paper is to study the formal definition of these semantics, referred to as the \il{trace semantics}.\\\\
Automata will be described as usual using graphs. Each state is pictured by a circle and there is a distinguished terminal state $*$ which is double-circled. A transition is represented by an arrow labeled with its probability. Non-terminating arrows are further labeled with one transition letter. Consider as a first example the following PTS:
\begin{center}
\begin{tikzpicture}[->,>=stealth',shorten >=1pt,auto,node distance=2.8cm,
                    semithick]
  \tikzstyle{every state}=[fill=none,draw=black,text=black]

  \node[state](A)                    {$x$};
  \node[state,accepting](B) [right of=A] {$*$};

  \path (A) edge  [loop left] node {a,1/2} (A)
            edge              node {1/2} (B);
\end{tikzpicture}
\end{center}
One can intuitively associate trace semantics to this automaton. Given the above PTS, the only reasonable semantics is $\llbracket x \rrbracket(a^n) = \frac{1}{2^{n+1}}$. It is clearly a probability measure over $A^*$. Actually, some strange phenomena can occur if $|A|\geq 2$ because of the fact that $A^\mathbb{N}$ is uncountable. Consider the following example where $A = \{a,b\}$.
\begin{center}
\begin{tikzpicture}[->,>=stealth',shorten >=1pt,auto,node distance=2.8cm,
                    semithick]
  \tikzstyle{every state}=[fill=none,draw=black,text=black]

  \node[state](A)                    {$y$};
  \path (A) edge  [loop left] node {a,1/2} (A)
  		(A) edge  [loop right] node {b,1/2} (A);
\end{tikzpicture}
\end{center}
No matter how you look at it, you should end up with $\llbracket y \rrbracket(w) = 0$ for every finite or infinite word $w$. A possible way to fix this is to ask for a subprobability over words instead of a probability. Then $\llbracket y \rrbracket$ can be defined as a subprobability with total mass $0$. But a serious problem appears, because with this definition, the following state $z$ has the same (trivial) semantics as $y$, so that they are deemed equivalent.
\begin{center}
\begin{tikzpicture}[->,>=stealth',shorten >=1pt,auto,node distance=2.8cm,
                    semithick]
  \tikzstyle{every state}=[fill=none,draw=black,text=black]

  \node[state](A)                    {$z$};
  \path (A) edge  [loop left] node {a,3/4} (A)
  		(A) edge  [loop right] node {b,1/4} (A);
\end{tikzpicture}
\end{center}
But for example, $y$ is twice as likely as $z$ to generate an infinite word that begins with $b$. This is a concrete behavioral difference. In case infinite traces are taken into account, both states are thus not to be considered equivalent. Further, we would like to have techniques to prove that they are (not). To this purpose, it is necessary to dive into measure theory and define the arguments of $\llbracket x \rrbracket$ to be \il{sets of words} instead of being simply words.\\\\
To this end we will use \il{coalgebras}. State-based sytems are increasingly modeled using a coalgebraic point of view, which benefits from the powerful toolbox of category theory. An introduction to coalgebras can be found in \cite{Jacobs97} or \cite{Jacobs16}. This framework is notably convenient when it comes to define trace semantics; this is performed for very general PTS in \cite{Kerstan13}, using a construction in a Kleisli category as in \cite{Hasuo07}. The usual counterpart of this kind of construction is to carry out a determinization process formally based on an Eilenberg-Moore category. Concretely, the state space is changed in order to make transitions become deterministic.  The comparison between these two methods is discussed in \cite{Jacobs15}. Determinization allows to exploit \il{bisimulation up-to} techniques, a family of proof methods for behavioural equivalence of state-based systems, which have been extensively applied in concurrency theory and, more recently, in automata theory. For example, non-deterministic automata can be determinized via the well-known powerset construction. This opens the way for the $\texttt{HKC}$ algorithm of \cite{Bonchi13} that checks the equivalence of non-deterministic automata using up-to techniques. \\\\
Our main contribution is to redefine the Kleisli trace semantics of \cite{Kerstan13} using the Eilenberg-Moore method. This is done for both discrete and continuous sytems. In the case of discrete systems, our approach allows to generalize the $\HKC$ algorithm of \cite{Bonchi17} to an algorithm $\HKC^\infty$ that checks equivalence of states (i.e. it checks if $\llbracket x \rrbracket = \llbracket y \rrbracket $) for both finite and infinite words. Our paper is organized as follows. In section $1$ are introduced the basic concepts of category theory, coalgebraic modeling and measure theory, including a part about measurable sets of words. Section $2$ deals with the discrete case. The determinization process is performed  by hand in order to go straight to the $\HKC^\infty$ algorithm and its correctness. This is followed by a few examples. The general setting is presented in section $3$, where the origin of the determinization process is further explained. Its central result is Theorem \ref{keybis}. Section $3$ ends with the proof that our semantics is the same as the Kleisli semantics from \cite{Kerstan13}; the general framework that relates both constructions is mentioned. It might well be worth to begin by scanning section $2$ to see the underlying ideas of the construction without caring about the measurability of every function, as well as some concrete brightening examples. Definitions and theorems that are not explicitely stated or referenced are considered as folklore and can be found in any basic related book.\vfill
\noindent \gr{Related work}. Our main source for the overall spirit of the whole paper is undoubtedly \cite{Kerstan13}, which is in turn influenced by the Kleisli constructions done in \cite{Hasuo07}. Both versions of $\texttt{HKC}$ presented in \cite{Bonchi13} and \cite{Bonchi17} are good starting points to understand bisimulation up-to and the determinization method. The overhanging link between Kleisli and determinization is discussed in \cite{Jacobs15}.
\newpage
\section{Preliminaries}
The set of positive integers is $\N$. The set of real numbers is $\mathbb{R}$, the set of non-negative real numbers is $\mathbb{R}_+$. The notation $1$ stands for a set with only one element $1 = \{ * \}$. Given a set $X$, its power set is $\mathcal{P}(X)$.
\subsection{Measure theory}
Let $X$ be any set. A \il{$\sigma$-algebra} on $X$ is a subset $\Sigma_X \subseteq \mathcal{P}(X)$ such that $\emptyset \in \Sigma_X$ and $\Sigma_X$ is closed under complementation and countable union. Note that this implies that $X \in \Sigma_X$ and that $\Sigma_X$ is closed under countable intersection and under set difference. Given any subset $G \subseteq \mathcal{P}(X)$, there always exists a smallest $\sigma$-algebra containing $G$. We call it the $\sigma$-algebra \il{generated} by $G$ and denote it by $\sigma_X(G)$. For example, $\mathcal{P}(X)$ is a $\sigma$-algebra on $X$. When working with real numbers $\mathbb{R}$, we will use the Borel $\sigma$-algebra $\mathcal{B}(\mathbb{R}) = \sigma_\mathbb{R} (\{ (-\infty,x] \mid x \in \mathbb{R} \})$. More specifically we denote the line segment $[0,1]$ by $\I$ and use $\mathcal{B}(\I) = \{ B \cap \I \mid B \in \mathcal{B}(\mathbb{R}) \}$ as the canonical $\sigma$-algebra on $\I$. If $X$ is a set and $\Sigma_X$ is a $\sigma$-algebra on $X$, the pair $(X,\Sigma_X)$ is called a measurable space. From now we will write $X$ for $(X,\Sigma_X)$ when the $\sigma$-algebra used is clear.\\\\
\il{Product.} Given measurable spaces $(X,\Sigma_X)$ and $(Y,\Sigma_Y)$, we define a product $\sigma$-algebra on $X \times Y$ by $\Sigma_X \otimes \Sigma_Y = \sigma_{X \times Y} (\{ S_X \times S_Y \mid S_X \in \Sigma_X, S_Y \in \Sigma_Y \})$. The product of measurable spaces is then defined by $(X,\Sigma_X) \otimes (Y,\Sigma_Y) = (X \times Y, \Sigma_X \otimes \Sigma_Y)$. Note that if $X \in G_X \subseteq \mathcal{P}(X)$ and $Y \in G_Y \subseteq \mathcal{P}(Y)$, then $\sigma_{X\times Y}(\{ S_X \times S_Y \mid S_X \in G_X, S_Y \in G_Y \}) = \sigma_X(G_X) \otimes \sigma_Y(G_Y)$.\\\\
\il{Sum.} Given measurable spaces $(X,\Sigma_X)$ and $(Y,\Sigma_Y)$, we define a sum $\sigma$-algebra on the disjoint union $X+Y = \{ (x,0) \mid x \in X \} \cup \{ (y,1) \mid y \in Y \}$ by $\Sigma_X \oplus \Sigma_Y = \{ S_X + S_Y \mid S_X \in \Sigma_X, S_Y \in \Sigma_Y \}$. The sum of measurable spaces is then defined by $(X,\Sigma_X)\oplus (Y,\Sigma_Y) = (X+Y,\Sigma_X \oplus \Sigma_Y)$. Note that if $\emptyset \in G_X \subseteq \mathcal{P}(X)$ and $Y \in G_Y \subseteq \mathcal{P}(Y)$, then $\sigma_{X+Y}(G_X \oplus G_Y) = \sigma_X (G_X) \oplus \sigma_Y (G_Y)$, see \cite{Kerstan13}.\\\\
Binary products and sums can be easily generalized to finite products and sums by induction. These are denoted by $\bigotimes_{i\in I}$ and $\bigoplus_{i\in I}$ respectively.\\\\
A function $f : (X,\Sigma_X) \to (Y,\Sigma_Y)$ is \textit{measurable} if for all $S_Y \in \Sigma_Y$, $f^{-1}(S_Y) \in \Sigma_X$. The composition of measurable functions is measurable.
\begin{lem} \label{generated}
If $f : (X,\Sigma_X) \to (Y,\sigma_Y(G_Y))$ is such that for all $S_Y \in G_Y$, $f^{-1}(S_Y) \in \Sigma_X$, then it is measurable.
\end{lem}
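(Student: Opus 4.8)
The plan is to invoke the \emph{good sets principle}: I would single out the family of subsets of $Y$ whose $f$-preimage lands in $\Sigma_X$, show this family is itself a $\sigma$-algebra containing $G_Y$, and then let the minimality of $\sigma_Y(G_Y)$ do the rest.

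Concretely, first I would set
\[
\mathcal{A} = \{ S \subseteq Y \mid f^{-1}(S) \in \Sigma_X \}.
\]
The core of the argument is to verify that $\mathcal{A}$ is a $\sigma$-algebra on $Y$, and this rests entirely on the elementary fact that taking preimages commutes with all the Boolean operations defining a $\sigma$-algebra. Namely: $f^{-1}(\emptyset) = \emptyset \in \Sigma_X$, so $\emptyset \in \mathcal{A}$; for complements one uses $f^{-1}(Y \setminus S) = X \setminus f^{-1}(S)$, so that $S \in \mathcal{A}$ forces $f^{-1}(S) \in \Sigma_X$ and hence $X \setminus f^{-1}(S) \in \Sigma_X$ since $\Sigma_X$ is closed under complementation, giving $Y \setminus S \in \mathcal{A}$; and for countable unions one uses $f^{-1}\bigl(\bigcup_n S_n\bigr) = \bigcup_n f^{-1}(S_n)$, which lies in $\Sigma_X$ whenever each $S_n \in \mathcal{A}$, because $\Sigma_X$ is closed under countable union. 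This shows $\mathcal{A}$ satisfies the three defining closure properties.

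To finish, I would observe that the hypothesis of the lemma states precisely $G_Y \subseteq \mathcal{A}$. Since $\mathcal{A}$ is a $\sigma$-algebra containing $G_Y$, and $\sigma_Y(G_Y)$ is by definition the smallest $\sigma$-algebra containing $G_Y$, minimality yields $\sigma_Y(G_Y) \subseteq \mathcal{A}$. Unfolding the definition of $\mathcal{A}$, this means $f^{-1}(S_Y) \in \Sigma_X$ for every $S_Y \in \sigma_Y(G_Y)$, which is exactly the statement that $f$ is measurable. There is no genuine obstacle here: the only thing requiring care is confirming the three preimage identities above, all of which are routine, so the whole proof reduces to assembling these standard facts.
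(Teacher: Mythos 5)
Your proof is correct and is essentially identical to the paper's own argument: both define the family $\{ S \subseteq Y \mid f^{-1}(S) \in \Sigma_X \}$, verify it is a $\sigma$-algebra using the compatibility of preimages with complements and countable unions, and conclude by minimality of $\sigma_Y(G_Y)$. Nothing is missing.
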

\begin{proof}
The proof is given as it shows the classical way of reasoning about generated $\sigma$-algebras. Let $\Sigma = \{ S \subseteq Y \mid f^{-1}(S) \in \Sigma_X \} \subseteq \mathcal{P}(Y)$. Note that $\emptyset \in \Sigma$ because $f^{-1}(\emptyset) = \emptyset \in \Sigma_X$, and that $\Sigma$ is closed under complementation and countable union. Indeed, if $(S_n)_{n\in \N} \in \Sigma^\N$ are such that $f^{-1}(S_n) \in \Sigma_X$ for all $n\in \N$, we have $f^{-1}(Y\setminus S_0) = X\setminus f^{-1}(S_0) \in \Sigma_X$ as $\Sigma_X$ is closed under complementation, and $f^{-1}\left( \bigcup_{n\in \N} S_n \right) = \bigcup_{n\in\N} f^{-1}(S_n) \in \Sigma_X$ as $\Sigma_X$ is closed under countable unions. Thus $\Sigma$ is a $\sigma$-algebra. Furthermore, the hypothesis gives that $G_Y \subseteq \Sigma$. As $\sigma_Y(G_Y)$ is the \il{smallest} $\sigma$-algebra that contains $G_Y$, we get $\sigma_Y(G_Y) \subseteq \Sigma$ so that $f$ is measurable.
\end{proof}
\noindent For any family of measurable spaces $(Z_i,\Sigma_{Z_i})_{i\in I}$, and any family of functions $(f_i)_{i \in I} : Y \to Z_i$, define $\Sigma_Y$ as the smallest $\sigma$-algebra on $Y$ that makes every $f_i$ measurable. This is the $\sigma$-algebra \il{generated} by $(f_i)_{i\in I}$.
\begin{lem} \label{usb}
Given $h : (X,\Sigma_X) \to (Y,\Sigma_Y)$ where $\Sigma_Y$ is generated by some $(f_i)_{i\in I}$, assume that for all $i \in I$, $(f_i \circ h)$ is measurable. Then $h$ is measurable.
\end{lem}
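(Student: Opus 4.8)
The plan is to reduce the statement to Lemma~\ref{generated} by making the phrase ``$\Sigma_Y$ is generated by $(f_i)_{i\in I}$'' concrete and then transporting preimages across the composites $f_i \circ h$.

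First I would spell out the generating family. By definition $\Sigma_Y$ is the smallest $\sigma$-algebra making every $f_i : (Y,\Sigma_Y) \to (Z_i,\Sigma_{Z_i})$ measurable, and a standard unwinding identifies this with $\sigma_Y(G_Y)$ for $G_Y = \bigcup_{i\in I} \{ f_i^{-1}(S) \mid S \in \Sigma_{Z_i} \} \subseteq \mathcal{P}(Y)$. The two inclusions are immediate: each $f_i$ is measurable with respect to $\sigma_Y(G_Y)$ since $f_i^{-1}(S) \in G_Y \subseteq \sigma_Y(G_Y)$ for every $S \in \Sigma_{Z_i}$; and conversely, any $\sigma$-algebra rendering all the $f_i$ measurable must contain every $f_i^{-1}(S)$, hence contain $G_Y$, hence contain the smallest $\sigma$-algebra $\sigma_Y(G_Y)$ over it.

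With $\Sigma_Y = \sigma_Y(G_Y)$ established, I would apply Lemma~\ref{generated} to $h : (X,\Sigma_X) \to (Y,\sigma_Y(G_Y))$: it then suffices to check that $h^{-1}(S_Y) \in \Sigma_X$ for each generator $S_Y \in G_Y$. Any such $S_Y$ is of the form $f_i^{-1}(S)$ for some $i \in I$ and $S \in \Sigma_{Z_i}$, and contravariant functoriality of preimages yields $h^{-1}(f_i^{-1}(S)) = (f_i \circ h)^{-1}(S)$. Since $f_i \circ h$ is measurable by hypothesis and $S \in \Sigma_{Z_i}$, this preimage lies in $\Sigma_X$, which is exactly what Lemma~\ref{generated} requires.

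The only step that is not purely mechanical is the first one, namely correctly recognizing the generated $\sigma$-algebra $\Sigma_Y$ as $\sigma_Y(G_Y)$; once that description is in hand, the argument is a routine combination of Lemma~\ref{generated} with the identity $h^{-1} \circ f_i^{-1} = (f_i \circ h)^{-1}$. In particular, no countability or closure bookkeeping reappears here, because Lemma~\ref{generated} has already absorbed all of the $\sigma$-algebra closure properties into a check on generators alone.
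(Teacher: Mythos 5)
Your proof is correct and follows essentially the same route as the paper's: both identify $\Sigma_Y$ with $\sigma_Y\bigl(\{ f_i^{-1}(S) \mid i \in I,\ S \in \Sigma_{Z_i} \}\bigr)$ via the two minimality inclusions, and then invoke Lemma~\ref{generated} together with the identity $h^{-1}(f_i^{-1}(S)) = (f_i \circ h)^{-1}(S)$. There is nothing to add or correct.
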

\begin{proof}
First prove that $\Sigma_Y = \sigma_Y(\{ f_i^{-1}(S_i)\mid i\in I, S_i \in \Sigma_{Z_i} \})$. The $\sigma$-algebra on the right makes all functions $f_i$ measurable, thus it contains $\Sigma_Y$. On the other hand, for all $i\in I$, $S_i \in \Sigma_{Z_i}$, we have $f_i^{-1}(S_i) \in \Sigma_Y$ because $f_i$ is measurable. Now apply Lemma \ref{generated} using that for all $i \in I$ and $S_i \in \Sigma_{Z_i}$, $h^{-1}(f_i^{-1}(S_i)) = (f_i \circ h)^{-1}(S_i) \in \Sigma_X$ because $f_i \circ h$ is measurable.
\end{proof}
\noindent For instance, using Lemma \ref{usb} facilitates measurability proofs for functions with several components. Let $(Z_i,\Sigma_i)_{i \in I}$ be a finite sequence of measurable spaces, $(X,\Sigma_X)$ be a measurable space, and $f : (X,\Sigma_X) \to \left(\prod_{i\in I} Z_i, \bigotimes_{i\in I} \Sigma_i\right)$ be a function with components $(f_i : X \to Z_i)_{i\in I}$. Note that $\bigotimes_{i\in I} \Sigma_i$ is the smallest $\sigma$-algebra on $\prod_{i\in I} Z_i$ that makes every projection $\pi_j : \prod_{i\in I} Z_i \to Z_j $ measurable. (Indeed, $\pi_j^{-1}(S_j) = \prod_{i\in I} X_i^\varepsilon$ where $X_j^\varepsilon = S_j$ and $X_i^\varepsilon = X_i$ for $i \neq j$, so that $\bigotimes_{i\in I} \Sigma_i$ makes every $\pi_j$ measurable; on the other hand, a generator of $\prod_{i\in I} Z_i$ can be written $\prod_{i\in I} S_i = \bigcap_{i\in I} \pi^{-1}(S_i)$ which is in any $\sigma$-algebra that makes every $\pi_j$ measurable.) Thus applying Lemma \ref{usb} it is sufficient that every $f_j$ is measurable in order for $f$ to be measurable.\\\\
A \il{finite measure} on the measurable space $(X,\Sigma_X)$ is a function $m : \Sigma_X \to \mathbb{R}_+$ such that $m(\emptyset)=0$ and if $(S_n)_{n\in\N} \in \Sigma_X^\N$ are disjoint, then $m\left(\bigcup_{n\in\N} S_n \right) = \sum_{n\in \N} m(S_n)$ (this is called \il{$\sigma$-additivity}). In this case the triple $(X,\Sigma_X,m)$ is called a \il{measure space}. If $m(X)=1$ (resp. $m(X)\leq 1)$ it is a \il{probability space} and $m$ is a \il{probability measure} (resp. sub-probability). Finite measures have pleasant properties: $m(X\setminus A) = m(X) - m(A)$ for any $A \in \Sigma_X$, $m\left( \bigcup_{n\in \N} A_n \right) = \lim_{n\to \infty} m(A_n)$ for any increasing sequence $(A_n)_{n\in\N}$ of $\Sigma_X$, and analogously $m\left(\bigcap_{n\in \N} B_n \right) = \lim_{n\to \infty} m(B_n)$ for any decreasing sequence $(B_n)_{n\in\N}$ of $\Sigma_X$. These measures are called finite because they satisfy $m(X) < \infty$. 
\subsubsection*{Measure over words}
We begin with some reminders about languages. Any finite set $A$ can be called an \il{alphabet} and its elements \il{letters}. The set of words of length $n$ with letters in $A$ is denoted by $A^n$. By convention $A^0 =\{ \varepsilon \}$ where $\varepsilon$ is the empty word. The set of finite words over $A$ is denoted by $A^* = \bigcup_{n\in\N} A^n$, the set of infinite words by $A^\omega = A^\N$ and the set of (finite and infinite) words by $A^\infty = A^* \cup A^\omega$. A language $L$ is a subset of $\mathcal{P}(A^*)$. It can be seen as a function $L : A^* \to \{0,1\}$, by setting $L(w)=1$ iff $w\in L$. The \il{language derivative} of $L$ with respect to a letter $a$ is defined by $L_a(w)=L(aw)$.\\\\
The length of $w \in A^\infty$ is denoted by $|w| \in \N \cup \{ \infty \}$. The concatenation function $A^* \times A^\infty \to A^\infty$ is denoted by juxtaposition and defined by $uv(n) = u(n)$ if $n < |u|$ and $uv(n) = v(n-|u|)$ if $|u|\leq n < |u| + |v|$. It can be extended to languages $\mathcal{P}(A^*) \times\mathcal{P}(A^\infty) \to \mathcal{P}(A^\infty)$ by setting $LM = \{ uv \mid u \in L, v \in M \}$. In the following, $\{w\} M$ will be denoted by $wM$ for convenience. \\\\
One aim of this paper is to define certain probability measures on $A^\infty$ which reflects the behaviour of a given automaton. For this purpose we need to make precise which $\sigma$-algebra we use.
\begin{defi}
Let $S_\infty = \{\emptyset\} \cup \{ \{w\} \mid w \in A^* \} \cup \{wA^\infty \mid w \in A^* \}$. Define the $\sigma$-algebra of \emph{measurable sets of words} to be $\Sigma_{A^\infty} = \sigma_{A^\infty}(S_\infty)$.
\end{defi}
\noindent This $\sigma$-algebra is generated by a countable family of simple generators: the empty set, the singletons of finite words, and the cones, i.e., sets $wA^\infty$ of words that have the finite word $w$ as a prefix. In the sequel, this is the $\sigma$-algebra on $A^\infty$ implicitly used. As a first step, we take a look at some properties of measurable sets of words.
\begin{prop}\label{usualsets} The following sets of words are measurable:
\begin{enumerate}
\item[(i)] The singleton $\{w\}$ for any $w \in A^\infty$;
\item[(ii)] Any countable language;
\item[(iii)] Any language of finite words;
\item[(iv)] $\emptyset$, $A^*$, $A^\omega$, $A^\infty$;
\item[(v)] The concatenation $LS$ where $L \subseteq A^*$ and $S \in \Sigma_{A^\infty}$.
\end{enumerate}
\end{prop}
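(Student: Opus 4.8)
The plan is to handle the five items in the order given, reducing the later ones to the earlier ones and to the closure properties of $\Sigma_{A^\infty}$, and to isolate (v) as the only step that is not pure bookkeeping. For (i) I would split on whether $w$ is finite or infinite. If $w \in A^*$ then $\{w\}$ is literally one of the generators in $S_\infty$, so there is nothing to prove. If $w \in A^\omega$, write $w_n$ for the length-$n$ prefix of $w$ (a finite word) and establish the identity
\[
\{w\} = \bigcap_{n \in \N} w_n A^\infty,
\]
since a word lies in every prefix-cone of $w$ exactly when it agrees with $w$ on all finite prefixes, i.e.\ equals $w$. Each $w_n A^\infty$ is a generating cone, so this countable intersection is measurable.

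Item (ii) is then immediate: a countable language is a countable union $\bigcup_i \{w_i\}$ of singletons, each measurable by (i), and $\Sigma_{A^\infty}$ is closed under countable unions. Item (iii) follows at once, because $A$ is finite, so $A^*$ is countable and hence \emph{any} $L \subseteq A^*$ is a countable language. For (iv): $\emptyset \in S_\infty$; the set $A^\infty = \varepsilon A^\infty$ is a generating cone; $A^*$ is a countable language, hence measurable by (iii); and $A^\omega = A^\infty \setminus A^*$ is measurable as a difference of measurable sets.

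The real work is (v), where I would use the good-sets principle. Fixing a single finite word $w \in A^*$, introduce
\[
\mathcal{D}_w = \{\, S \subseteq A^\infty \mid wS \in \Sigma_{A^\infty} \,\}
\]
and prove it is a $\sigma$-algebra containing $S_\infty$; since $\Sigma_{A^\infty} = \sigma_{A^\infty}(S_\infty)$ is the smallest such, this forces $\Sigma_{A^\infty} \subseteq \mathcal{D}_w$. The key observation is that prepending $w$ is injective, so it commutes with the Boolean operations relative to the cone $wA^\infty$: one has $w\emptyset = \emptyset$, $w(A^\infty \setminus S) = wA^\infty \setminus wS$, and $w\bigcup_n S_n = \bigcup_n wS_n$; together with the measurability of $wA^\infty$ itself, these give closure under complementation and countable union. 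That $S_\infty \subseteq \mathcal{D}_w$ comes from $w\{u\} = \{wu\}$ and $w(uA^\infty) = (wu)A^\infty$ with $wu \in A^*$, so generators are sent to generators. Once $wS$ is known measurable for every $w \in A^*$ and every $S \in \Sigma_{A^\infty}$, I conclude by writing $LS = \bigcup_{w \in L} wS$, a countable union (again because $A^*$, hence $L$, is countable) of measurable sets.

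I expect (v) to be the only obstacle, and within it the delicate point is the closure under complementation: it works only because prepending $w$ is injective, so that complementing must be done \emph{inside} the cone $wA^\infty$ rather than inside all of $A^\infty$. Everything else is routine manipulation of countable unions together with the elementary prefix identity used in (i).
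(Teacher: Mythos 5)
Your proposal is correct and follows essentially the same route as the paper: generators and the prefix-cone intersection for (i), countable unions for (ii)--(iv), and for (v) the good-sets argument with the identities $w\{u\}=\{wu\}$, $w(uA^\infty)=(wu)A^\infty$, $w\bigcup_n S_n = \bigcup_n wS_n$, and $w(A^\infty\setminus S)=wA^\infty\setminus wS$, followed by the countable union $LS=\bigcup_{w\in L}wS$. The only cosmetic difference is that you fix $w$ and work with a class $\mathcal{D}_w$ for each $w$, whereas the paper uses the single class $\{S \mid \forall w\in A^*,\ wS\in\Sigma_{A^\infty}\}$, which is just the intersection of your classes.
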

\begin{proof} \begin{enumerate}
\item[$(i)$] It is already known for finite words $w$ since $\{w\} \in S_\infty \subseteq \Sigma_{A^\infty}$. Let $w \in A^\omega$. Let $w_{|n}$ be the finite word of length $n$ defined by $w_{|n}(k) = w(k)$ for any $k \leq n$. Then $\{w\} = \bigcap_{n\in\N} w_{|n}A^\infty \in \Sigma_{A^\infty}$ because $\Sigma_{A^\infty}$ is closed under countable intersections.
\item[$(ii)$] Any countable language is a countable union of singletons, which are all measurable sets.
\item[$(iii)$] A language of finite words is a subset of the countable set $A^*$, so it is countable, hence measurable.
\item[$(iv)$] The empty set and $A^\infty$ are in $\Sigma_{A^\infty}$ because this is a $\sigma$-algebra on $A^\infty$. The set $A^*$ is a language of finite words. Finally, $A^\omega = A^\infty \setminus A^* \in \Sigma_{A^\infty}$ because $\Sigma_{A^\infty}$ is closed under complementation.
\item[$(v)$] As $LS = \bigcup_{w\in L} wS$ is a countable union, it is sufficient to prove that for all $S \in \Sigma_{A^\infty}$, $wS$ is measurable for any $w \in A^*$. Let $\Sigma = \{ S  \subseteq A^\infty \mid \forall w \in A^*, wS \in \Sigma_{A^\infty}\}$. It is easy to see that $S_\infty \subseteq \Sigma$, because for any $w \in A^*$, $w\emptyset = \emptyset \in \Sigma_{A^\infty}$, $w\{u\} = \{wu\} \in \Sigma_{A^\infty}$ and $w(uA^\infty) = (wu)A^\infty \in \Sigma_{A^\infty}$. In particular $\emptyset \in \Sigma$. Moreover, if $(S_n)_{n\in\N}$ are all in $\Sigma$, then for any $w \in A^*$ we have $w\bigcup_{n\in\N} S_n = \bigcup_{n\in\N} wS_n \in \Sigma_{A^\infty}$ since $\Sigma_{A^\infty}$ is closed under countable union, and $w(A^\infty \setminus S_0) = wA^\infty \setminus wS_0 \in \Sigma_{A^\infty}$ since $\Sigma_{A^\infty}$ is closed under set difference. Thus $\bigcup_{n\in\N} S_n \in \Sigma$ and $A^\infty \setminus S_0 \in \Sigma$ so $\Sigma$ is a $\sigma$-algebra. Since it contains $S_\infty$, we get that $\Sigma_{A^\infty} \subseteq \Sigma$ and this is exactly what had to be proved.
\end{enumerate}
\end{proof}
\noindent In the following, if $m$ is a measure over $A^\infty$ and $w \in A^\infty$, we will write $m(w)$ instead of $m(\{w\})$. Since $S_\infty$ has a pleasant structure (it is a \il{covering semiring of sets}), we have the following key theorem:
\begin{theo}[from \cite{Kerstan13}] \label{key}
Let $m : S_\infty \to \mathbb{R}_+$ be a map satisfying $m(\emptyset)=0$. The two following conditions are equivalent.
\begin{enumerate}
\item[(i)] There exists a unique measure $\tilde{m} : \Sigma_{A^\infty} \to \mathbb{R}_+$ such that $\tilde{m}_{|S_\infty} = m$.
\item[(ii)] For all $w \in A^*$, $m(wA^\infty) = m(w) + \sum_{a\in A} m(waA^\infty)$.
\end{enumerate}
\end{theo}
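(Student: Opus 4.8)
The plan is to read this as a standard measure-extension result in which condition (ii) is precisely the compatibility condition making $m$ a pre-measure on the semiring $S_\infty$. The direction $(i) \Rightarrow (ii)$ is the easy one. For every $w \in A^*$ there is a disjoint decomposition
$$wA^\infty = \{w\} \sqcup \bigsqcup_{a \in A} waA^\infty,$$
because a word with prefix $w$ is either $w$ itself or continues with a first letter $a$. All blocks lie in $S_\infty$, the union over $A$ is finite, so applying finite additivity of $\tilde{m}$ and $\tilde{m}_{|S_\infty} = m$ gives exactly (ii).

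For $(ii) \Rightarrow (i)$ I would invoke the Hahn--Carathéodory extension theorem for semirings: a non-negative pre-measure on a semiring $\mathcal{S}$ extends to a measure on $\sigma(\mathcal{S})$, and uniquely so when it is $\sigma$-finite. It then suffices to check that $S_\infty$ is a semiring and that $m$ is a pre-measure on it. The semiring axioms are routine: $\emptyset \in S_\infty$; $S_\infty$ is stable under finite intersection (two cones are nested or disjoint, and a singleton meets a cone in itself or in $\emptyset$); and the relative complement of two elements is a finite disjoint union of elements, the only nontrivial case being the removal of a singleton or of a sub-cone from a cone, which one handles by telescoping the decomposition above one letter at a time. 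Next comes finite additivity of $m$. Iterating (ii) yields, for each $n$, the uniform-depth identity
$$m(wA^\infty) = \sum_{u \in A^*,\, |u| < n} m(wu) + \sum_{u \in A^*,\, |u| = n} m(wuA^\infty).$$
Given any finite partition of a cone into elements of $S_\infty$, I would choose $n$ strictly larger than the length of every finite word occurring in it; each block is then a disjoint union of cells of the depth-$n$ partition, and summing the cell values through the displayed identity gives additivity.

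The main obstacle is promoting this one-step, finite compatibility to genuine \emph{countable} additivity on $S_\infty$, and the cleanest resolution is a compactness argument. Embedding $A^\infty$ into the compact space $(A \cup \{\bot\})^\omega$ by padding each finite word with a fresh stop symbol $\bot$ realizes $A^\infty$ as a compact Hausdorff space in which every element of $S_\infty$ depends on only finitely many coordinates, hence is clopen and therefore compact. Consequently any countable partition of an element of $S_\infty$ into elements of $S_\infty$ is an open cover of a compact set by pairwise disjoint opens, so all but finitely many blocks must be empty; $\sigma$-additivity thus collapses onto the finite additivity already established, and $m$ is a pre-measure.

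It remains to record uniqueness. Since $m : S_\infty \to \mathbb{R}_+$ is real-valued, $m(A^\infty) < \infty$ with $A^\infty = \varepsilon A^\infty \in S_\infty$, so the pre-measure is finite, a fortiori $\sigma$-finite, and the extension theorem already delivers uniqueness; equivalently, $S_\infty$ is a $\pi$-system (it is closed under finite intersection) generating $\Sigma_{A^\infty}$ and containing $A^\infty$ with finite mass, so Dynkin's lemma forces any two extensions to coincide. Combining the two implications completes the equivalence.
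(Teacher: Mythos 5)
Your proof is correct, and it reconstructs in a self-contained way what the paper obtains purely by citation: the paper's own proof of $(ii)\Rightarrow(i)$ consists of invoking Lemma 3.18 of \cite{Kerstan13} (condition $(ii)$ is equivalent to $m$ being a pre-measure on the covering semiring $S_\infty$) together with the extension theorem stated there, so all the actual mathematical content is outsourced. You follow the same skeleton (semiring, pre-measure, Carath\'eodory--Hahn extension), but you supply the two proofs the paper omits: finite additivity, obtained by iterating $(ii)$ into the uniform-depth identity and refining an arbitrary finite partition of a cone into the depth-$n$ cells, and $\sigma$-additivity on $S_\infty$, obtained by embedding $A^\infty$ into the compact space $(A\cup\{\bot\})^\omega$ so that every element of $S_\infty$ becomes clopen, whence any countable disjoint cover of an element of $S_\infty$ by elements of $S_\infty$ has only finitely many nonempty blocks. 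This compactness trick is precisely what makes $S_\infty$ a \emph{covering} semiring in the sense alluded to just before the theorem, and it is arguably the cleanest route from the one-step identity to countable additivity; combined with your remark that $m(A^\infty)<\infty$ (so uniqueness follows from $\sigma$-finiteness, or equivalently from the $\pi$-$\lambda$ theorem applied to the $\pi$-system $S_\infty$), your argument could replace the citation entirely. Two details are worth recording if you write this up: you should check explicitly that the image of $A^\infty$ in $(A\cup\{\bot\})^\omega$ is closed (it is the set of sequences in which every letter following a $\bot$ is again $\bot$), since this is what makes it compact; and in your choice of $n$ for finite additivity the relevant quantity is the length of the partition's words relative to the root $w$ of the cone being partitioned. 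Both are routine, and the proof stands.
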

\begin{proof}
$(i)\Rightarrow (ii)$ This is obvious because the equation comes directly from the $\sigma$-additivity of $\tilde{m}$. $(ii)\Rightarrow (i)$ According to Lemma $3.18$ of \cite{Kerstan13}, $(ii)$ is equivalent to the fact that $m$ is a pre-measure. Using the extension theorem (Proposition $2.4$ in \cite{Kerstan13}), this pre-measure can be uniquely extended to a measure as in $(i)$.
\end{proof}
\noindent The notion of language derivative was defined above by $L_a(w) = L(aw)$. One can introduce the same operation for measures over words.
\begin{defi}[Measure derivative]
Let $m$ be a measure on $A^\infty$ and $a\in A$. The map $m_a$ defined by $m_a(S) = m(aS)$ for any $S \in \Sigma_{A^\infty}$ is a measure, called the \emph{measure derivative} of $m$ (with respect to $a$).
\end{defi}
\noindent This is well-defined because sets of the shape $aS$ for $S\in \Sigma_{A^\infty}$ are measurable according to Proposition \ref{usualsets} $(v)$, and $m_a(\emptyset) = m(a\emptyset) = m(\emptyset) = 0$ and $m_a\left( \bigcup_{n\in\N} A_n\right) = m\left(\bigcup_{n\in\N} aA_n\right) = \sum_{n\in\N} m(aA_n) = \sum_{n\in\N} m_a(A_n)$ if the $A_n$ are disjoint.\\\\
In the following, the space of sub-probability measures on $(A^\infty,\Sigma_{A^\infty})$ is denoted by $\M(A^\infty)$. Note that $\M (A^\infty)$ is closed under measure derivatives: if $m(A^\infty) \leq 1$ then $m_a(A^\infty) = m(aA^\infty) \leq m(A^\infty) \leq 1$.
\subsection{Category theory}
We recall here some basics of category theory. A \il{category} consists of a class of objects $\gr{C}$ and a class of morphisms (or arrows) $\gr{C}(X,Y)$ for every objects $X,Y$ of $\gr{C}$. The notation $f : X \to Y$ stands for the sentence "$f$ is an arrow of $\gr{C}(X,Y)$". For each object $X$ of $\gr{C}$ there is an identity morphism $id_X : X \to X$. Furthermore, there is a composition function $\circ$  which is associative i.e. $(h\circ g)\circ f = h \circ (g\circ f)$, and such that if $f : X \to Y$, then $f \circ id_X = f = id_Y \circ f$. The composition $g \circ f$ is possible iff there exists objects $X,Y,Z$ such that $f : X \to Y$ and $g : Y \to Z$. In the following, we will mainly work with the two following categories:
\begin{itemize}
\itemz The category $\gr{Sets}$ of sets and functions. Objects are usual sets, morphisms are functions, identity morphisms are identity functions and composition is given by the usual composition of functions.
\itemz The category $\gr{Meas}$ of measurable sets and functions. Objects are \il{measurable spaces} $(X,\Sigma_X)$. Morphisms are \il{measurable functions} $f : (X,\Sigma_X) \to (Y,\Sigma_Y)$. Identity morphisms are identity functions and composition is given by the usual composition of functions.
\end{itemize}
Let $\gr{C}$ be a category and $A$ be an object of $\gr{C}$. The object $A$ is \il{final} if for all object $X$ of $\gr{C}$ there exists a unique morphism $f_X : X \to A$ called the \il{final morphism}. Such an object is unique up to (unique) isomorphism.\\\\
A \il{functor} from $\gr{C}$ to $\gr{D}$ is a mapping $F$ that associates to every object $X$ of $\gr{C}$ an object $FX$ of $\gr{D}$, to every morphism $f : X \to Y$ a morphism $Ff : FX \to FY$, and such that $F(id_X) = id_{FX}$ and $Fg \circ Ff = F(g\circ f)$. An endofunctor of $\gr{C}$ is a functor $F : \gr{C} \to \gr{C}$. The composition of two functors is still a functor. For example, the identity functor $Id_\gr{C} : \gr{C} \to \gr{C}$ maps each object and morphism to itself. Given an object $X$ of $\gr{D}$, the constant functor $X : \gr{C} \to \gr{D}$ maps each object to $X$ and each morphism to $id_X$.\\\\
Let $F,G : \gr{C} \to \gr{D}$ be some functors. A \il{natural transformation} $\lambda : F \Rightarrow G$ consists of a $\gr{D}$-morphism $\lambda_X : FX \to GX$ for every $X \in \gr{C}$, such that for every $\gr{C}$-morphism $f : X \to Y$, the following diagram commutes.
\begin{center}
\begin{tikzpicture}
  \matrix (m) [matrix of math nodes,row sep=3em,column sep=4em,minimum width=2em]
  {
     FX & GX \\
     FY & GY \\};
  \path[-stealth]
    (m-1-1) edge node [above] {$\lambda_X$} (m-1-2)
    (m-1-2) edge node [right] {$Gf$}(m-2-2)
    (m-1-1) edge node [left] {$Ff$} (m-2-1)
    (m-2-1) edge node [below] {$\lambda_Y$}(m-2-2);
\end{tikzpicture}
\end{center}
Given a category $\gr{C}$, a \textit{monad} is a triple $(T,\eta,\mu)$ where $T : \gr{C} \to \gr{C}$ is an endofunctor and $\eta : Id_\gr{C} \Rightarrow T$, $\mu : TT \Rightarrow T$ are natural transformations called \il{unit} and \il{multiplication} respectively, such that the following two diagrams commute.
\begin{center}
\begin{tikzpicture}
  \matrix (m) [matrix of math nodes,row sep=3em,column sep=4em,minimum width=2em]
  {
     TX & TTX & & TTTX & TTX \\
     TTX & TX & & TTX & TX \\};
  \draw[double equal sign distance] (m-1-1) -- (m-2-2);
  \path[-stealth]
    (m-1-1) edge node [above] {$\eta_{TX}$} (m-1-2)
    (m-1-2) edge node [right] {$\mu_X$}(m-2-2)
    (m-1-1) edge node [left] {$T\eta_X$} (m-2-1)
    (m-2-1) edge node [below] {$\mu_X$}(m-2-2)
    (m-1-4) edge node [above] {$\mu_{TX}$} (m-1-5)
    (m-2-4) edge node [below] {$\mu_X$} (m-2-5)
    (m-1-4) edge node [left] {$T\mu_X$} (m-2-4)
    (m-1-5) edge node [right] {$\mu_X$} (m-2-5);
\end{tikzpicture}
\end{center} 
\textbf{Examples.} In the context of coalgebras, monads are often used to model branching behaviour, like non deterministic branching or probabilistic branching. It will always be clear from the name of the functor which monad is used. Hence, units are always denoted by $\eta$ and multiplications by $\mu$. \begin{itemize}
\itemz In $\gr{Sets}$, the power set monad $(\mathcal{P},\eta,\mu)$ is defined as follows. Given two sets $X,Y$ and a function $f : X \to Y$, $\mathcal{P}X$ is the power set $\mathcal{P}(X)$ and $\mathcal{P} f : S \in \mathcal{P}X \mapsto f(S) \in \mathcal{P}Y$ is the direct image. The unit is the singleton $\eta_X(x) = \{ x \}$ and the multiplication is given by $\mu_X(\mathcal{S}) = \bigcup_{S\in \mathcal{S}} S$.
%\itemz A commutative semiring is a structure $(\S,+,\cdot,0,1)$ such that $(\S,+,0)$ and $(\S,\cdot,1)$ are commutative monoids, $\cdot$ distributes over $+$ and $0 \cdot s = s \cdot 0 = 0$ for all $s\in\S$. Given a commutative semiring, one can define the semiring monad $(T,\eta,\mu)$ as follows. The functor $T : \gr{Sets} \to \gr{Sets}$ is given by $TX = \S_\omega^X$, the set of functions from $X$ to $\S$ with finite support. For morphisms $f : X \to Y$, $Tf : TX \to TY$ is defined by $Tf(u)(y) = \sum_{x\in f^{-1}(\{y\})} u(x)$. The unit and multiplication are the same as for the probability monad.
\itemz In $\gr{Sets}$, the probability monad $(P,\eta,\mu)$ is defined as follows. For any $u : X \to \I$ the \il{support} of $u$ is defined by $\text{\il{supp}}(u) = \{ x \in X \mid u(x) \neq 0 \}$. If $\text{\il{supp}}(u)$ is finite we say that $u$ has finite support.  Given two sets $X,Y$ and a function $f : X \to Y$, define $PX = \left\{ u : X \to \I \mid u \text{ has finite support and } \sum_{x\in X} u(x) = 1 \right\}$ and $Pf(u)(y) = \sum_{x\in f^{-1}(\{y\})} u(x)$. The unit is the Kronecker delta $\eta_X(x)(y) = \delta_{x,y}$ and the multiplication is given by $\mu_X(U)(y) = \sum_{u\in PX} U(u)u(y)$. This monad models finitely branching probabilistic behaviour. A variant is the subprobability monad $(D,\eta,\mu)$. The only difference is that the sum may be less than $1$: $DX = \left\{ u : X \to \I \mid u \text{ has finite support and } \sum_{x\in X} u(x) \leq 1 \right\}$.
\end{itemize}

\subsubsection*{Algebras and distributive laws}
Let $F : \gr{C} \to \gr{C}$ be an endofunctor. An \textit{$F$-algebra} is an object $X$ together with an arrow $\alpha : FX \to X$. An $F$-algebra can be viewed as an operation that takes some information out of $FX$ and constructs a new element in $X$. For example, in $\gr{Sets}$, if some set $S$ is given and $FX = S \times X$, $Ff = Id \times f$, then the arrow $\alpha : S \times S^\omega \to S^\omega$ given by concatenation $\alpha(h,t) = h \cdot t$ can be thought of as a way of making a list from a head element $h$ and a tail list $t$.
A morphism of $F$-algebras $\alpha : FX \to X$ and $\beta : FY \to Y$ is an arrow $f : X \to Y$ such that $f \circ \alpha = \beta \circ Ff$. The collection of $F$-algebras and their morphisms form a category denoted by $\gr{Alg}(F)$.\\\\
The last section of this paper is committed to make clear why the construction of trace semantics is canonical. Showing that it comes from a \il{distributive law} is a convenient way to find this out. A distributive law of a monad over a functor formally allows to reverse the order in which they are applied. It is based on a variant of algebras named Eilenberg-Moore algebras. \\\\ Let $(T,\eta,\mu)$ be a monad on a category $\gr{C}$. An \il{Eilenberg-Moore $T$-algebra} is a $T$-algebra $\alpha$ such that the following diagrams commute.
\begin{center}
\begin{tikzpicture}
  \matrix (m) [matrix of math nodes,row sep=3em,column sep=4em,minimum width=2em]
  {
     X & TX & & TTX & TX \\
     & X & & TX & X \\};
  \path[-stealth]
  (m-1-1) edge node [above] {$\eta_X$} (m-1-2)
  (m-1-1) edge node [below left] {$id_X$} (m-2-2)
  (m-1-2) edge node [right] {$\alpha$} (m-2-2)
  (m-1-4) edge node [above] {$\mu_X$} (m-1-5)
  (m-1-4) edge node [left] {$T\alpha$} (m-2-4)
  (m-1-5) edge node [right] {$\alpha$} (m-2-5)
  (m-2-4) edge node [below] {$\alpha$} (m-2-5)
	;
\end{tikzpicture}
\end{center}
Given two Eilenberg-Moore $T$-algebras $\alpha : TX \to X$ and $\beta : TY \to Y$, a morphism from $\alpha$ to $\beta$ is a $\gr{C}$-arrow $f : X \to Y$ such that $f \circ \alpha = \beta \circ Tf$. Eilenberg-Moore $T$-algebras and their morphisms form a category denoted by $\gr{EM}(T)$.\\\\
Let $F : \gr{C} \to \gr{C}$ be an endofunctor. A natural transformation $\lambda : TF \Rightarrow FT$ is a \il{distributive law} if the following diagrams commute for every object $X$ of $\gr{C}$.
\begin{center}
\begin{tikzpicture}
  \matrix (m) [matrix of math nodes,row sep=3em,column sep=4em,minimum width=2em]
  {
     FX & FX & & TTFX & TFTX & FTTX \\
     TFX & FTX & & TFX & & FTX \\};
  \path[-stealth]
  (m-1-1) edge node [above] {$id_X$} (m-1-2)
  (m-1-1) edge node [left] {$\eta_{FX}$} (m-2-1)
  (m-1-2) edge node [right] {$F\eta_X$} (m-2-2)
  (m-2-1) edge node [below] {$\lambda_X$} (m-2-2)
  (m-1-4) edge node [above] {$T\lambda_X$} (m-1-5)
  (m-1-5) edge node [above] {$\lambda_{TX}$} (m-1-6)
  (m-1-4) edge node [left] {$\mu_{FX}$} (m-2-4)
  (m-1-6) edge node [right] {$F\mu_X$} (m-2-6)
  (m-2-4) edge node [below] {$\lambda_X$} (m-2-6)
	;
\end{tikzpicture}
\end{center}
\subsubsection*{Coalgebras}
The dual concept of $F$-algebra is that of an \il{$F$-coalgebra}, i.e., an object $X$ together with an arrow $\alpha : X \to FX$. An $F$-coalgebra can be seen as a way of observing some behavior. For example, if $FX = S \times X$ as above, $\alpha : S^\omega \to S \times S^\omega$ given by $\alpha(h\cdot t) = (h,t)$ is a system which decomposes an infinite list into its head and its tail. The intuition behind coalgebras is that they model systems. The functor $F$ captures the structure of the system. The object $X$ captures the state space and the arrow $\alpha$ captures both the transition behaviour and the observations. It will be clearer in  the end of this section, where classical deterministic or non-deterministic automata will be regarded as coalgebras. For more details about the basics intuition of coalgebras, see \cite{Jacobs97}. A morphism of $F$-coalgebras $\alpha : X \to FX$ and $\beta : Y \to FY$ is an arrow $f : X \to Y$ such that $Ff \circ \alpha = \beta \circ f$. For a given functor $F$, the collection of $F$-coalgebras and their morphisms form a category denoted by $\gr{Coalg}(F)$. This category may have a final object, called the final coalgebra, which is very interesting because it yields a canonical notion of semantics. For example, taking $F$ as above, the final $F$-algebra is $\alpha : S^\omega \to S \times S^\omega$. This means that for any $F$-coalgebra $\beta : X \to S \times X$, there exists a unique coalgebra morphism $\llbracket - \rrbracket$ from $\beta$ to $\alpha$, i.e. $\llbracket - \rrbracket : X \to S^\omega$ is such that the following diagram commutes.
\begin{center}
\begin{tikzpicture}
  \matrix (m) [matrix of math nodes,row sep=3em,column sep=4em,minimum width=2em]
  {
     X & A^\omega \\
     A \times X & A \times A^\omega \\};
  \path[-stealth]
  (m-1-1) edge node [above] {$\llbracket - \rrbracket$} (m-1-2)
  (m-1-1) edge node [left] {$\beta$} (m-2-1)
  (m-1-2) edge node [right] {$\alpha$} (m-2-2)
  (m-2-1) edge node [below] {$id_A \times \llbracket - \rrbracket$} (m-2-2);
\end{tikzpicture}
\end{center}
Expressing $\beta = \langle o,t \rangle$ with $o : X \to A$ and $t : X \to X$, we have that $o(x) = \llbracket x \rrbracket(0)$ and $\llbracket t(x) \rrbracket(n) = \llbracket x \rrbracket (n+1)$ so that a direct formula is $\llbracket x \rrbracket (n) = o(t^n (x))$. This fits with the idea of an observable behaviour : what $\llbracket - \rrbracket$ tells is exactly the information we get if we look at the repeated output of $\beta$ over a state. Actually, the object $S^\omega$ is the set of streams over $S$, i.e., of $S$-valued sequences. The coalgebra $\beta$ generates over time the data contained in a stream, step by step.

\subsubsection*{Bisimulation (up-to)}
In the framework of coalgebras, the notion of \il{bisimulation} is a tool which provides a family of proof techniques (see \cite{Rot15}). There is a general definition of bisimulations in any category using diagrams. Since this paper is only using bisimulations for discrete systems, the definition given here is specific to the category $\gr{Sets}$. It is equivalent to the general definition \cite{Jacobs16}.
\begin{defi}[Bisimulation]\label{bisimulation}
Let $F : \gr{Sets} \to \gr{Sets}$ be an endofunctor, let $X$ be a set and let $R \subseteq X \times X$ be a relation. The \emph{relation lifting} of $R$ by $F$ is defined by
\[ Rel(F)(R) = \{ (b,c) \in FX \times FX \mid \exists d \in FR, b = F\pi_1(d) \text{ and } c = F\pi_2(d) \} \]
Given a coalgebra $\alpha : X \to FX$, let $b_\alpha : \mathcal{P} (X \times X) \to \mathcal{P}(X \times X)$ be defined by $b_\alpha(R) = (\alpha \times \alpha)^{-1}(Rel(F)(R))$. The relation $R$ is called a bisimulation on $\alpha$ if $R \subseteq b_\alpha(R)$.
\end{defi}
The greatest bisimulation on a given coalgebra $(X,\alpha)$ is called bisimilarity and denoted by $\sim$. The following statement expresses the principal interest of using bisimulations: bisimilarity implies behavioural equivalence.
\begin{lem} \label{bisim}
Let $(X,\alpha)$ be an $F$-coalgebra on $\textbf{Sets}$. Assuming there exists a final $F$-coalgebra $(Z,\zeta)$, we have
\[ \forall x,y \in X, x \sim y \Rightarrow \llbracket x \rrbracket = \llbracket y \rrbracket \]
\end{lem}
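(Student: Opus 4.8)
The plan is to equip any bisimulation with a coalgebra structure that turns its two projections into coalgebra morphisms, and then to let finality of $(Z,\zeta)$ do the rest. Since bisimilarity $\sim$ is itself a bisimulation (it is the greatest one), it suffices to prove the implication for an arbitrary bisimulation $R$, namely that $\llbracket x \rrbracket = \llbracket y \rrbracket$ for every $(x,y) \in R$; applying this to $R = {\sim}$ then yields the statement.

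First I would build a coalgebra $\gamma : R \to FR$. Write $\pi_1, \pi_2 : R \to X$ for the two projections. By Definition \ref{bisimulation}, $R \subseteq b_\alpha(R) = (\alpha \times \alpha)^{-1}(Rel(F)(R))$, so for each pair $(x,y) \in R$ the pair $(\alpha(x), \alpha(y))$ lies in $Rel(F)(R)$; by definition of the relation lifting this means there is a witness $d \in FR$ with $F\pi_1(d) = \alpha(x)$ and $F\pi_2(d) = \alpha(y)$. Choosing one such witness for every pair defines a function $\gamma : R \to FR$, $\gamma(x,y) = d$, and by construction $F\pi_1 \circ \gamma = \alpha \circ \pi_1$ and $F\pi_2 \circ \gamma = \alpha \circ \pi_2$. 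These two equalities say precisely that $\pi_1$ and $\pi_2$ are morphisms of $F$-coalgebras from $(R,\gamma)$ to $(X,\alpha)$.

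The main delicate point is this selection of a witness $d$ for each pair $(x,y)$: in $\gr{Sets}$ it requires the axiom of choice, since the relation lifting only guarantees the existence of $d$, not a canonical one. Once $\gamma$ is fixed, the rest is formal. Since $\llbracket - \rrbracket : (X,\alpha) \to (Z,\zeta)$ is a coalgebra morphism and a composite of coalgebra morphisms is again one, both $\llbracket - \rrbracket \circ \pi_1$ and $\llbracket - \rrbracket \circ \pi_2$ are coalgebra morphisms from $(R,\gamma)$ to the final coalgebra $(Z,\zeta)$. Finality forces the morphism out of $(R,\gamma)$ to be unique, hence $\llbracket - \rrbracket \circ \pi_1 = \llbracket - \rrbracket \circ \pi_2$. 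Evaluating at an arbitrary $(x,y) \in R$ gives $\llbracket x \rrbracket = \llbracket y \rrbracket$, which is what had to be shown.
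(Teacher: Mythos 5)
Your proof is correct and follows essentially the same route as the paper's: equip the bisimulation $R$ with a coalgebra structure $\gamma$ via chosen witnesses, observe that the projections become coalgebra morphisms, and conclude by uniqueness of the morphism into the final coalgebra. Your explicit remark about the axiom of choice (and restricting the choice of witnesses to pairs in $R$) is a small point of extra care, but the argument is the same.
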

\begin{proof}
Let $R \subseteq X \times X$ be a bisimulation such that $(x,y) \in R$. For every $(x',y') \in X \times X$, let $d \in FR$ such that $\alpha(x') = F\pi_1(d)$ and $\alpha(y') = F\pi_2(d)$ and set $\gamma(x',y') = d$. The pair $(R,\gamma)$ is an $F$-coalgebra. Note that $\pi_1$ is a coalgebra morphism from $(R,\gamma)$ to $(X,\alpha)$ because $(F\pi_1 \circ \gamma)(x',y') = F\pi_1(d) = \alpha(x') = (\alpha \circ \pi_1)(x',y')$. The same is true for $\pi_2$. Thus $\llbracket - \rrbracket \circ \pi_1$ and $\llbracket - \rrbracket \circ \pi_2$ are coalgebra morphisms from $(R,\gamma)$ into the final coalgebra $(Z,\zeta)$, hence they are equal. Since $(x,y) \in R$ this yields $\llbracket x \rrbracket = \llbracket y \rrbracket$.
\end{proof}
\noindent It will be seen later that bisimulations incorporate a huge amount of elements. This is the reason why a slightly more refined notion is needed, namely bisimulation up-to.
\begin{defi} Let $(X,\alpha)$ be an $F$-coalgebra on $\gr{Sets}$ and $g : \mathcal{P} (X \times X) \to \mathcal{P} (X \times X)$ be a function. A relation $R \subseteq X \times X$ is a \emph{bisimulation up-to $g$} if $R \subseteq b_\alpha(g(R))$.
\end{defi}
\noindent For concrete examples of bisimulation (up-to), see the following part about deterministic and non-deterministic automata.

\subsection{Automata}
We present here the coalgebraic view of automata in the category $\gr{Sets}$, for a given alphabet $A$. We begin with Moore automata, which are a slight generalization of deterministic automata. Then we take a look at non-deterministic automata. The notion of bisimulation (up-to) is instantiated to deterministic and non-deterministic automata in order to present algorithms $\texttt{HK}$ and $\texttt{HKC}$ which are useful in section $2$, and a few examples are given.
\subsubsection*{Moore and deterministic automata}
Let $B$ be a set. Define the \il{machine functor} $F_B$ by $F_B X = B \times X^A$ and $Ff = id_B \times f^A$. An $F_B$-coalgebra models a Moore automaton with output in $B$. Let $\beta$ be an $F_B$-coalgebra. We often denote such a coalgebra by $\beta = \langle o , a \mapsto t_a \rangle$. Each state is mapped to a value via the output function $o : X \to B$ and, for every letter $a \in A$, to  a (unique) other state via $t_a : X \to X$. Note that automata, modeled coalgebraically in this way, do not have a notion of initial state. For such automata, it is useful to have a generalized notion of language. From now, a language will be a function $L : A^* \to B$. The language derivative is still defined by $L_a(w) = L(aw)$.\\\\
\gr{Remark.} The notation $2$ stands for the set $\{ 0,1 \}$. An $F_2$-coalgebra models a \textit{deterministic automaton}. For $x \in X$, the output is $1$ when the state is terminating and the output is $0$ when it is not. The category $\textbf{Coalg}(F_2)$ will be denoted by $\textbf{DA}$.\\\\
The following proposition underlines the main interest of using the machine functor: it has a final coalgebra, consisting of languages.
\begin{prop} \label{final coalgebra}
There exists a final $F_B$-coalgebra $(\Omega,\omega)$ where $\Omega = B^{A^*}$ and $\omega : B^{A^*} \to F_B B^{A^*}$ is defined by $\omega(L) = \langle L(\varepsilon), a \to L_a \rangle$.
\end{prop}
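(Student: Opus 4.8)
The plan is to verify that $(\Omega, \omega)$ satisfies the universal property of a final object in $\gr{Coalg}(F_B)$: for every $F_B$-coalgebra $(X, \beta)$ with $\beta = \langle o, a \mapsto t_a \rangle$, I must exhibit a unique coalgebra morphism $\llbracket - \rrbracket : X \to \Omega$. First I would define the candidate morphism explicitly. Since an element of $\Omega = B^{A^*}$ is a language $L : A^* \to B$, the natural guess—mirroring the stream example $\llbracket x \rrbracket(n) = o(t^n(x))$ given earlier in the excerpt—is to set $\llbracket x \rrbracket(w) = o(t_w(x))$, where $t_w = t_{a_k} \circ \cdots \circ t_{a_1}$ for $w = a_1 \cdots a_k$ and $t_\varepsilon = id_X$. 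In words, $\llbracket x \rrbracket$ reads off the output of the state reached from $x$ by following the word $w$.

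Next I would check that this $\llbracket - \rrbracket$ is a coalgebra morphism, i.e. that $F_B \llbracket - \rrbracket \circ \beta = \omega \circ \llbracket - \rrbracket$. Unfolding $F_B f = id_B \times f^A$, the left-hand side sends $x$ to $\langle o(x), a \mapsto \llbracket t_a(x) \rrbracket \rangle$ and the right-hand side sends $x$ to $\omega(\llbracket x \rrbracket) = \langle \llbracket x \rrbracket(\varepsilon), a \mapsto (\llbracket x \rrbracket)_a \rangle$. So commutativity reduces to two equalities: the output coordinate $o(x) = \llbracket x \rrbracket(\varepsilon)$, which holds since $t_\varepsilon = id_X$, and for each $a \in A$ the transition coordinate $\llbracket t_a(x) \rrbracket = (\llbracket x \rrbracket)_a$. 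The latter is an equality of languages, checked pointwise: for any $w \in A^*$, $(\llbracket x \rrbracket)_a(w) = \llbracket x \rrbracket(aw) = o(t_{aw}(x)) = o(t_w(t_a(x))) = \llbracket t_a(x) \rrbracket(w)$, using $t_{aw} = t_w \circ t_a$.

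For uniqueness, suppose $h : X \to \Omega$ is any coalgebra morphism. The commuting square forces, for every $x$, the identities $h(x)(\varepsilon) = o(x)$ and $h(t_a(x)) = (h(x))_a$, the second of which unfolds to $h(x)(aw) = h(t_a(x))(w)$. I would then prove $h(x)(w) = o(t_w(x)) = \llbracket x \rrbracket(w)$ by induction on the length of $w$: the base case $w = \varepsilon$ is the first identity, and the inductive step $h(x)(aw') = h(t_a(x))(w') = o(t_{w'}(t_a(x))) = o(t_{aw'}(x))$ uses the second identity together with the induction hypothesis applied to the state $t_a(x)$. Hence $h = \llbracket - \rrbracket$.

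The main obstacle here is essentially bookkeeping rather than genuine difficulty: one must be careful that the induction in the uniqueness argument is performed uniformly over all states simultaneously (the induction hypothesis is invoked at $t_a(x)$, not at $x$), and that the indexing convention $t_{aw} = t_w \circ t_a$ matching the derivative $L_a(w) = L(aw)$ is applied consistently. Since we work in $\gr{Sets}$, there are no measurability or continuity side-conditions to discharge, so no step requires more than the elementary verifications above.
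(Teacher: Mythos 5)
Your proposal is correct and takes essentially the same route as the paper: both reduce commutation of the coalgebra square to the two equations $\llbracket x \rrbracket(\varepsilon) = o(x)$ and $\llbracket x \rrbracket(aw) = \llbracket t_a(x) \rrbracket(w)$, and both obtain uniqueness because these equations determine the morphism by induction on words. The only cosmetic difference is that you define the morphism by the closed formula $\llbracket x \rrbracket(w) = o(t_w(x))$ and then verify the recursion, whereas the paper takes the recursion itself as the (inductive) definition; the underlying function and argument are identical.
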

\begin{proof}
Let $\beta = \langle o, a \mapsto t_a \rangle : X \to FX$. We must prove that there exist a unique function $\varphi : X \to B^{A^*}$ such that the following diagram commutes.
\begin{center}
\begin{tikzpicture}
  \matrix (m) [matrix of math nodes,row sep=3em,column sep=4em,minimum width=2em]
  {
     X & \Omega \\
     FX & F\Omega \\};
  \path[-stealth]
  (m-1-1) edge node [above] {$\varphi$} (m-1-2)
  (m-1-1) edge node [left] {$\beta$} (m-2-1)
  (m-1-2) edge node [right] {$\omega$} (m-2-2)
  (m-2-1) edge node [below] {$F\varphi$} (m-2-2);
\end{tikzpicture}
\end{center}
This means $F\varphi \circ \beta = \omega \circ \varphi$ i.e. $ \langle o(x), a \mapsto (\varphi \circ t_a)(x) \rangle = \langle \varphi(x)(\varepsilon) , a \mapsto \varphi(x)_a \rangle $. Thus, we define inductively the function $\varphi$ by
\begin{align*}  \varphi(x)(\varepsilon) = o(x)  & & \varphi(x)(aw) = \varphi(t_a(x))(w) \end{align*}
By construction it makes the diagram commute. Furthermore, these two last equations are needed for commutation, hence such a morphism is unique.
\end{proof}
\noindent The language accepted by a state $x \in X$ can thus be defined using the final morphism by $\llbracket x \rrbracket (\varepsilon) = o(x)$ and $\llbracket x \rrbracket(aw) = \llbracket t_a(x) \rrbracket(w)$.
\subsubsection*{Bisimulations (up-to) with the functor $F_B$}
It is important to notice that the notion of bisimulation for $F_B$-coalgebras is very well-behaved. First, an easy computation shows that for any $R \subseteq X \times X$, there is the simple expression $b_\beta(R) = \{ (x,y) \in X \times X \mid o(x) = o(y) \land \forall a \in A, (t_a(x),t_a(y)) \in R \}$. Hence, a relation $R$ is a bisimulation iff for all $(x,y)\in R$, $o(x) = o(y)$ and for all $a\in A$, $(t_a(x),t_a(y))\in R$. Moreover, bisimilarity and trace equivalence coincide, as shown by the following lemma combined with Lemma \ref{bisim}.
\begin{lem} \label{bisimeq}
Let $\beta : X \to F_B X$ be an $F_B$-coalgebra. Then
\[ \forall x,y \in X, \llbracket x \rrbracket = \llbracket y \rrbracket \Rightarrow x \sim y \]
\end{lem}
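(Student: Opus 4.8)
The plan is to exhibit a single bisimulation containing every pair of trace-equivalent states, and then appeal to the fact that bisimilarity $\sim$ is by definition the greatest bisimulation. The natural candidate is the kernel of the final morphism, namely $R = \{ (x,y) \in X \times X \mid \llbracket x \rrbracket = \llbracket y \rrbracket \}$. If I can show that $R$ is a bisimulation on $\beta$, then $R \subseteq \sim$; and since $(x,y) \in R$ holds exactly when $\llbracket x \rrbracket = \llbracket y \rrbracket$, the desired implication follows at once.

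To check that $R$ is a bisimulation I would use the explicit description of $b_\beta$ recorded just before the statement: a relation is a bisimulation iff for every pair $(x,y)$ it contains one has $o(x) = o(y)$ and $(t_a(x), t_a(y)) \in R$ for all $a \in A$. So fix $(x,y) \in R$, i.e. $\llbracket x \rrbracket = \llbracket y \rrbracket$. The output condition is immediate from the defining equation $\llbracket x \rrbracket(\varepsilon) = o(x)$ of Proposition \ref{final coalgebra}, since $o(x) = \llbracket x \rrbracket(\varepsilon) = \llbracket y \rrbracket(\varepsilon) = o(y)$. For the transition condition, fix $a \in A$; using the other defining equation $\llbracket x \rrbracket(aw) = \llbracket t_a(x) \rrbracket(w)$, for every $w \in A^*$ I obtain $\llbracket t_a(x) \rrbracket(w) = \llbracket x \rrbracket(aw) = \llbracket y \rrbracket(aw) = \llbracket t_a(y) \rrbracket(w)$, whence $\llbracket t_a(x) \rrbracket = \llbracket t_a(y) \rrbracket$ and therefore $(t_a(x), t_a(y)) \in R$. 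This establishes $R \subseteq b_\beta(R)$, so $R$ is a bisimulation.

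This is essentially a soundness-of-coinduction computation, and I do not expect any genuine obstacle. The only point requiring care is to match the two clauses defining $b_\beta(R)$ with the correct two defining clauses of the final morphism — outputs with the $\varepsilon$-clause, transitions with the $aw$-clause — and to verify that the transition condition genuinely closes $R$ under the derivatives $t_a$, rather than forcing a larger relation. Once $R$ is seen to be a bisimulation, maximality of $\sim$ closes the argument with no further work: this is the familiar fact that the kernel of a coalgebra morphism into the final coalgebra is itself contained in bisimilarity.
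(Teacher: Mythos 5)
Your proof is correct, but it takes a different route from the paper's. You show that the \emph{kernel} of the final morphism, $R = \{ (x,y) \mid \llbracket x \rrbracket = \llbracket y \rrbracket \}$, is itself a bisimulation: the output clause follows from $o(x) = \llbracket x \rrbracket(\varepsilon)$ and the transition clause from $\llbracket t_a(x) \rrbracket(w) = \llbracket x \rrbracket(aw)$, both of which are exactly the defining equations of the final morphism in Proposition \ref{final coalgebra}, so both verifications go through. The paper instead fixes the pair $(x,y)$ and builds the \emph{smallest} candidate relation $R = \{ (t_w(x), t_w(y)) \mid w \in A^* \}$, where $t_\varepsilon = id_X$ and $t_{wa} = t_a \circ t_w$, verifying the output condition via $o(t_w(x)) = \llbracket x \rrbracket(w) = \llbracket y \rrbracket(w) = o(t_w(y))$ and closure under derivatives via $t_a(t_w(x)) = t_{wa}(x)$. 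The two choices are in a sense opposite extremes: yours is the largest bisimulation witnessing the implication, handled uniformly for all trace-equivalent pairs at once, and it makes explicit the standard fact that behavioural equivalence (the kernel of the morphism into the final coalgebra) is a bisimulation, so together with Lemma \ref{bisim} one sees immediately that $\sim$ \emph{equals} trace equivalence for $F_B$-coalgebras. The paper's word-indexed relation is tailored to the single pair $(x,y)$, and this choice is not incidental: it is precisely the relation that the algorithm $\texttt{Naive}(x,y)$ attempts to enumerate, so the paper's proof doubles as the bridge between the lemma and the algorithmic discussion that follows it. Either relation proves the statement; yours is the more canonical coalgebraic argument, the paper's is the more operationally informative one.
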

\begin{proof}
Define by induction $t_\varepsilon = id_X$ and $t_{wa} = t_a \circ t_w$. Set $R = \{ ( t_w(x),t_w(y) ) \mid w \in A^* \}$ and show that $R$ is a bisimulation containing $(x,y)$. This last point is obvious because $(x,y) = (t_\varepsilon (x), t_\varepsilon (y))$. Let $(x',y') \in R$ and $w \in A^*$ such that $x' = t_w(x)$ and $y' = t_w(y)$. Then $o(x') = \llbracket t_w(x) \rrbracket (\varepsilon) = \llbracket x \rrbracket (w) = \llbracket y \rrbracket (w) = \llbracket t_w(y) \rrbracket (\varepsilon) = o(y')$. Let $a \in A$, then $(t_a(x'),t_a(y')) = (t_{wa}(x),t_{wa}(y)) \in R$, so that $R$ is a bisimulation. Consequently, $x \sim y$.
\end{proof}
\noindent Let $x,y \in X$. The following algorithm $\texttt{Naive}(x,y)$ tries to compute the smallest bisimulation that contains $(x,y)$, this is, $R = \{ (t_w(x), t_w(y)) \mid w \in A^* \}$ as in Lemma \ref{bisimeq}. If $R$ is not a bisimulation, the algorithm will stop at some point. If $R$ exists and is infinite (which requires $X$ to be infinite), then $\texttt{Naive}(x,y)$ never stops.

\begin{center}
$\texttt{Naive}(x,y)$
\end{center}
\texttt{(1) $\textit{R} := \emptyset$; $\textit{todo} := \emptyset$\\
(2) insert $(x,y)$ into $\textit{todo}$\\
(3) \color{purple} while \color{black} $\textit{todo}$ is not empty \color{purple}do\color{black}\\
\indent (3.1) extract $(x',y')$ from $\textit{todo}$\\
\indent (3.2) \color{purple}if \color{black} $(x',y')\in \textit{R}$ \color{purple}then \color{brown} continue \color{black}\\
\indent (3.3) \color{purple}if \color{black} $o(x') \neq o(y')$ \color{purple}then \color{brown} return \color{blue}\textit{false} \color{black}\\
\indent (3.4) \color{purple}for all \color{black}$a\in A$, insert $(t_a(x'), t_a(y'))$ into $\textit{todo}$\\
\indent (3.5) insert $(x',y')$ into $\textit{R}$\\
(4) \color{brown}return \color{blue} \textit{true}}
\\\\
Bisimulation is fine but, as we will see, bisimulation up-to $g$ is better - provided $g$ is \il{compatible} with $b_\beta$. Let $g : \mathcal{P}(X \times X) \rightarrow \mathcal{P}(X \times X)$. We say that $g$ is compatible with $b_\beta$ if it is monotone and for all $R,R' \subseteq Y \times Y$, $R \subseteq b_\beta(R') \Rightarrow g(R) \subseteq b_\beta(g(R'))$. Functions that are compatible with $b_\beta$ have the pleasant following property.
\begin{prop}[\cite{Bonchi13}]\label{contained}
For all $g : X \times X \rightarrow X \times X$ compatible with $b_\beta$, any bisimulation up-to $g$ is contained into a bisimulation.
\end{prop}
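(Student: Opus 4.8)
The plan is to build an explicit bisimulation containing the given bisimulation up-to $g$, obtained by iterating $g$. Concretely, let $R \subseteq X \times X$ satisfy $R \subseteq b_\beta(g(R))$, and define
\[ G = \bigcup_{n\in\N} g^n(R), \qquad g^0 = id,\quad g^{n+1} = g\circ g^n. \]
Since $R = g^0(R) \subseteq G$, it suffices to show that $G$ is a bisimulation, i.e. that $G \subseteq b_\beta(G)$.

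First I would record the two ingredients used throughout. The map $b_\beta$ is monotone: the relation lifting $Rel(F)$ is monotone, because an inclusion $R \subseteq R'$ induces a map $FR \to FR'$ commuting with the projections, so that any witness $d \in FR$ is also a witness in $FR'$; and taking preimages under $\alpha \times \alpha$ preserves inclusions. Second, the compatibility hypothesis, written in the implication form $S \subseteq b_\beta(S') \Rightarrow g(S) \subseteq b_\beta(g(S'))$, is exactly what allows one to push $g$ through $b_\beta$.

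The core of the argument is the claim that $g^n(R) \subseteq b_\beta(g^{n+1}(R))$ for every $n \in \N$. The base case $n=0$ is precisely the up-to assumption $R \subseteq b_\beta(g(R))$. For the inductive step, I would apply compatibility to the inclusion $g^n(R) \subseteq b_\beta(g^{n+1}(R))$, taking $S = g^n(R)$ and $S' = g^{n+1}(R)$, to get $g^{n+1}(R) \subseteq b_\beta(g^{n+2}(R))$. Once this chain is in hand, monotonicity of $b_\beta$ together with $g^{n+1}(R) \subseteq G$ yields $g^n(R) \subseteq b_\beta(g^{n+1}(R)) \subseteq b_\beta(G)$ for every $n$; taking the union over $n$ gives $G \subseteq b_\beta(G)$, so $G$ is a bisimulation, and it contains $R$.

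I expect the only genuine subtlety to be the iterated use of compatibility: the hypothesis is stated as a single-step implication, so the real content is recognizing that one repeated application turns the up-to inequality $R \subseteq b_\beta(g(R))$ into the full chain $g^n(R) \subseteq b_\beta(g^{n+1}(R))$. It is worth noting that monotonicity of $g$ is not used beyond its role inside the definition of compatibility, and that $g$ need not be extensive — the union is taken over all $n$ precisely so as to avoid assuming $R \subseteq g(R)$.
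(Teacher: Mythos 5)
Your proof is correct and follows essentially the same route as the paper: induction via compatibility gives $g^n(R) \subseteq b_\beta(g^{n+1}(R))$, and the union $\bigcup_{n\in\N} g^n(R)$ is the desired bisimulation. You merely make explicit two points the paper leaves implicit, namely the monotonicity of $b_\beta$ and its use in passing from $b_\beta(g^{n+1}(R))$ to $b_\beta\left(\bigcup_{n\in\N} g^n(R)\right)$.
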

\begin{proof}
Let $R \subseteq X \times X$ such that $R \subseteq b_\beta(g(R))$. By compatibility and a simple induction, $g^n(R) \subseteq b_\beta(g^{n+1}(R))$ so that $\bigcup_{n\in \mathbb{N}} g^n(R) \subseteq b_\beta(\bigcup_{n\in \mathbb{N}} g^n(R))$ is a bisimulation that contains $R$.
\end{proof}
\noindent Hence, if $g$ is compatible with $b_\beta$ and $x,y$ are related by a bisimulation up-to $g$, then $x \sim y$.
\begin{lem} \label{comp1}
The following functions are compatible with $b_\beta$ :
\begin{itemize}
\item $r : R \mapsto \{ (x,x) \mid x \in X \}$
\item $s : R \mapsto \{ (y,x) \mid (x,y) \in R \}$
\item $t : R \mapsto \{ (x,z) \mid \exists y \in X, (x,y) \in R \text{ and } (y,z) \in R \}$
\item $id : R \mapsto R$
\item $f \circ g$ for all $f,g$ compatible with $b_\beta$
\item $\bigcup_{i \in I} f_i$ for all $(f_i)_{i\in I}$ compatible with $b_\beta$
\item $f^\omega = \bigcup_{n\in \mathbb{N}} f^n$ for all $f$ compatible with $b_\beta$
\end{itemize}
\end{lem}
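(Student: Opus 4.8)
The plan is to reduce everything to the explicit description $b_\beta(R) = \{(x,y) \mid o(x)=o(y) \text{ and } \forall a \in A, (t_a(x),t_a(y)) \in R\}$ recorded just above, together with a single preliminary observation: $b_\beta$ is itself monotone. Indeed, if $R \subseteq S$ then any pair witnessing membership in $b_\beta(R)$ also witnesses membership in $b_\beta(S)$, since the constraint $(t_a(x),t_a(y)) \in R$ only gets relaxed. With this in hand, each item amounts to checking monotonicity of the function in question (always immediate: $r$ is constant, $s$, $t$ and $id$ are obviously monotone, and the closure operators preserve monotonicity) together with the implication $R \subseteq b_\beta(R') \Rightarrow g(R) \subseteq b_\beta(g(R'))$.

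First I would dispatch the four atomic functions by a direct element chase using the formula for $b_\beta$. For $id$ the implication is the hypothesis itself. For the constant function $r$, the diagonal is always contained in $b_\beta$ of the diagonal (as $o(x)=o(x)$ and $(t_a(x),t_a(x))$ lies in the diagonal), so $r(R) \subseteq b_\beta(r(R'))$ holds unconditionally. For $s$, a pair $(y,x) \in s(R)$ comes from $(x,y) \in R \subseteq b_\beta(R')$; symmetry of equality gives $o(y)=o(x)$ and, reading off $(t_a(x),t_a(y)) \in R'$, we obtain $(t_a(y),t_a(x)) \in s(R')$, hence $(y,x) \in b_\beta(s(R'))$. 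For $t$, a pair $(x,z) \in t(R)$ factors through some $y$ with $(x,y),(y,z) \in R \subseteq b_\beta(R')$; transitivity of equality yields $o(x)=o(z)$, and composing the two memberships $(t_a(x),t_a(y)),(t_a(y),t_a(z)) \in R'$ gives $(t_a(x),t_a(z)) \in t(R')$, so $(x,z) \in b_\beta(t(R'))$.

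Then I would handle the three closure operators, which no longer need the explicit formula but instead combine monotonicity of $b_\beta$ with the compatibility already granted for the arguments. For $f \circ g$, I would apply compatibility of $g$ to $R \subseteq b_\beta(R')$ to get $g(R) \subseteq b_\beta(g(R'))$, then apply compatibility of $f$ to this very inclusion to obtain $f(g(R)) \subseteq b_\beta(f(g(R')))$. For $\bigcup_{i\in I} f_i$, each $f_i$ gives $f_i(R) \subseteq b_\beta(f_i(R'))$; since $f_i(R') \subseteq \bigcup_{j\in I} f_j(R')$, monotonicity of $b_\beta$ upgrades this to $f_i(R) \subseteq b_\beta\left(\bigcup_{j\in I} f_j(R')\right)$, and taking the union over $i$ yields the claim. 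Finally $f^\omega$ is not a new case: writing $f^0 = id$ and $f^{n+1} = f \circ f^n$, each $f^n$ is compatible by the identity and composition items, and $f^\omega = \bigcup_{n\in\N} f^n$ is then compatible by the union item.

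The step I expect to be the main, and really the only, obstacle is the union case, because it is the one place where the element chase is not self-contained: it genuinely relies on the monotonicity of $b_\beta$ to pass from $b_\beta(f_i(R'))$ to $b_\beta$ of the whole union. Everything else is either a one-line application of a hypothesis or a routine unfolding of the definition of $b_\beta$, so I would take care to establish the monotonicity of $b_\beta$ cleanly at the outset and cite it explicitly in both the union and the composition arguments.
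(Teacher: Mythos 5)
Your proof is correct. Note that the paper itself offers no proof of this lemma: it is stated as folklore, with the compatibility framework imported from \cite{Bonchi13}, so there is no in-paper argument to compare against and your write-up genuinely fills that gap. The structure you chose is the standard one: element chases via the explicit formula $b_\beta(R) = \{ (x,y) \mid o(x) = o(y) \land \forall a \in A, (t_a(x),t_a(y)) \in R \}$ for the four atomic functions $r,s,t,id$ (all four chases are carried out correctly), then purely abstract arguments for the three closure operations, with $f^\omega$ reduced to the identity, composition and union items. One small correction to your closing remark: monotonicity of $b_\beta$ is needed \emph{only} in the union case. The composition case does not use it at all --- there you need monotonicity of $f$ and $g$ themselves (to conclude that $f \circ g$ is monotone, which the paper's definition of compatibility requires) plus two applications of the implication $R \subseteq b_\beta(R') \Rightarrow g(R) \subseteq b_\beta(g(R'))$, exactly as you wrote; no property of $b_\beta$ enters. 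This distinction is worth getting right because it shows the union case is the only item whose proof depends on the particular functor at hand being given by a monotone predicate lifting, while the other closure properties are formal consequences of the definition of compatibility alone --- which is precisely why the paper remarks that "the last two points hold in general for any compatible function."
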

\noindent The last two points of the latter lemma hold in general for any compatible function. This is actually the reason why compatible functions were first introduced.
\subsubsection*{Up-to equivalence and $\texttt{HK}$}
The $\texttt{Naive}$ algorithm can be improved using up-to techniques. The principle of the Hopcroft-Karp algorithm ($\texttt{HK}$) is to reason up-to equivalence. This can significantly speed up the \texttt{Naive} algorithm in many cases.
For a relation $R \subseteq X \times X$, the equivalence closure $e(R)$ is the least equivalence relation that contains $R$, i.e., that satisfies
$$
\infer{(x,y) \in e(R)}{(x,y) \in R}
\qquad
\infer{(x,x) \in e(R)}{}
\qquad
\infer{(y,x) \in e(R)}{(x,y) \in e(R)}
\qquad
\infer{(x,z) \in e(R)}{(x,y) \in e(R) & (y,z) \in e(R)}
$$
The function $e$ is compatible with $b_\beta$ because $e = (id \cup r \cup s \cup t)^\omega$. Now replace line $\texttt{(3.2)}$ in $\texttt{Naive}(x,y)$ by \\\\
\texttt{\indent (3.2) \color{purple}if \color{black} $(x',y')\in e(\textit{R})$ \color{purple}then \color{brown} continue \color{black}} \\\\
This new algorithm, $\texttt{HK}(x,y)$, will stop faster than $\texttt{Naive}(x,y)$ because it returns \texttt{true} as soon as it has built a bisimulation up-to $e$. Here is an example in the case of deterministic automata ($B = 2$, the ouput is $1$ iff the state is double-circled).
\begin{center}
\begin{tikzpicture}[->,>=stealth',shorten >=1pt,auto,node distance=2.8cm,
                    semithick]
  \tikzstyle{every state}=[fill=none,draw=black,text=black]

  \node[state](X1)   						{$x_1$};
  \node[state](Y1) [below of=X1]			{$y_1$};
  \node[state,accepting](X2) [right of=X1]	{$x_2$};
  \node[state,accepting](Y2) [right of=Y1]	{$y_2$};
  \path (X1) edge node {a} (X2)
		(Y1) edge node {a} (Y2)
		(X2) edge node [above] {b} (Y1)
		(Y2) edge node [below] {b} (X1);
  
\end{tikzpicture}
\end{center}
In this case, applying $\texttt{Naive}(x_1,y_1)$ constructs the smallest bisimulation in $4$ steps : $R = \{ (x_1,y_1),(x_2,y_2),(y_1,x_1),(y_2,x_2) \}$ and then returns \texttt{true}, whereas $\texttt{HK}(x_1,y_1)$ returns \texttt{true} as soon as $R' = \{ (x_1,y_1),(x_2,y_2) \}$ because $R'$ is a bisimulation up-to $e$.

\subsubsection*{Non-deterministic automata}
A non-deterministic automaton is a coalgebra for the composite functor $F_2 \mathcal{P}$. Let $\alpha$ be a non-deterministic automaton, then we can write it as $\alpha = \langle o , t_a \rangle : X \to 2 \times \mathcal{P}(X)^A$. The output $o$ still models termination. The difference with deterministic automata is the type of $t_a : X \to \mathcal{P}(X)$. The category $\gr{Coalg}(F_2\mathcal{P})$ will be denoted by $\gr{NDA}$. It does not have a final object, but given $\alpha : X \to 2 \times \mathcal{P}(X)^A$, the language of a state $x \in X$ is denoted by $\llbracket x \rrbracket_{\gr{NDA}}$ and can intuitively be defined by\begin{align*}
& \llbracket x \rrbracket_{\gr{NDA}} (\varepsilon) = o(x) & & \llbracket x \rrbracket_{\gr{NDA}} (aw) = \max_{y\in t_a(x)} \llbracket y \rrbracket_{\gr{NDA}} (w)
\end{align*}
Bisimilarity and trace equivalence with respect to this semantics do not coincide. One can prove that $R \subseteq X \times X$ is a bisimulation iff for all $(x,y) \in R$,
\begin{align*}
   & o(x) = o(y)
\\ & \forall x' \in t_a(x),  \exists y' \in X,  (x',y') \in R
\\ & \forall y' \in t_a(x),  \exists x' \in X,  (x',y') \in R
\end{align*}
Consider the following example :
\begin{center}
\begin{tikzpicture}[scale = 0.8, transform shape,->,>=stealth',shorten >=1pt,auto,node distance=2.8cm,
                    semithick]
  \tikzstyle{every state}=[fill=none,draw=black,text=black]

  \node[state](X)                    {$x$};
  \node[state](Y) 		[below left of=X] {$y$};
  \node[state,accepting](Z)		[below of=Y] {$z$};
  \node[state](Y')	[below right of=X] {$y'$};
  \node[state,accepting](Z') 	[below of=Y'] {$z'$};
  \path (X) edge [left] node {a} (Y)
		(Y) edge  [left] node {b} (Z)
		(X) edge [right] node {a} (Y')
 		(Y') edge [right] node {c} (Z');
 		
  \node (T) [right of=X] {};
  \node (U) [right of=T] {};	
  \node[state](X1)  [right of=U]     {$u$};
  \node[state](Y1) 		[below of=X1] {$v$};
  \node[state,accepting](Z1)		[below left of=Y1] {$w$};
  \node[state,accepting](Z1') 	[below right of=Y1] {$w'$};
  \path (X1) edge [left] node {a} (Y1)
		(Y1) edge  [left] node {b} (Z1)
 		(Y1) edge [right] node {c} (Z1');
  
\end{tikzpicture}
\end{center}
Here $\llbracket x \rrbracket_{\gr{NDA}} = \{ab,ac\} = \llbracket u \rrbracket_{\gr{NDA}}$ but $x$ and $u$ are not bisimilar. Indeed, if $x\sim u$ then $y \sim v$ but this is impossible because $y$ has no arrow labeled with $c$, whereas $v$ has one.\\\\
We recall the well-known power set construction : let $\langle o, a \mapsto t_a \rangle : X \rightarrow 2 \times \mathcal{P}(X)^A$. We define $\langle o^\#, a \mapsto t_a^\# \rangle : \mathcal{P}(X) \rightarrow 2\times \mathcal{P}(X)^A$ by $o^\#(U) = \max_{x\in U} o(x)$ and $t_a^\# (U) = \bigcup_{x \in U} t_a(x)$. Note that this is a deterministic automaton. Let $\llbracket - \rrbracket_{\gr{DA}}$ be the final morphism from $\langle o^\#, a \mapsto t_a^\# \rangle$ to the final object in $\gr{DA}$. The determinized automaton recognizes the same language as the first one, in the sense that for all $x\in X$, $\llbracket x \rrbracket_{\textbf{NDA}} = \llbracket \{x\} \rrbracket_{\textbf{DA}}$.
\subsubsection*{Up-to congruence and $\texttt{HKC}$}
Let $\alpha$ be a non-deterministic automaton. In order to check if $\llbracket x \rrbracket_{\textbf{NDA}} = \llbracket y \rrbracket_{\textbf{NDA}}$, it is possible to first compute the determinized automata $\alpha^\#$ and then check if $\llbracket \{ x \} \rrbracket_{\gr{DA}} = \llbracket \{ y \} \rrbracket_{\gr{DA}}$. According to Lemmas \ref{bisim} and \ref{bisimeq}, it is equivalent to check if $\{ x \} \sim \{ y \}$ in the determinized automata. To this purpose, one can use bisimulation or possibly bisimulation up-to equivalence, but an even better option is to exploit the determinized structure of $\alpha^\#$ with bisimulation up-to congruence. This is what the authors of \cite{Bonchi13} do, as an improvement of Hopcroft and Karp's algorithm for determinized automata. In section $2$, we will draw inspiration from $\texttt{HKC}$ to propose an algorithm for the trace semantics of PTS.\\\\
Let $R \subseteq \mathcal{P}(X) \times \mathcal{P}(X)$. Its congruence closure $c(R)$ is the least congruence relation that contains $R$, i.e., that satisfies
$$
\infer{(U,V) \in c(R)}{(U,V) \in R}
\qquad
\infer{(U,U) \in c(R)}{}
\qquad
\infer{(V,U) \in c(R)}{(U,V) \in c(R)}
\qquad
\infer{(U,V) \in c(R)}{(V,W) \in c(R) & (U,W) \in c(R)}
$$
$$
\infer{(U \cup U' , V \cup V') \in c(R)}{(U,V) \in c(R) & (U',V') \in c(R)}
$$
The function $c$ is shown to be compatible with $b_{\alpha^\#}$ in \cite{Bonchi13}. Thus, if $\{x\}$ and $\{y\}$ are related by a bisimulation up-to congruence, they are bisimilar and this yields $\llbracket x \rrbracket_{\gr{NDA}} = \llbracket y \rrbracket_{\gr{NDA}}$. Hence the following algorithm computes the trace equivalence of $x$ and $y$.
\begin{center}
$\texttt{HKC}(x,y)$
\end{center}
\texttt{(1) $\textit{R} := \emptyset$; $\textit{todo} := \emptyset$\\
(2) insert $(\{x\},\{y\})$ into $\textit{todo}$\\
(3) \color{purple} while \color{black} $\textit{todo}$ is not empty \color{purple}do\color{black}\\
\indent (3.1) extract $(U,V)$ from $\textit{todo}$\\
\indent (3.2) \color{purple}if \color{black} $(U,V)\in c(\textit{R})$ \color{purple}then \color{brown} continue \color{black}\\
\indent (3.3) \color{purple}if \color{black} $o^\#(U) \neq o^\#(V)$ \color{purple}then \color{brown} return \color{blue}\textit{false} \color{black}\\
\indent (3.4) \color{purple}for all \color{black}$a\in A$, insert $(t_a^\#(U), t_a^\#(V))$ into $\textit{todo}$\\
\indent (3.5) insert $(U,V)$ into $\textit{R}$\\
(4) \color{brown}return \color{blue} \textit{true}}
\\\\
The following example taken from \cite{Bonchi12} shows how $\texttt{HKC}$ quickens the computations. The alphabet is $A = \{ a \}$. The initial non-deterministic automaton is:
\begin{center}
\begin{tikzpicture}[scale = 0.8, transform shape,->,>=stealth',shorten >=1pt,auto,node distance=2.8cm,
                    semithick]
  \tikzstyle{every state}=[fill=none,draw=black,text=black]

  \node[state,accepting](X)                    {$x$};
  \node[state,accepting](Y) 	[right of=X] {$y$};
  \node[state](Z)		[right of=Y] {$z$};
  \node[state,accepting](U) [below of=X] {$u$};
  \path (X) edge [bend left] node {a} (Y)
  		(Y) edge node {a} (X)
  		(X) edge [bend right] node {a} (Z)
  		(Z) edge node {a} (Y)
  		(U) edge [loop right] node {a} (U);

\end{tikzpicture}
\end{center}
The (interesting part of the) determinized automaton is:
\begin{center}
\begin{tikzpicture}[scale = 0.8, transform shape,->,>=stealth',shorten >=1pt,auto,node distance=2.8cm,
                    semithick]
  \tikzstyle{every state}=[fill=none,draw=black,text=black]

  \node[state,accepting](X1) [right of=Z] {$\{x\}$};
  \node[state,accepting](X2) [right of=X1] {$\{y,z\}$};
  \node[state,accepting](X3) [right of=X2] {$\{x,y\}$};
  \node[state,accepting](X4) [right of=X3] {$\{x,y,z\}$};
  \node[state,accepting](X5) [below of=X1] {$\{u\}$};

  \path (X1) edge node {a} (X2)
  		(X2) edge node {a} (X3)
  		(X3) edge node {a} (X4)
  		(X5) edge [loop right] node {a} (X5);

 \draw[-,dashed] (X1) -- (X5) -- (X2);
 \draw[-,dotted] (X3) -- (X5) -- (X4);
  
\end{tikzpicture}
\end{center}
The bisimulation computed by $\texttt{Naive}(\{x\},\{u\})$ or $\texttt{HK}(\{x\},\{u\})$ is represented by dotted + dashed lines. The bisimulation up-to congruence computed by $\texttt{HKC}(x,y)$ consists only in the dashed lines: after two steps, the algorithm $\texttt{HKC}$ returns \texttt{true}.

\section{Discrete case}
Define the endofunctor $L : \gr{Sets} \to \gr{Sets}$ by setting $LX = A \times X + 1$ and $Lf = id_A \times f + id_1$. Remember that $(P,\eta,\mu)$ is the probability distribution monad. We can now formally define PTS from the introduction as $PL$-coalgebras. For a PTS $\alpha : X \to PLX$, the (finite and infinite) trace semantics $\ll - \rr : X \to \mathcal{M}(A^\infty)$ can be defined by the following equations.
\begin{align*}
&\ll x \rr (\varepsilon A^\infty) = 1 & & \ll x \rr (\varepsilon) = \alpha(*) \\
&\ll x \rr (awA^\infty) = \sum_{y\in X} \alpha(x)(a,y) \cdot \ll y \rr (wA^\infty) &
&\ll x \rr (aw) = \sum_{y\in X} \alpha(x)(a,y) \cdot \ll y \rr (w)
\end{align*}
The aim of this section is to find back this semantics via a determinization construction and to provide an efficient algorithm that takes $x,y \in X$ and checks if $\ll x \rr = \ll y \rr$. This last part is challenging at first sight, because this means the algorithm checks the equality of two elements of $\M (A^\infty)$, which are functions on the uncountable space $\Sigma_{A^\infty}$. \\\\
To begin with, let us get used to this semantics and to measures over words by doing some computations for two different PTS. The first one is defined on the alphabet $A = \{a\}$.
\begin{center}
\begin{tikzpicture}[->,>=stealth',shorten >=1pt,auto,node distance=2.8cm,
                    semithick]
  \tikzstyle{every state}=[fill=none,draw=black,text=black]

  \node[state](X)                    {$x$};
  \node[state](Y) [right of=X] {$y$};
  \node[state,accepting](F) [right of=Y] {$*$};

  \path (Y) edge node {a,1/3} (X)
		(Y) edge [loop above] node {a,1/3} (Y)
		(Y) edge              node {1/3} (F)
 		(X) edge [loop above] node {a,1} (X);
  
\end{tikzpicture}
\end{center}
The language $\ll x \rr$ is very easy to compute for sets of words in $S_\infty$ by induction: for every finite word $w$, $\ll x \rr (w) = 0$ and $\ll x \rr (wA^\infty) = 1$. Hence, we have for example $\ll x \rr (a^\omega) = \lim_{n\to \infty} \ll x \rr (a^n A^\infty) = 1$ because the sequence of sets $(a^n A^\infty)_{n\in \N}$ is decreasing. Let us look at $\ll y \rr$. First see that $\ll y \rr (\varepsilon) = 1/3$ and $\ll y \rr (\varepsilon A^\infty) = 1$. Let $n \in \N \cup \{0\}$, then:
\begin{align*}
&\ll y \rr (a^{n+1}) = \frac{1}{3} \ll y \rr (a^n) + \frac{1}{3} \ll x \rr (a^n) = \frac{1}{3} \ll y \rr (a^n) \\
&\ll y \rr (a^{n+1}A^\infty) = \frac{1}{3} \ll y \rr (a^n A^\infty) + \frac{1}{3} \ll x \rr (a^n A^\infty) = \frac{1}{3} \ll y \rr (a^n A^\infty) + \frac{1}{3} 
\end{align*}
Hence $\ll y \rr (a^n) = 1/3^{n+1}$ and $\ll y \rr (a^n A^\infty) = (1 + 3^{-n})/2$. Using the same arguments as for $\ll x \rr$, we have $\ll y \rr (a^\omega) = \lim_{n\to \infty} (1+3^{-n})/2 = 1/2$. This is rather intuitive: the probability of performing $n$ loops in state $y$ and then getting lost forever in state $x$ is $1/3^{n+1}$. Summing them for $n\in \N \cup \{0\}$ gives $1/2$.
\\\\
The second PTS is defined on the alphabet $A = \{ \gr{0}, \gr{1}, \gr{2} \}$. 
\begin{center}
\begin{tikzpicture}[->,>=stealth',shorten >=1pt,auto,node distance=2.8cm,
                    semithick]
  \tikzstyle{every state}=[fill=none,draw=black,text=black]

  \node[state](X)                    {$x$};
  \node[state](Y) 		[right of=X] {$y$};
  \node[state,accepting](F) [right of=Y] {$*$};

  \path (X) edge [loop above] node {\textbf{0},1/3} (X)
		(X) edge [loop below] node {\textbf{2},1/3} (X)
		(X) edge              node {\textbf{1},1/3} (Y)
 		(Y) edge              node {1} (F);
  
\end{tikzpicture}
\end{center}
This automaton generates Cantor's space, in the sense that when generating an infinite word, the computation stops if and only if this word is not in Cantor's space (we recall this is the set of real numbers $r\in [0,1]$ such that there is no 1 in the base 3 expansion of $r$). More precisely, we have $\ll x \rr (w\textbf{1}) = \left(\frac{1}{3}\right)^{|w|+1}$ if there is no $\textbf{1}$ in $w$ and $\ll x \rr (w) = 0$ for other finite words. Furthermore $\ll x \rr (wA^\infty) = \left(\frac{1}{3}\right)^{|w|}$ if there is no \textbf{1} in $w$. We can guess this way that the measure of Cantor's space in $[0,1]$ is 0 :
\begin{align*}
\ll x \rr (A^\omega) &= \ll x \rr (A^\infty \setminus A^*) =  \ll x \rr (A^\infty) - \ll x \rr (A^*) = 1 - \ll x \rr \left(\bigcup_{w\in A^*} \{w\}\right) = 1 - \sum_{w\in A^*} \ll x \rr (w) \\ &= 1 - \sum_{w\in \{\textbf{0},\textbf{2}\}^*} \ll x \rr (w\textbf{1}) = 1 - \sum_{n\in\mathbb{N}} \sum_{w \in \{\textbf{0},\textbf{2}\}^*, |w|=n} \left( \frac{1}{3} \right)^{n+1} = 1 - \frac{1}{3} \sum_{n\in\mathbb{N}} 2^n \left( \frac{1}{3} \right)^n = 0
\end{align*}
\subsection{Trace semantics via determinization}
Starting from $\alpha : X \to PLX$, we will proceed in three steps in order to define its trace semantics morphism $\ll - \rr : X \to \M (A^\infty)$. 
\begin{enumerate}
\item[{\color{blue}(i)}] Translate $\alpha$ into a coalgebra for the more convenient composite functor $F_{\I \times \I} D$, obtaining an $\tilde{\alpha} : X \to F_{\I \times \I} DX$. Recall that $(D,\eta,\mu)$ is the sub-probability distribution monad. In the sequel $F_{\I \times \I}$ will be denoted by $F$.
\item[{\color{red}(ii)}] Determinize it, i.e., define an $\tilde{\alpha}^\# : DX \to FDX$ such that $\tilde{\alpha}^\# \circ \eta_X = \tilde{\alpha}$. Thus there is a final morphism $\varphi_{\tilde{\alpha}^\#} : DX \to \Omega$.
\item[(iii)] Factorize the final morphism to get a coalgebra morphism $DX \to \M (A^\infty)$, then precompose with $\eta_X$ to get the desired trace semantics $X \to \M (A^\infty)$.
\end{enumerate}
The whole construction is summed up in the following diagram. Here $\Omega = (\I \times \I)^{A^*}$ is the set of languages with two outputs in $\I$ in accordance with Proposition \ref{final coalgebra}.
\begin{center}
\begin{tikzpicture}
  \matrix (m) [matrix of math nodes,row sep=2em,column sep=4em,minimum width=2em]
  {
     X & D X & \M (A^\infty) & \Omega \\
     P L X & & & \\
     D L X & F D X & F \M (A^\infty) & F\Omega \\};
  \path[-stealth]
    (m-1-1) edge [color=blue] node [left] {$\alpha$} (m-2-1)
    (m-2-1) edge [color=blue] node [left] {$\iota_{LX}$} (m-3-1)
    (m-1-1) edge [color=red] node [below] {$\eta_X$} (m-1-2)
    (m-1-2) edge [color=black] node [below] {$\llbracket - \rrbracket$} (m-1-3)
    (m-1-3) edge [color=black] node [below] {$\varphi_{\Pi}$} (m-1-4)
    (m-1-2) edge [color=red] node [right] {$\tilde{\alpha}^\#$} (m-3-2)
    (m-1-3) edge [color=black] node [right] {$\Pi$} (m-3-3)
    (m-1-4) edge [color=red] node [right] {$\omega$} (m-3-4)
    (m-3-1) edge [color=blue] node [below] {$\e_X$} (m-3-2)
    (m-3-2) edge [color=black] node [above] {$F\llbracket - \rrbracket$} (m-3-3)
    (m-3-3) edge [color=black] node [above] {$F\varphi_{\Pi}$} (m-3-4)
    (m-1-2) edge [color=red, bend left] node [above] {$\varphi_{\tilde{\alpha}^\#}$} (m-1-4)
    (m-3-2) edge [color=red, bend right] node [below] {$F\varphi_{\tilde{\alpha}^\#}$} (m-3-4)
    (m-1-1) edge [color=blue] node [above right] {$\tilde{\alpha}$} (m-3-2)
    (m-1-1) edge [color=black, bend left] node [above] {$\ll - \rr$} (m-1-3)

    ;
\end{tikzpicture}
\end{center}
\subsubsection*{(i) Translation: from $\alpha$ to $\tilde{\alpha}$}
Let $\alpha : X \to PLX$ be a $PL$-coalgebra. The space $PLX$ does not enlighten the information provided by $\alpha$. It would be more convenient to have an explicit machine behaviour, i.e., make functor $F$ appear. The aim of the step $(i)$ is to formally change $PLX$ into $FDX$.\\\\
Each state $x$ gives rise to a probability distribution $\alpha(x)$ over $LX = A \times X + 1$. There are three important pieces of information in it. First, the information that the total mass is $1$. Second, the probability of termination, which is $\alpha(x)(*)$. Third, all the probabilities $\alpha(x)(a,y)$ that the system transitions to some $y \in X$ using some $a \in A$. Any faithful translation of $\alpha$ should keep in memory this information. We define the natural transformations $\iota : P \Rightarrow T$ and $\e : DL \Rightarrow FD$.
\begin{align*}
&\iota_Y : u \in PY \mapsto u \in DY \\
&\e_X : u \in DLX \mapsto \left\langle \sum_{z \in LX} u(z) , u(*) , a \mapsto [y \mapsto u(a,y)]\right\rangle \in FDX 
\end{align*}
The first one is from the inclusion $PY \subseteq DY$. The second one aims to clearly separate the useful information; it appears in \cite{Jacobs15} where it plays the role of the \il{extension natural transformation}. Now, take $\tilde{\alpha} = \langle \tilde{\alpha}_\1, \tilde{\alpha}_\0, a \mapsto t_a \rangle = \e_X \circ \iota_{LX} \circ \alpha$.
\begin{center}
\begin{tikzpicture}
  \matrix (m) [matrix of math nodes,row sep=2em,column sep=4em,minimum width=2em]
  { X & PLX & DLX & FDX \\};
  \path[-stealth]
    (m-1-1) edge node [below] {$\alpha$} (m-1-2)
    (m-1-2) edge node [below] {$\iota_{LX}$} (m-1-3)
    (m-1-3) edge node [below] {$\e_X$} (m-1-4)
    (m-1-1) edge [bend left] node [above] {$\tilde{\alpha}$} (m-1-4)
    ;
\end{tikzpicture}
\end{center}
An explicit expression is $\tilde{\alpha}(x) = \left\langle \underbrace{\sum_{z \in LX} \alpha(x)(z)}_{=1}, \alpha(x)(*), a \mapsto [y \mapsto \alpha(x)(a,y)] \right\rangle$. 

\subsubsection*{(ii) Determinization}
The $FD$-coalgebra $\tilde{\alpha}$ has now to be changed into an $F$-coalgebra in order to benefit from the final $F$-coalgebra semantics. This is the aim of step $(ii)$.\\\\
Let $\beta = \langle \beta_\1, \beta_\0, a \to t_a \rangle : X \to FDX$ be any $FD$-coalgebra. Then, there is an obvious way to determinize it into an $F$-coalgebra $\beta^\# = \langle \beta^\#_\1, \beta^\#_\0, a \to \tau_a \rangle : DX \to FDX$. Set for all $u \in DX$:
\begin{align*}
&\beta^\#_\1 (u) = \sum_{x \in X} u(x) \beta_\1 (x) \\
&\beta^\#_\0 (u) = \sum_{x \in X} u(x) \beta_\0 (x) \\
\forall a \in A \indent &\tau_a (u) = y \mapsto \sum_{x \in X} u(x) t_a(x)(y)
\end{align*}
This transformation is the same as in section $7.2$ of \cite{Jacobs15} if the first output is dropped. The state space of the automata is basically changed from $X$ to $DX$. In a sense, the function $t_a : X \to DX$ is thus "homogenized" into $\tau_a : DX \to DX$. As an $F$-coalgebra, $\beta^\#$ can be structurally seen as a Moore automaton. It takes $u \in DX$, outputs its total mass, outputs its termination mass, and every letter $a$ makes it transition to a new sub-probability distribution $v = \tau_a(u)$.\\\\
In this paragraph we explain the terminology "determinization". This refers to the classical power set construction, which consists in changing the state space $X$ of a non-deterministic automaton $\beta$ into $\mathcal{P}(X)$ to produce a \il{deterministic} automaton $\beta^\#$. This is actually an instance of a construction discussed in section $3$ (see Lemma \ref{deter}): the general determinization approach involves an Eilenberg-Moore category that we overlook for the moment. The key property is that the behaviour of $\beta$ starting with $x$ as initial state is the same as the behaviour of $\beta^\#$ starting with $\{ x \}$ as initial state.  Here we are actually doing the same thing, replacing the functor $\mathcal{P}$ with the functor $D$, and unit $x \mapsto \{ x \}$ with unit $x \mapsto \delta_{x,-}$. In terms of diagrams, what has to be checked to confer the title of determinization is that $\beta^\# \circ \eta_X = \beta$. This is the case: 
\begin{align*}
&(\beta^\#_\1 \circ \eta_X)(x)= \sum_{x'\in X} \delta_{x,x'} \beta_\1(x') = \beta_\1(x)\\
&(\beta^\#_\0 \circ \eta_X)(x)= \sum_{x'\in X} \delta_{x,x'} \beta_\0(x') = \beta_\0(x)\\
\forall a \in A \indent & (\tau_a \circ \eta_X)(x) = \left[y \mapsto \sum_{x'\in X} \delta_{x,x'} t_a(x')(y)\right] = [y \mapsto t_a(x)(y)] = t_a(x)
\end{align*}

\subsubsection*{(iii) Factorization of the final morphism}
Because $\tilde{\alpha}^\#$ is an $F$-coalgebra, there exists a unique coalgebra morphism $\varphi_{\tilde{\alpha}^\#}$ from $\tilde{\alpha}^\#$ to $\omega$. However, we want the semantics of $\tilde{\alpha}^\#$ to be a probability distribution over words, hence to live in $\M (A^\infty)$. The step $(iii)$ thus aims to factorize $\varphi_{\tilde{\alpha}^\#}$ using a new $F$-coalgebra $\Pi : \M (A^\infty) \to F \M (A^\infty)$ and a new coalgebra morphism $\llbracket - \rrbracket$ from $\tilde{\alpha}^\#$ to $\Pi$.
\begin{center}
\begin{tikzpicture}
  \matrix (m) [matrix of math nodes,row sep=2em,column sep=4em,minimum width=2em]
  { & & DX & \M (A^\infty) & \Omega & &  (*) \\};
  \path[-stealth]
    (m-1-3) edge node [below] {$\llbracket - \rrbracket$} (m-1-4)
    (m-1-4) edge node [below] {$\varphi_\Pi$} (m-1-5)
    (m-1-3) edge [bend left] node [above] {$\varphi_{\tilde{\alpha}^\#}$} (m-1-5)
    ;
\end{tikzpicture}
\end{center}
First, define $\Pi : \M (A^\infty) \to F \M (A^\infty)$ by
\[ \Pi(m) = \langle m(\varepsilon A^\infty), m(\varepsilon), a \mapsto m_a \rangle \]
The $F$-coalgebra $\Pi$ will play the role of a final object in a certain subcategory of $\gr{Coalg}(F)$. The unique coalgebra morphism from $\Pi$ to $\omega$ is denoted by $\varphi_\Pi$. It is injective (see Lemma \ref{injective}). The following lemma states in which cases the factorization is possible.
\begin{prop} \label{keybisd}
Let $\beta = \langle \beta_\1, \beta_\0, a \mapsto \tau_a \rangle : Y \to FY$ be an $F$-coalgebra. The two following conditions are equivalent:
\begin{enumerate}
\item[(i)] There exists an $F$-coalgebra morphism $\llbracket - \rrbracket$ from $\beta$ to $\Pi$.
\item[(ii)] The equation $\beta_\1 = \beta_\0 + \sum_{a\in A} \beta_\1 \circ \tau_a$ holds.
\end{enumerate}
In this case, this morphism is unique.
\end{prop}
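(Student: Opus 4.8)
The plan is to unpack the coalgebra-morphism condition into pointwise equations on the generators $S_\infty$ and then use Theorem~\ref{key} as the bridge between measures on $\Sigma_{A^\infty}$ and their values on $S_\infty$. Writing out what it means for $\llbracket - \rrbracket : Y \to \M(A^\infty)$ to be an $F$-coalgebra morphism from $\beta$ to $\Pi$, i.e. $F\llbracket - \rrbracket \circ \beta = \Pi \circ \llbracket - \rrbracket$, and recalling that $FY = (\I \times \I) \times Y^A$, the condition splits into three equations that must hold for every $y \in Y$: $\beta_\1(y) = \llbracket y \rrbracket(\varepsilon A^\infty)$, $\beta_\0(y) = \llbracket y \rrbracket(\varepsilon)$, and $\llbracket \tau_a(y) \rrbracket = \llbracket y \rrbracket_a$ for every $a \in A$ (an equality of measures, where $\llbracket y \rrbracket_a$ is the measure derivative). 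I would carry the two implications with these three equations as the working hypotheses.

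For $(i) \Rightarrow (ii)$, I would apply the consistency identity of Theorem~\ref{key}$(ii)$ to the measure $\llbracket y \rrbracket$ at the empty word, giving $\llbracket y \rrbracket(\varepsilon A^\infty) = \llbracket y \rrbracket(\varepsilon) + \sum_{a \in A} \llbracket y \rrbracket(aA^\infty)$. Substituting the first two morphism equations, and rewriting $\llbracket y \rrbracket(aA^\infty) = \llbracket y \rrbracket_a(A^\infty) = \llbracket \tau_a(y) \rrbracket(\varepsilon A^\infty) = \beta_\1(\tau_a(y))$ via the third equation followed by the first equation applied at $\tau_a(y)$, turns this into $\beta_\1(y) = \beta_\0(y) + \sum_{a \in A} \beta_\1(\tau_a(y))$, which is exactly $(ii)$.

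For $(ii) \Rightarrow (i)$ and uniqueness, I would build the measure by hand. Setting $\tau_\varepsilon = id_Y$ and $\tau_{wa} = \tau_a \circ \tau_w$, I define a candidate $m_y : S_\infty \to \R_+$ by $m_y(\emptyset) = 0$, $m_y(\{w\}) = \beta_\0(\tau_w(y))$ and $m_y(wA^\infty) = \beta_\1(\tau_w(y))$. The key observation is that the consistency condition Theorem~\ref{key}$(ii)$ for $m_y$ at a word $w$ reads, after using $\tau_{wa} = \tau_a \circ \tau_w$, precisely as equation $(ii)$ of the proposition evaluated at the state $z = \tau_w(y)$; so hypothesis $(ii)$ makes $m_y$ a pre-measure and Theorem~\ref{key} extends it uniquely to a measure $\llbracket y \rrbracket$ on $\Sigma_{A^\infty}$. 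This measure is a sub-probability since its total mass is $\llbracket y \rrbracket(A^\infty) = m_y(\varepsilon A^\infty) = \beta_\1(y) \in \I$, hence $\llbracket y \rrbracket \in \M(A^\infty)$. The first two morphism equations hold by construction; for the third I would check that $\llbracket y \rrbracket_a$ and $\llbracket \tau_a(y) \rrbracket$ agree on the generators $S_\infty$ — using $a\{w\} = \{aw\}$, $a(wA^\infty) = awA^\infty$ and the prepend identity $\tau_{aw} = \tau_w \circ \tau_a$ — and then invoke the uniqueness half of Theorem~\ref{key} to conclude they coincide as measures. Uniqueness of $\llbracket - \rrbracket$ falls out of the same computation: the three morphism equations force the values of any such morphism on all of $S_\infty$, and a measure is determined by its restriction to $S_\infty$.

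I expect the main obstacle to be this third morphism equation in the existence direction: it is an equality between two measures on an uncountable space, so it cannot be verified set by set and must instead be reduced to an equality on the countable generating family $S_\infty$ and then lifted through the uniqueness in Theorem~\ref{key}. Keeping the composition order of the maps $\tau_w$ straight — the recursion $\tau_{wa} = \tau_a \circ \tau_w$ versus the derived prepend identity $\tau_{aw} = \tau_w \circ \tau_a$ — is the only genuine bookkeeping hazard; the remaining verifications are direct substitutions.
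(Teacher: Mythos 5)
Your proof is correct and follows essentially the same route as the paper's (which proves this statement as Theorem \ref{keybis} in the general case and invokes it here): split the morphism condition into the two output equations plus the derivative equation, derive $(ii)$ from $\sigma$-additivity at the empty word, build the measure in the existence direction by extending generator-level data through Theorem \ref{key}, and settle both the derivative equation and uniqueness by agreement on $S_\infty$ together with the uniqueness part of Theorem \ref{key}. The only cosmetic difference is that you define the candidate measure in closed form via the iterated transitions $\tau_w$, where the paper uses an equivalent mutually inductive definition.
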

\begin{proof} Refer to the proof of Theorem \ref{keybis} which is the same in the general case.
\end{proof}
\noindent \textbf{Remark.} If $(i)$ and $(ii)$ are true, then $\varphi_\Pi \circ \llbracket - \rrbracket$ is a coalgebra morphism from $\beta$ to $\omega$, hence $\varphi_\Pi \circ \llbracket - \rrbracket = \varphi_\beta$ by uniqueness. Using the injectivity of $\varphi_\Pi$, one can see that for any $u,v \in TX$, $\llbracket u \rrbracket = \llbracket v \rrbracket \Leftrightarrow \varphi_\beta(u) = \varphi_\beta(v)$. This is important because checking if $\varphi_\beta(u) = \varphi_\beta(v)$ is made easy by  the use bisimulation up-to techniques.\\\\
Let us check that $\tilde{\alpha}^\# : DX \to FDX$ satisfies $(ii)$ in Proposition \ref{keybisd}.
\begin{align*}
\tilde{\alpha}^\#_\1 (u) &= \sum_{x\in X} u(x) \tilde{\alpha}_\1(x) =  \sum_{x\in X} u(x) \sum_{z \in LX} \alpha(x)(z) \\ 
&= \sum_{x\in X} u(x) \alpha(x)(*) + \sum_{x \in X} \sum_{(a,y) \in A \times X} u(x) \alpha(x)(a,y) \\
&= \sum_{x\in X} u(x) \tilde{\alpha}_\0 (x) + \sum_{a\in A} \sum_{y\in X} \tau_a(u)(y) \\
&= \tilde{\alpha}^\#_\0(u) + \sum_{a\in A} (\tilde{\alpha}^\#_\1 \circ \tau_a) (u) & \text{\indent (because $\tilde{\alpha}_\1 (y) = 1$)}
\end{align*}
This achieves the proof of the following proposition.
\begin{prop}
The morphism $\varphi_{\tilde{\alpha}^\#}$ decomposes as a unique coalgebra morphism $\llbracket - \rrbracket$ from $\tilde{\alpha}^\#$ to $\Pi$ followed by an injective coalgebra morphism $\varphi_\Pi$ from $\Pi$ to $\omega$, as in $(*)$.
\end{prop}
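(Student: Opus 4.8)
The plan is to assemble results that are already in place rather than to prove anything genuinely new, so the proof is essentially a bookkeeping argument. First I would observe that the displayed computation carried out immediately above the statement verifies precisely condition $(ii)$ of Proposition \ref{keybisd} for $\beta = \tilde{\alpha}^\#$, namely the equation $\tilde{\alpha}^\#_\1 = \tilde{\alpha}^\#_\0 + \sum_{a \in A} \tilde{\alpha}^\#_\1 \circ \tau_a$. The one substantive input there is $\tilde{\alpha}_\1(y) = 1$, which encodes that each $\alpha(x)$ has total mass $1$ and is faithfully recorded by the first component of $\tilde{\alpha}$; this is what turns the total-mass bookkeeping into the required fixpoint-style identity.

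Having checked $(ii)$, I would invoke the equivalence of Proposition \ref{keybisd} to obtain an $F$-coalgebra morphism $\llbracket - \rrbracket$ from $\tilde{\alpha}^\#$ to $\Pi$, together with its uniqueness. This is the first factor, and it is the one that lands in $\M(A^\infty)$, which is what we wanted the semantics to be. The second factor is $\varphi_\Pi$, the unique coalgebra morphism from $\Pi$ to the final $F$-coalgebra $(\Omega,\omega)$ guaranteed by Proposition \ref{final coalgebra}; its injectivity is precisely the content of Lemma \ref{injective}.

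It then remains to identify the composite $\varphi_\Pi \circ \llbracket - \rrbracket$ with $\varphi_{\tilde{\alpha}^\#}$, which is exactly diagram $(*)$. Since a composition of coalgebra morphisms is again a coalgebra morphism, $\varphi_\Pi \circ \llbracket - \rrbracket$ is a coalgebra morphism from $\tilde{\alpha}^\#$ into the final object $\omega$. By finality there is exactly one such morphism, and $\varphi_{\tilde{\alpha}^\#}$ is it, so $\varphi_\Pi \circ \llbracket - \rrbracket = \varphi_{\tilde{\alpha}^\#}$, giving the desired factorization. Uniqueness of the factorization then follows: $\llbracket - \rrbracket$ is unique by Proposition \ref{keybisd}, and $\varphi_\Pi$ is the unique morphism into the final coalgebra.

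I do not expect a real obstacle, since all the hard work has been delegated: the verification of $(ii)$ above, the equivalence of Proposition \ref{keybisd}, and the injectivity Lemma \ref{injective}. The only point deserving a word of care is the finality argument, because it is what forces the equality $\varphi_\Pi \circ \llbracket - \rrbracket = \varphi_{\tilde{\alpha}^\#}$ to hold strictly rather than merely up to the induced behaviour; this is also the step that makes the whole decomposition canonical, and it is the observation already recorded in the remark following Proposition \ref{keybisd}.
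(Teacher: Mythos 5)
Your proposal is correct and follows essentially the same route as the paper: the computation displayed just before the statement verifies condition $(ii)$ of Proposition~\ref{keybisd} (the key input being $\tilde{\alpha}_\1(y)=1$), that proposition then supplies the unique coalgebra morphism $\llbracket - \rrbracket$ from $\tilde{\alpha}^\#$ to $\Pi$, injectivity of $\varphi_\Pi$ is Lemma~\ref{injective}, and the identity $\varphi_\Pi \circ \llbracket - \rrbracket = \varphi_{\tilde{\alpha}^\#}$ follows from finality of $(\Omega,\omega)$, exactly as recorded in the remark after Proposition~\ref{keybisd}. There is no gap; you have simply made explicit the bookkeeping that the paper compresses into ``this achieves the proof.''
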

\noindent We define $\ll - \rr = \llbracket - \rrbracket \circ \eta_X$ as our definitive semantics.
\subsubsection*{Link with the usual semantics}
We conclude this section by comparing the trace semantics via determinization we just obtained with the trace semantics of PTS defined previously.  
First express $\llbracket - \rrbracket_{|S_\infty}$ (see the proof of Theorem \ref{keybis}). For all $u \in DX$, $a \in A$, $w \in A^*$:
\begin{align*}
&\llbracket u \rrbracket (\varepsilon A^\infty) = \tilde{\alpha}^\#_\1 (u) &
&\llbracket u \rrbracket (\varepsilon) = \tilde{\alpha}^\#_\0 (u) \\
&\llbracket u \rrbracket (awA^\infty) = \llbracket \tau_a(u) \rrbracket (wA^\infty) &
&\llbracket u \rrbracket (aw) = \llbracket \tau_a(u) \rrbracket (w)
\end{align*}
The following lemma makes a rather intuitive link between $\llbracket - \rrbracket$ and $\ll - \rr$ and will help us to get back to the first semantics defined in the beginning of this section.
\begin{lem}
Let $u \in DX$. Then $\llbracket u \rrbracket = \sum_{x\in X} u(x) \ll x \rr$.
\end{lem}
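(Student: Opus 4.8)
The plan is to use that a measure on $A^\infty$ is pinned down by its values on the generating semiring $S_\infty$. First I would check that the right-hand side, which I abbreviate $\Phi(u) := \sum_{x\in X} u(x)\ll x\rr$, is genuinely an element of $\M(A^\infty)$: since $u$ has finite support and each $\ll x\rr = \llbracket\eta_X(x)\rrbracket$ is a sub-probability measure, $\Phi(u)$ is a finite non-negative linear combination of measures, hence itself a finite measure, with total mass $\sum_x u(x)\ll x\rr(A^\infty)\le \sum_x u(x)\le 1$. Thus both $\llbracket u\rrbracket$ and $\Phi(u)$ are measures on $\Sigma_{A^\infty}$, and they restrict to the same premeasure on $S_\infty$ as soon as they agree on its elements. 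By the uniqueness part of Theorem \ref{key} it therefore suffices to check that $\llbracket u\rrbracket$ and $\Phi(u)$ coincide on $\emptyset$ (where both vanish), on the singletons $\{w\}$, and on the cones $wA^\infty$ with $w\in A^*$.

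I would then prove this agreement by induction on $|w|$, handling the cone case and the singleton case in parallel since they differ only by the presence of $A^\infty$, using the recursive description of $\llbracket-\rrbracket_{|S_\infty}$ recalled just above the statement. For the base case $w=\varepsilon$, I would use $\llbracket u\rrbracket(\varepsilon A^\infty)=\tilde{\alpha}^\#_\1(u)=\sum_x u(x)\tilde{\alpha}_\1(x)$ together with $\tilde{\alpha}^\#\circ\eta_X=\tilde{\alpha}$, which gives $\tilde{\alpha}_\1(x)=\tilde{\alpha}^\#_\1(\eta_X(x))=\ll x\rr(\varepsilon A^\infty)$; hence $\llbracket u\rrbracket(\varepsilon A^\infty)=\sum_x u(x)\ll x\rr(\varepsilon A^\infty)=\Phi(u)(\varepsilon A^\infty)$. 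The singleton $\{\varepsilon\}$ case is identical with $\tilde{\alpha}^\#_\0$ in place of $\tilde{\alpha}^\#_\1$.

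For the inductive step, write $w=av$ with $a\in A$ and $|v|<|w|$. On the coalgebra side, $\llbracket u\rrbracket(avA^\infty)=\llbracket\tau_a(u)\rrbracket(vA^\infty)$, and applying the induction hypothesis to the distribution $\tau_a(u)\in DX$ and the word $v$ rewrites this as $\sum_y\tau_a(u)(y)\ll y\rr(vA^\infty)$. Expanding $\tau_a(u)(y)=\sum_x u(x)t_a(x)(y)$ and interchanging the two finite sums, it equals $\sum_x u(x)\big(\sum_y t_a(x)(y)\ll y\rr(vA^\infty)\big)$. Finally I would recognize the inner factor as $\ll x\rr(avA^\infty)$: indeed $\ll x\rr(avA^\infty)=\llbracket\eta_X(x)\rrbracket(avA^\infty)=\llbracket\tau_a(\eta_X(x))\rrbracket(vA^\infty)=\llbracket t_a(x)\rrbracket(vA^\infty)$ using $\tau_a\circ\eta_X=t_a$, and the induction hypothesis for $t_a(x)$ and $v$ turns this into $\sum_y t_a(x)(y)\ll y\rr(vA^\infty)$. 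This yields $\llbracket u\rrbracket(avA^\infty)=\sum_x u(x)\ll x\rr(avA^\infty)=\Phi(u)(avA^\infty)$, and the singleton case is the same computation with the cones removed.

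The only genuinely delicate point is the reduction in the first paragraph: one must be certain that both expressions are authentic $\sigma$-additive measures and then invoke Theorem \ref{key}, so that coincidence on the countable semiring $S_\infty$ forces coincidence on all of $\Sigma_{A^\infty}$ — the explicit formulas for $\llbracket-\rrbracket$ are available only on $S_\infty$. Everything else is bookkeeping: the interchange of summations is harmless because $u$ and each $t_a(x)$ have finite support, and the recursion mirrors exactly the linearity of $\tilde{\alpha}^\#_\1$, $\tilde{\alpha}^\#_\0$ and $\tau_a$ in $u$. One could phrase the result more abstractly by saying that both $u\mapsto\Phi(u)$ and $\llbracket-\rrbracket$ corestrict to coalgebra morphisms from $\tilde{\alpha}^\#$ to $\Pi$, but verifying the measure-derivative component of that morphism condition unwinds to precisely the induction above, so the hands-on argument is the most economical.
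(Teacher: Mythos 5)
Your proof is correct and follows essentially the same route as the paper's: induction on the word (for all $u$ simultaneously), applying the induction hypothesis twice (once to $\tau_a(u)$, once to $t_a(x)$ via $\tau_a\circ\eta_X=t_a$ and the coalgebra morphism property), then concluding from agreement on $S_\infty$ via Theorem \ref{key}. Your explicit check that $\sum_{x}u(x)\ll x \rr$ is itself a sub-probability measure is a small point of added rigor that the paper leaves implicit, but it does not change the argument.
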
 
\begin{proof}
By induction on $w$ (for all $u$). First see that $\llbracket u \rrbracket (\varepsilon) = \tilde{\alpha}_\0^\# (u) = \sum_{x\in X} u(x) \tilde{\alpha}_\0 (x)= \sum_{x\in X} u(x) \ll x \rr (\varepsilon)$. Then $\llbracket u \rrbracket (aw) = \llbracket \tau_a(u) \rrbracket (w) = \sum_{y\in X} \tau_a(u)(y) \ll y \rr(w)$ by induction hypothesis. So $\llbracket u \rrbracket (aw) = \sum_{x,y\in X} u(x) t_a(x)(y) \ll y \rr (w) = \sum_{x \in X} u(x) \llbracket t_a(x) \rrbracket (w)$ again by induction hypothesis. But $\llbracket t_a(x) \rrbracket (w) = \llbracket (\tau_a \circ \eta_X) (x) \rrbracket (w) = \ll x \rr (aw)$ because $\llbracket - \rrbracket$ is a coalgebra morphism, so $\llbracket u \rrbracket (aw) = \sum_{x\in X} u(x) \ll x \rr (aw)$. The proof is the same for $\llbracket u \rrbracket (wA^\infty) = \sum_{x\in X} u(x) \ll x \rr (wA^\infty)$. Then, both measures coincide on $S_\infty$ so $\llbracket u \rrbracket = \sum_{x\in X} u(x) \ll x \rr$ according to Theorem \ref{key}.
\end{proof}

\noindent Using this lemma we have the following expressions.
\begin{align*}
&\ll x \rr (\varepsilon A^\infty) = (\tilde{\alpha}_\1^\# \circ \eta_X)(x) = \tilde{\alpha}_\1(x) = \sum_{z\in X}\alpha(x)(z) = 1 \\
&\ll x \rr (\varepsilon) = (\tilde{\alpha}_\0^\# \circ \eta_X)(x) = \tilde{\alpha}_\0(x) = \alpha(*) \\
&\ll x \rr (awA^\infty) = \llbracket (\tau_a \circ \eta_X)(x) \rrbracket (wA^\infty) = \llbracket t_a(x) \rrbracket (wA^\infty) =  \sum_{y \in X} \alpha(x)(a,y) \cdot \ll y \rr (wA^\infty) \\
&\ll x \rr (aw) = \llbracket (\tau_a \circ \eta_X)(x) \rrbracket (w) = \llbracket t_a(x) \rrbracket (w) = \sum_{y\in X} \alpha(x)(a,y) \cdot \ll y \rr (w)
\end{align*}
These are the same expressions as in the beginning of this section. Hence:
\begin{prop} The two trace semantics denoted by $\ll - \rr$ coincide. \end{prop}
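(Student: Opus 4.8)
The plan is to observe that both maps under comparison are elements of $\M(A^\infty)$, i.e. finite (sub-probability) measures on $(A^\infty,\Sigma_{A^\infty})$, and to invoke the uniqueness clause of Theorem \ref{key}: since $S_\infty$ is a covering semiring generating $\Sigma_{A^\infty}$, a measure on $A^\infty$ is completely determined by its restriction to $S_\infty$. It therefore suffices to prove that the original semantics and the determinization semantics $\ll - \rr = \llbracket - \rrbracket \circ \eta_X$ agree on every generator in $S_\infty$, namely on $\emptyset$, on the singletons $\{w\}$ with $w \in A^*$, and on the cones $wA^\infty$ with $w \in A^*$. Both maps send $\emptyset$ to $0$, as every measure does, so the content lies entirely in the singletons and cones.

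For these I would argue by induction on the length of the finite word $w$. In the base case $w = \varepsilon$ I would unfold the definitions of $\tilde{\alpha}^\#$ and $\eta_X$ directly: $\ll x \rr(\varepsilon A^\infty) = (\tilde{\alpha}^\#_\1 \circ \eta_X)(x) = \tilde{\alpha}_\1(x) = 1$ since the total mass is $1$, and $\ll x \rr(\varepsilon) = (\tilde{\alpha}^\#_\0 \circ \eta_X)(x) = \tilde{\alpha}_\0(x) = \alpha(*)$, which are exactly the defining values of the first semantics. For the inductive step I would use that $\llbracket - \rrbracket$ is a coalgebra morphism to $\Pi$, so that $\llbracket u \rrbracket(awA^\infty) = \llbracket \tau_a(u) \rrbracket(wA^\infty)$ and $\llbracket u \rrbracket(aw) = \llbracket \tau_a(u) \rrbracket(w)$; specializing at $u = \eta_X(x)$ and using the identity $\tau_a \circ \eta_X = t_a$ established during determinization, together with the preceding lemma $\llbracket t_a(x) \rrbracket = \sum_{y} t_a(x)(y)\,\ll y \rr$, yields $\ll x \rr(awA^\infty) = \sum_{y \in X} \alpha(x)(a,y)\,\ll y \rr(wA^\infty)$ and likewise for $\ll x \rr(aw)$. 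These are precisely the recursive equations defining the first semantics, so the induction hypothesis closes the argument on all of $S_\infty$.

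The only genuinely delicate point is the passage from agreement on $S_\infty$ to equality as measures on the full $\sigma$-algebra, and this is exactly where the uniqueness part of Theorem \ref{key} is indispensable rather than a routine remark. Everything preceding it is bookkeeping that the displayed computations have in fact already carried out: they show directly that the determinization semantics satisfies the very four equations used to define the first semantics, so no independent verification of the recursion is needed and the proof reduces to citing Theorem \ref{key}.
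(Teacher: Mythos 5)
Your proof is correct and takes essentially the same route as the paper: both arguments reduce the claim to agreement on $S_\infty$ via the uniqueness in Theorem \ref{key}, and both derive the defining recursive equations of the first semantics from the coalgebra-morphism property of $\llbracket - \rrbracket$, the identity $\tau_a \circ \eta_X = t_a$, and the preceding lemma $\llbracket u \rrbracket = \sum_{x\in X} u(x) \ll x \rr$ applied to $u = t_a(x)$. The only difference is presentational: you spell out the induction on the length of $w$ that the paper leaves implicit when it observes that the expressions obtained are the same as the defining ones.
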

\subsection{Algorithm}
Let $\alpha : X \to PLX$ and $x,y \in X$. The aim of this paragraph is to give an algorithm that takes $x,y \in X$ and tells whether $\ll x \rr = \ll y \rr$ or not. It will directly profit from the determinization construction performed in the last section. Our algorithm is greatly inspired by $\texttt{HKC}$, and more specifically, by its generalization to weighted automata given in \cite{Bonchi17}. Indeed, the very same bisimulation up-to congruence techniques can leverage the determinized structure of $\tilde{\alpha}^\#$.\\\\
Note that $\tilde{\alpha}^\#$ has type $DX \to \I \times \I \times (DX)^A$. Let $\varphi_{\tilde{\alpha}^\#}$ be the final morphism $DX \to (\I \times \I)^{A^*}$. The $F_{\I \times \I}$-coalgebra $\tilde{\alpha}^\#$ can be easily injected into the "bigger" $F_{\R \times \R}$-coalgebra $\beta : \R_\omega^X \to \R \times \R \times (\R_\omega^X)^A$ where $\R_\omega^X$ is the set of finitely supported functions $X \to \R$. Thus we get a vector space structure (see \cite{Bonchi12}). Indeed, just define for every $u \in \R_\omega^X$:
\[ \beta(u) = \left\langle \sum_{x \in X, a \in A, y \in X} u(x) \alpha(x)(a,y) , \sum_{x\in X} u(x) \alpha(x)(*) , a \mapsto \left[ y \mapsto \sum_{x\in X} u(x) \alpha(x)(a,y) \right] \right\rangle \]
We study bisimulations in the sense of $\beta = \langle \beta_\1, \beta_\0, a \mapsto \tau_a \rangle : \R_\omega^X \to F\R_\omega^X$. Note that $\beta_\1,\beta_\0$ and each $\tau_a$ are linear functions. Let $\varphi_\beta$ be the final morphism $\R_\omega^X \to (\R \times \R)^{A^*}$. The following lemma shows that changing $\tilde{\alpha}^\#$ into $\beta$ does not affect trace equivalence. 
\begin{lem}
Let $i$ be the injection $DX \to \R_\omega^X$ and $j$ be the injection $(\I \times \I)^{A^*} \to (\R \times \R)^{A^*}$. Then $j \circ \varphi_{\alpha^\#} = \varphi_{\beta} \circ i$.
\end{lem}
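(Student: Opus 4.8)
The plan is to show that the pair $(i,j)$ identifies $\tilde{\alpha}^\#$ with a sub-coalgebra of $\beta$, so that the two final morphisms are forced to agree by the uniqueness part of Proposition \ref{final coalgebra}. First I would record the recursive description of a machine-functor final morphism. Writing $\tilde{\alpha}^\# = \langle \tilde{\alpha}^\#_\1, \tilde{\alpha}^\#_\0, a \mapsto \tau_a \rangle$ and $\beta = \langle \beta_\1, \beta_\0, a \mapsto \tau_a^\beta \rangle$, Proposition \ref{final coalgebra} gives
\[ \varphi_{\tilde{\alpha}^\#}(u)(\varepsilon) = \langle \tilde{\alpha}^\#_\1(u), \tilde{\alpha}^\#_\0(u) \rangle, \qquad \varphi_{\tilde{\alpha}^\#}(u)(aw) = \varphi_{\tilde{\alpha}^\#}(\tau_a(u))(w), \]
and the analogous identities for $\varphi_\beta$ with $\R \times \R$ replacing $\I \times \I$. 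Since $j$ is the coordinatewise inclusion $\I \times \I \hookrightarrow \R \times \R$ applied pointwise (postcomposition), $(j \circ \varphi_{\tilde{\alpha}^\#})(u)(w)$ is literally $\varphi_{\tilde{\alpha}^\#}(u)(w)$ read inside $\R \times \R$. It therefore suffices to prove $\varphi_\beta(i(u))(w) = \varphi_{\tilde{\alpha}^\#}(u)(w)$ in $\R \times \R$ for every $u \in DX$ and $w \in A^*$.

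The core of the argument is that $i$ intertwines the two coalgebra structures, i.e.\ that
\[ \beta_\1 \circ i = \tilde{\alpha}^\#_\1, \qquad \beta_\0 \circ i = \tilde{\alpha}^\#_\0, \qquad \tau_a^\beta \circ i = i \circ \tau_a \quad (a \in A). \]
The transition identity is immediate from the formulas: for $u \in DX$ and $y \in X$ one has $\tau_a^\beta(i(u))(y) = \sum_{x \in X} u(x)\alpha(x)(a,y) = \tau_a(u)(y)$, and the right-hand side is a genuine sub-probability distribution, since its total mass $\sum_{x,y} u(x)\alpha(x)(a,y) \le \sum_{x} u(x) \le 1$; hence it lies in $DX$ and $\tau_a^\beta(i(u)) = i(\tau_a(u))$. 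The second output identity holds because both sides are the single formula $\sum_{x} u(x)\alpha(x)(*)$. The delicate point is the first output: here one uses $\tilde{\alpha}_\1(x) = \sum_{z \in LX} \alpha(x)(z) = 1$ to rewrite $\tilde{\alpha}^\#_\1(u) = \sum_x u(x)$ as the total mass of $u$, and must check that $\beta_\1$, restricted along $i$, reproduces exactly this total-mass component. This matching of the first output is the step that carries the weight of the lemma.

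With the intertwining in hand I would conclude by induction on $|w|$. The base case $w = \varepsilon$ is precisely the two output identities. For the inductive step, $\varphi_\beta(i(u))(aw) = \varphi_\beta(\tau_a^\beta(i(u)))(w) = \varphi_\beta(i(\tau_a(u)))(w)$ by the transition identity, and this equals $\varphi_{\tilde{\alpha}^\#}(\tau_a(u))(w) = \varphi_{\tilde{\alpha}^\#}(u)(aw)$ by the induction hypothesis; taking $j$ into account yields $\varphi_\beta \circ i = j \circ \varphi_{\tilde{\alpha}^\#}$. One could instead phrase this abstractly: the coordinate inclusion induces a natural transformation $F_{\I \times \I} \Rightarrow F_{\R \times \R}$ making $j$ a coalgebra morphism between the two final coalgebras, after which $\varphi_\beta \circ i$ and $j \circ \varphi_{\tilde{\alpha}^\#}$ are two morphisms into the final $F_{\R \times \R}$-coalgebra and hence coincide. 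I expect the main obstacle to lie not in the (routine) induction but in the bookkeeping of this simultaneous change of carrier $D \to \R_\omega$ and of output object $\I \times \I \to \R \times \R$, and in particular in pinning down the first output so that $\beta$ genuinely restricts to $\tilde{\alpha}^\#$ along $i$; the explicit inductive computation is the cleanest way to keep this bookkeeping under control.
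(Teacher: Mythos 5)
The paper itself states this lemma without proof, so your argument has to stand on its own. Your overall route --- show that $i$ intertwines the two coalgebra structures and then conclude either by induction on words or by uniqueness of morphisms into the final $F_{\R \times \R}$-coalgebra --- is the natural one, and your verification of the transition identity $\tau_a^\beta \circ i = i \circ \tau_a$ and of the second-output identity $\beta_\0 \circ i = \tilde{\alpha}^\#_\0$ is correct.

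However, there is a genuine gap, located exactly where you say the weight of the lemma lies: you never actually prove the first-output identity $\beta_\1 \circ i = \tilde{\alpha}^\#_\1$; you only announce that it must be checked, and then proceed ``with the intertwining in hand.'' Moreover, if one tries to carry out that check against the formula for $\beta$ displayed in the paper, it fails: that formula gives
\[
\beta_\1(i(u)) \;=\; \sum_{x \in X,\, a \in A,\, y \in X} u(x)\,\alpha(x)(a,y) \;=\; \sum_{x\in X} u(x)\bigl(1-\alpha(x)(*)\bigr),
\]
whereas $\tilde{\alpha}^\#_\1(u) = \sum_{x\in X} u(x)\,\tilde{\alpha}_\1(x) = \sum_{x\in X} u(x)$; these differ as soon as some state in the support of $u$ has positive termination probability. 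The way out is to recognize that the paper's displayed first component of $\beta$ is missing the termination term $\sum_{x\in X} u(x)\alpha(x)(*)$: the intended $\beta_\1$ is the linear extension of the total-mass functional $u \mapsto \sum_{x\in X} u(x)$, i.e.\ the unique linear map agreeing with $\tilde{\alpha}^\#_\1$ on $DX$. This reading is confirmed by the paper's own worked example, where the row vector is $L_\1 = \begin{pmatrix} 1 & 1 & 1 & 1 \end{pmatrix}$ even though the states there terminate with probabilities $1/3$, $2/3$, $1/3$, $0$. Once $\beta_\1$ is read this way, the first-output identity is immediate and your induction (base case the two output identities, inductive step the transition identity) does close the proof; but as written, your argument leaves its self-declared central step unestablished, and taken literally against the paper's definition that step is false.
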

\noindent Now we omit writing $i$ and $j$. The following corollary allows us to work with bisimulation up-to congruence on $\R_\omega^X$, as defined below.
\begin{cor} For every $x,y \in X$, $\ll x \rr = \ll y \rr$ iff $\varphi_\beta (\delta_x) = \varphi_\beta(\delta_y)$. \end{cor}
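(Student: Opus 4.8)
The plan is to prove the stated equivalence as a short chain of three equivalences, each supplied by a result already established, so that no new computation is required. First I would unwind the definition $\ll - \rr = \llbracket - \rrbracket \circ \eta_X$, recalling that the unit of the probability monad sends a state to its point mass, so that $\eta_X(x) = \delta_x \in DX$. This gives $\ll x \rr = \llbracket \delta_x \rrbracket$ and $\ll y \rr = \llbracket \delta_y \rrbracket$, which turns the left-hand side of the corollary into the condition $\llbracket \delta_x \rrbracket = \llbracket \delta_y \rrbracket$ in $\M(A^\infty)$.

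Next I would invoke the Remark following Proposition \ref{keybisd}, instantiated at the coalgebra $\beta = \tilde{\alpha}^\#$. Since $\tilde{\alpha}^\#$ was shown to satisfy condition $(ii)$ of that proposition, the factorization $\varphi_{\tilde{\alpha}^\#} = \varphi_\Pi \circ \llbracket - \rrbracket$ holds, and the injectivity of $\varphi_\Pi$ yields, for all $u,v \in DX$,
\[ \llbracket u \rrbracket = \llbracket v \rrbracket \iff \varphi_{\tilde{\alpha}^\#}(u) = \varphi_{\tilde{\alpha}^\#}(v). \]
Specializing to $u = \delta_x$ and $v = \delta_y$ converts $\llbracket \delta_x \rrbracket = \llbracket \delta_y \rrbracket$ into $\varphi_{\tilde{\alpha}^\#}(\delta_x) = \varphi_{\tilde{\alpha}^\#}(\delta_y)$.

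Finally I would pass from $\varphi_{\tilde{\alpha}^\#}$ to $\varphi_\beta$ using the preceding Lemma, which asserts $j \circ \varphi_{\tilde{\alpha}^\#} = \varphi_\beta \circ i$. The injection $j : (\I \times \I)^{A^*} \to (\R \times \R)^{A^*}$ is injective, and the injection $i$ identifies $\delta_x \in DX$ with $\delta_x \in \R_\omega^X$; applying $j$ to both sides of the previous equality and rewriting with the Lemma gives
\[ \varphi_{\tilde{\alpha}^\#}(\delta_x) = \varphi_{\tilde{\alpha}^\#}(\delta_y) \iff \varphi_\beta(\delta_x) = \varphi_\beta(\delta_y). \]
Concatenating the three equivalences proves the corollary.

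The argument is essentially bookkeeping rather than containing a genuine difficulty, so the main point to get right is the careful handling of the two injections and the overloaded notation $\varphi_\beta$. Concretely, one must keep $\varphi_{\tilde{\alpha}^\#}$ (the final morphism of the determinized coalgebra) distinct from the final morphism $\varphi_\beta$ of the vector-space coalgebra $\beta : \R_\omega^X \to F\R_\omega^X$, confirm that both $\varphi_\Pi$ and $j$ are injective, and make explicit the identification $i(\delta_x) = \delta_x$ that is used implicitly once the paper "omits writing $i$ and $j$". Once these identifications are spelled out, the equivalence chain closes without further effort.
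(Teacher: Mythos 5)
Your proof is correct and matches the paper's intended argument: the paper states this corollary without proof, as an immediate consequence of the preceding Lemma ($j \circ \varphi_{\tilde{\alpha}^\#} = \varphi_\beta \circ i$) combined with the Remark following Proposition \ref{keybisd} (injectivity of $\varphi_\Pi$ plus the factorization, applied to $\tilde{\alpha}^\#$, which was shown to satisfy condition $(ii)$), and that is exactly the chain of equivalences you assemble. Your explicit handling of the injectivity of $j$ and of the identification $i(\delta_x) = \delta_x$ fills in precisely the bookkeeping the paper leaves implicit.
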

\noindent Let $R \subseteq \mathbb{R}_\omega^X \times \mathbb{R}_\omega^X$. Its congruence closure $c(R)$ is the least congruence relation that contains $R$, i.e., that satisfies
$$
\infer{(u,v) \in c(R)}{(u,v) \in R}
\qquad
\infer{(u,u) \in c(R)}{}
\qquad
\infer{(v,u) \in c(R)}{(u,v) \in c(R)}
\qquad
\infer{(u,w) \in c(R)}{(u,v) \in c(R) & (v,w) \in c(R)}
$$
$$
\infer[(\lambda \in \mathbb{R})]{(\lambda u,\lambda v) \in c(R)}{(u,v) \in c(R)} 
\qquad 
\infer{(u+u',v+v') \in c(R)}{(u,u') \in c(R) & (v,v') \in c(R)}
$$

\begin{lem}\label{comp}
The functions $a : R \mapsto \{z + z' \mid z \in R, z' \in R\}$ and $m_\lambda : R \mapsto \{\lambda z \mid z \in R \}$ are compatible with $b_\beta$ for all $\lambda \in \mathbb{R}$.
\end{lem}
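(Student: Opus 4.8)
The plan is to reduce compatibility to two ingredients that are already available: the explicit description of $b_\beta$ for coalgebras of the machine functor, and the linearity of the components $\beta_\1$, $\beta_\0$ and $\tau_a$ noted just before the statement. Recall that, since $F = F_{\R \times \R}$, the relation $b_\beta(R)$ consists exactly of those pairs $(u,v) \in \R_\omega^X \times \R_\omega^X$ with $\beta_\1(u) = \beta_\1(v)$, $\beta_\0(u) = \beta_\0(v)$, and $(\tau_a(u), \tau_a(v)) \in R$ for every $a \in A$. Both $a$ and $m_\lambda$ are treated in the same manner; I would spell out $a$ in full and indicate that $m_\lambda$ is entirely analogous.

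First I would check monotonicity, which is immediate: if $R \subseteq S$, then every sum $z + z'$ (resp. scalar multiple $\lambda z$) of elements of $R$ is also a sum (resp. scalar multiple) of elements of $S$, so $a(R) \subseteq a(S)$ and $m_\lambda(R) \subseteq m_\lambda(S)$.

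Then I would verify the defining implication $R \subseteq b_\beta(R') \Rightarrow g(R) \subseteq b_\beta(g(R'))$. Assume $R \subseteq b_\beta(R')$ and pick an element $(u + u', v + v')$ of $a(R)$, arising from pairs $(u,v), (u',v') \in R$. Applying the hypothesis to both pairs yields $\beta_\1(u) = \beta_\1(v)$, $\beta_\1(u') = \beta_\1(v')$, the corresponding equalities for $\beta_\0$, and $(\tau_a(u), \tau_a(v)), (\tau_a(u'), \tau_a(v')) \in R'$ for all $a \in A$. Linearity then does all the work: $\beta_\1(u + u') = \beta_\1(u) + \beta_\1(u') = \beta_\1(v) + \beta_\1(v') = \beta_\1(v + v')$, and likewise for $\beta_\0$; moreover $\tau_a(u + u') = \tau_a(u) + \tau_a(u')$ and $\tau_a(v + v') = \tau_a(v) + \tau_a(v')$, so $(\tau_a(u + u'), \tau_a(v + v'))$ is a sum of two elements of $R'$, hence lies in $a(R')$. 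This is precisely the statement that $(u + u', v + v') \in b_\beta(a(R'))$. For $m_\lambda$ the same computation uses $\beta_\1(\lambda u) = \lambda \beta_\1(u)$ and $\tau_a(\lambda u) = \lambda \tau_a(u)$ in place of additivity.

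The argument is essentially routine; the only point requiring care is to match the algebraic operations on vectors of $\R_\omega^X$ with the operations on relations that define $a$ and $m_\lambda$, and to observe that linearity of all three components $\beta_\1$, $\beta_\0$, $\tau_a$ is exactly what is needed so that the two output equalities and the transition memberships are preserved simultaneously. No difficulty is expected from the finite-support condition, since sums and scalar multiples of finitely supported functions remain finitely supported.
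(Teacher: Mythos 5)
Your proof is correct and follows essentially the same route as the paper: unfold the explicit characterization of $b_\beta$ for the machine functor and use linearity of $\beta_\1$, $\beta_\0$ and each $\tau_a$ to push the equalities and the transition memberships through sums and scalar multiples. The only (minor) difference is that you also verify monotonicity explicitly, which the paper's proof leaves implicit.
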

\begin{proof}
Let $\lambda \in\mathbb{R}$. If $R \subseteq b_\beta (R')$ and we take $(u',v')\in m_\lambda(R)$ then there exists $(u,v)\in R$ such that $(u',v')=(\lambda u ,\lambda v)$. We have $(u,v)\in b_\beta (R')$ so  $o(u') = o(\lambda u) = \lambda o(u) = \lambda o(v) = o(\lambda v) = o(v')$. Given $a\in A$ we have $(\tau_a(u),\tau_a(v))\in R'$ so $(\tau_a(u'),\tau_a(v')) = (\lambda \tau_a(u),\lambda \tau_a(v)) \in m_\lambda(R')$. Thus $m_\lambda(R)\subseteq b_\beta (m_\lambda(R'))$.\\
Let $(u'',v'') \in a(R)$, so $(u'',v'') = (u+u',v+v')$ where $(u,v) \in R$ and $(u',v') \in R$. So $o(u'') = o(u) + o(u') = o(v) + o(v') = o(v'')$ and for all $a \in A$, $(\tau_a(u''),\tau_a(v'')) = (\tau_a(u) + \tau_a(u'),\tau_a(v) + \tau_a(v')) \in a(R')$. Thus $a(R) \subseteq b_\beta (a(R'))$.
\end{proof}
\noindent According to Lemmas  \ref{comp1} and \ref{comp}, the function $c : \mathcal{P}(\R_\omega^X \times \R_\omega^X) \to \mathcal{P}(\R_\omega^X \times \R_\omega^X)$ is compatible with $b_\beta$, because $c = \left(id \cup r \cup s \cup t \cup a \cup \bigcup_{\lambda \in \mathbb{R}} m_\lambda\right)^\omega$. Thus, if $u,v \in \R_\omega^X$ are related by a bisimulation up-to congruence, they are bisimilar, and this yields $\varphi_\beta(u) = \varphi_\beta(v)$. In our case, when $u = \delta_x$ and $v = \delta_y$ for some $x,y \in X$, this further yields $\ll x \rr = \ll y \rr$. The following algorithm $\texttt{HKC}^\infty$ computes the smallest bisimulation that relates $x$ and $y$, hence it computes whether $\ll x \rr = \ll y \rr$ or not.

\begin{center} $\texttt{HKC}^\infty(x,y)$ \end{center}
\texttt{(1) $\textit{R} := \emptyset$; $\textit{todo} := \emptyset$\\
(2) insert $(\delta_x,\delta_y)$ into $\textit{todo}$\\
(3) \color{purple} while \color{black} $\textit{todo}$ is not empty \color{purple}do\color{black}\\
\indent (3.1) extract $(u,v)$ from $\textit{todo}$\\
\indent (3.2) \color{purple}if \color{black} $(u,v)\in c(\textit{R})$ \color{purple}then \color{brown} continue \color{black}\\
\indent (3.3) \color{purple}if \color{black} $\beta_\1(u) \neq \beta_\1(v)$ \color{purple}then \color{brown} return \color{blue}\textit{false} \color{black}\\
\indent (3.3') \color{purple}if \color{black} $\beta_\0(u) \neq \beta_\0(v)$ \color{purple}then \color{brown} return \color{blue}\textit{false} \color{black}\\
\indent (3.4) \color{purple}for all \color{black}$a\in A$, insert $(\tau_a(u), \tau_a(v))$ into $\textit{todo}$\\
\indent (3.5) insert $(u,v)$ into $\textit{R}$\\
(4) \color{brown}return \color{blue} \textit{true}}

\begin{theo}[inspired from theorem 4.3 in \cite{Bonchi17}]\label{correctness} Whenever $\texttt{HKC}^\infty(x,y)$ terminates, it returns \texttt{true} iff $\ll x \rr = \ll y \rr$.
\end{theo}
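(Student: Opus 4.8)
The plan is to adapt the standard correctness argument for \texttt{HKC}-style procedures, using the preceding corollary to replace the goal $\ll x \rr = \ll y \rr$ by the equivalent condition $\varphi_\beta(\delta_x) = \varphi_\beta(\delta_y)$. The backbone is a structural invariant: writing $\tau_\varepsilon = id$ and $\tau_{wa} = \tau_a \circ \tau_w$, every pair ever inserted into $\textit{todo}$ (and hence every extracted pair) has the form $(\tau_w(\delta_x), \tau_w(\delta_y))$ for some $w \in A^*$. This is proved by induction on the run: the initial pair is $(\tau_\varepsilon(\delta_x), \tau_\varepsilon(\delta_y))$, and line (3.4) turns $(\tau_w(\delta_x), \tau_w(\delta_y))$ into the pairs $(\tau_{wa}(\delta_x), \tau_{wa}(\delta_y))$. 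I also recall from Proposition \ref{final coalgebra} that the final morphism satisfies $\varphi_\beta(u)(w) = \langle \beta_\1, \beta_\0 \rangle(\tau_w(u))$.

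For the implication ``returns \texttt{false} $\Rightarrow \ll x \rr \neq \ll y \rr$'', a \texttt{false} output originates at (3.3) or (3.3') from an extracted pair, which by the invariant is $(\tau_w(\delta_x), \tau_w(\delta_y))$, with $\beta_\1(\tau_w(\delta_x)) \neq \beta_\1(\tau_w(\delta_y))$ or $\beta_\0(\tau_w(\delta_x)) \neq \beta_\0(\tau_w(\delta_y))$. Either way $\langle \beta_\1, \beta_\0 \rangle(\tau_w(\delta_x)) \neq \langle \beta_\1, \beta_\0 \rangle(\tau_w(\delta_y))$, i.e. $\varphi_\beta(\delta_x)(w) \neq \varphi_\beta(\delta_y)(w)$; hence $\varphi_\beta(\delta_x) \neq \varphi_\beta(\delta_y)$ and the corollary forces $\ll x \rr \neq \ll y \rr$.

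For ``returns \texttt{true} $\Rightarrow \ll x \rr = \ll y \rr$'', let $R$ be the relation present when $\textit{todo}$ empties; I would show $R \subseteq b_\beta(c(R))$, that is, $R$ is a bisimulation up-to $c$. Each $(u,v) \in R$ was inserted at (3.5), so it survived (3.3) and (3.3'), giving $\beta_\1(u) = \beta_\1(v)$ and $\beta_\0(u) = \beta_\0(v)$; and each successor $(\tau_a(u), \tau_a(v))$ was enqueued at (3.4), so it was later extracted and then either found in $c(R')$ for the then-current $R' \subseteq R$ or inserted into $R$, whence $(\tau_a(u), \tau_a(v)) \in c(R)$ by monotonicity of $c$. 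By the explicit formula for $b_\beta$ this is exactly $R \subseteq b_\beta(c(R))$. Since $c$ is compatible with $b_\beta$ (Lemmas \ref{comp1} and \ref{comp}), Proposition \ref{contained} yields that $\bigcup_{n} c^n(R)$ is a genuine bisimulation containing $c(R)$. As $(\delta_x, \delta_y) \in c(R)$ — by reflexivity when $x = y$ and because $(\delta_x, \delta_y) \in R$ otherwise — we get $\delta_x \sim \delta_y$, so Lemma \ref{bisim} gives $\varphi_\beta(\delta_x) = \varphi_\beta(\delta_y)$ and the corollary concludes $\ll x \rr = \ll y \rr$.

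The main obstacle is the bookkeeping in the last paragraph rather than any deep new idea: one must track that $R$, the congruence closure, and the up-to test at (3.2) are evaluated at different moments of the execution, and argue — via monotonicity of $c$ together with $R' \subseteq R$ — that a successor skipped at an intermediate stage still lands in $c(R)$ for the \emph{final} $R$. Once $R \subseteq b_\beta(c(R))$ is secured, everything reduces to the already-established compatibility of $c$, Proposition \ref{contained}, Lemma \ref{bisim}, and the corollary linking $\varphi_\beta$ to $\ll - \rr$.
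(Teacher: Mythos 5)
Your proof is correct and follows essentially the same route as the paper's: both cases reduce the \texttt{true} branch to showing that the final relation $R$ satisfies $R \subseteq b_\beta(c(R))$ (the paper via the loop invariant $R \subseteq b_\beta(c(R) \cup \textit{todo})$, you via an equivalent history-tracking argument) and then invoke compatibility of $c$, Proposition \ref{contained}, Lemma \ref{bisim} and the corollary, while the \texttt{false} branch in both proofs uses that every extracted pair has the form $(\tau_w(\delta_x),\tau_w(\delta_y))$, so a failed output check yields $\varphi_\beta(\delta_x)(w) \neq \varphi_\beta(\delta_y)(w)$. If anything, your write-up is slightly more careful than the paper's, e.g.\ in noting that when $x=y$ the pair $(\delta_x,\delta_y)$ lies in $c(R)$ by reflexivity rather than in $R$ itself.
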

\begin{proof}
Observe that $R\subseteq b_\beta(c(R)\cup \text{\il{todo}})$ is an invariant for the while loop at step $\texttt{(3)}$. If $\texttt{HKC}^\infty$ returns $\texttt{true}$ then \texttt{todo} is empty and thus $R\subseteq b_\beta (c(R))$ so $R$ is a bisimulation up-to $c$ that contains $(\delta_x,\delta_y)$ and we already know that this yields $\ll x \rr = \ll y \rr$. If $\texttt{HKC}^\infty$ returns \texttt{false}, it encounters a pair $(u,v)$ such that $\beta_\diamond(u) \neq \beta_\diamond(v)$ for a certain $\diamond \in \{ \1, \0 \}$. There exists a word $w$ such that $u = \tau_w(\delta_x)$ and $v = \tau_w(\delta_y)$. Therefore $\varphi_{\beta}(\delta_x) (w) = \varphi_{\beta}(\tau_w(\delta_x)) (\varepsilon) = \beta_\diamond(u) \neq \beta_\diamond(v) = \varphi_{\beta} (\tau_w(\delta_y))(\varepsilon) = \varphi_{\beta}(\delta_y) (w)$ so $\varphi_\beta(\delta_x) \neq \varphi_\beta(\delta_y)$ and we know that this implies $\ll x \rr \neq \ll y \rr$.
\end{proof}
\noindent Despite the fact that during the determinization the state space always becomes infinite, the following results show that if the initial state space $X$ is finite, then $\texttt{HKC}^\infty$ does terminate.
\begin{theo}[see \cite{Boreale2009}]\label{submodule}
Let $\mathcal{R}$ be a ring and $X$ be a finite set. Let $R \subseteq \mathcal{R}^X \times \mathcal{R}^X$ be a relation and let $(v,v') \in \mathcal{R}^X \times \mathcal{R}^X$ be a pair of vectors. We construct a generating set for a submodule of $\mathcal{R}^X$ by defining $U_R = \{ u - u' \mid (u,u') \in R \}$. Then $(v,v') \in c(R)$ iff $v - v' \in [U_R]$.
\end{theo}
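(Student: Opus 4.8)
The plan is to establish the two implications separately, treating $c(R)$ through its defining inference rules and reading $[U_R]$ as the set of finite $\mathcal{R}$-linear combinations of elements of $U_R$.

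For the implication $(v,v') \in c(R) \Rightarrow v - v' \in [U_R]$, the cleanest route I would take avoids rule induction: I would introduce the relation $\equiv$ on $\mathcal{R}^X$ defined by $u \equiv u'$ iff $u - u' \in [U_R]$, and check directly that $\equiv$ is a congruence containing $R$. Reflexivity follows from $u - u = 0 \in [U_R]$; symmetry and transitivity from closure of the submodule $[U_R]$ under negation and addition; compatibility with scalar multiplication from $\lambda u - \lambda u' = \lambda(u-u')$ and with addition from $(u+w)-(u'+w') = (u-u')+(w-w')$; and $R \subseteq {\equiv}$ because $(u,u') \in R$ gives $u - u' \in U_R \subseteq [U_R]$. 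Since $c(R)$ is the \emph{least} congruence containing $R$, we obtain $c(R) \subseteq {\equiv}$, which is exactly this implication.

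For the converse, suppose $v - v' \in [U_R]$ and write $v - v' = \sum_{i=1}^n \lambda_i (u_i - u_i')$ with $\lambda_i \in \mathcal{R}$ and $(u_i, u_i') \in R$. Setting $s = \sum_{i=1}^n \lambda_i u_i$ and $s' = \sum_{i=1}^n \lambda_i u_i'$, the scalar rule applied to each $(u_i, u_i') \in R \subseteq c(R)$ gives $(\lambda_i u_i, \lambda_i u_i') \in c(R)$, and iterating the additive rule yields $(s,s') \in c(R)$. The key observation is then that $v - v' = s - s'$ forces $v - s = v' - s'$; calling this common vector $d$, reflexivity gives $(d,d) \in c(R)$, and one last application of the additive rule to $(s,s')$ and $(d,d)$ produces $(s+d, s'+d) = (v,v') \in c(R)$. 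The degenerate case $v = v'$ (empty combination) is covered directly by reflexivity.

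I expect the only non-routine step to be this last ``shift by a common vector'' manoeuvre: the linear combination constrains only the difference $v - v'$, not $v$ and $v'$ individually, so one cannot hope to build $(v,v')$ directly from the generators; the trick is to realize the difference as a pair $(s,s') \in c(R)$ and then translate both coordinates by $d = v - s$ using reflexivity together with the additive congruence rule. Everything else is the bookkeeping of checking that differences behave well under the congruence-generating operations. Note that neither the finiteness of $X$ nor any special feature of the ring $\mathcal{R}$ is needed for the equivalence itself; finiteness is what makes $[U_R]$ useful for the termination analysis of $\HKC^\infty$, but it plays no role in this characterization.
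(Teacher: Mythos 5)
Your proof is correct, but there is nothing in the paper to compare it against: Theorem~\ref{submodule} is imported from \cite{Boreale2009} and stated without proof, so what follows is an assessment of your argument on its own terms. Both directions are sound. The forward direction via the least-congruence characterization (the relation $u \equiv u'$ iff $u - u' \in [U_R]$ is a congruence containing $R$, hence contains $c(R)$) is clean and avoids rule induction entirely. The converse correctly isolates the one non-routine step: since a combination $\sum_{i} \lambda_i (u_i - u_i')$ constrains only the difference $v - v'$, you first manufacture a pair $(s,s') \in c(R)$ with $s - s' = v - v'$ using the scalar and additive rules, and then translate both components by the common vector $d = v - s = v' - s'$ via reflexivity and one more application of the additive rule; this is exactly the right manoeuvre, and you also cover the empty-combination case $v = v'$. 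Two peripheral remarks. First, reading $[U_R]$ as the set of finite $\mathcal{R}$-linear combinations of $U_R$ presumes $\mathcal{R}$ unital; this is the standard convention and certainly harmless for the intended application $\mathcal{R} = \mathbb{R}$. Second, the additive rule as printed in the paper, with premises $(u,u') \in c(R)$ and $(v,v') \in c(R)$ and conclusion $(u+u',v+v') \in c(R)$, is a typo: taken literally it would place $(0,2v)$ in $c(\emptyset)$ for every $v$ and falsify the theorem, so your tacit use of the intended rule (pairing left components with left components) is the correct reading. Your closing observation is also accurate: neither the finiteness of $X$ nor any feature of $\mathcal{R}$ beyond the module axioms enters the equivalence itself; finiteness matters only in Proposition~\ref{finite}, where the bounded rank of $\mathbb{R}^X$ forces the chain of generated subspaces to stabilize and hence makes the membership test $(v,v') \in c(R)$ effective for the termination of $\HKC^\infty$.
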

\begin{prop}\label{finite} If $X$ is finite, $\texttt{HKC}^\infty(x,y)$ terminates for every $x,y \in \mathbb{R}_\omega^X$.
\end{prop}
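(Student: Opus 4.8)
The plan is to show that although the determinized carrier $\R_\omega^X$ is infinite, the up-to congruence test at step $\texttt{(3.2)}$ can only be passed finitely often, so that the list $\textit{todo}$ is eventually exhausted. First I would note that since $X$ is finite we have $\R_\omega^X = \R^X$, a finite-dimensional vector space over the field $\R$ of dimension $|X|$; in particular every linear subspace of $\R^X$ has dimension at most $|X|$.

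The key step is to invoke Theorem \ref{submodule} with the ring $\mathcal{R} = \R$, in which case modules are vector spaces, submodules are linear subspaces, and $[U_R]$ is the linear span of $U_R = \{\, u - u' \mid (u,u') \in R \,\}$. Thus the test $(u,v) \in c(R)$ at step $\texttt{(3.2)}$ is equivalent to $u - v \in [U_R]$. I would then track the following invariant on the relation $R$ built by the algorithm: whenever a pair $(u,v)$ reaches step $\texttt{(3.5)}$ and is inserted into $R$, it has just failed the test at $\texttt{(3.2)}$, so $u - v \notin [U_R]$. Writing $R'$ for the updated relation, one has $U_{R'} = U_R \cup \{u-v\}$, and since $u-v$ lies outside the span of $U_R$ we get $\dim [U_{R'}] = \dim [U_R] + 1$. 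Hence each insertion into $R$ strictly increases the dimension of $[U_R]$.

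Because $\dim [U_R]$ is bounded above by $\dim \R^X = |X|$, at most $|X|$ pairs can ever be inserted into $R$; equivalently, at most $|X|$ iterations of the while loop reach step $\texttt{(3.4)}$. Each such iteration inserts $|A|$ new pairs into $\textit{todo}$, and the only other insertion is the single initial pair $(\delta_x,\delta_y)$ at step $\texttt{(2)}$. Therefore at most $1 + |A|\cdot|X|$ pairs are ever placed in $\textit{todo}$ over the whole run. Since each iteration of the while loop extracts exactly one pair from $\textit{todo}$, the number of iterations is bounded by the number of insertions, namely $1 + |A|\cdot|X|$; after that $\textit{todo}$ is empty and the algorithm returns $\texttt{true}$, unless it has already returned $\texttt{false}$ at step $\texttt{(3.3)}$ or $\texttt{(3.3')}$. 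In either case it terminates.

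The main obstacle is conceptual rather than computational: a priori the algorithm explores an uncountable state space, so termination is far from obvious. The crux is recognizing that up-to congruence collapses this to a finite-dimensional linear-algebra bound via Theorem \ref{submodule}, reducing termination to the fact that a strictly increasing chain of subspaces of $\R^X$ has length at most $|X|$. A small point to check carefully is that pairs with $u = v$ are already caught at step $\texttt{(3.2)}$, since $0 \in [U_R]$ for any $R$, so that every pair reaching step $\texttt{(3.5)}$ genuinely contributes a new dimension and the counting argument is not spoiled by reflexive pairs.
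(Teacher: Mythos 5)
Your proof is correct and takes essentially the same route as the paper: both arguments rest on Theorem \ref{submodule} with $\mathcal{R} = \R$, which turns the test at step $\texttt{(3.2)}$ into membership of $u-v$ in the span $[U_R]$, combined with the bound $\dim \R^X = |X| < \infty$. The only difference is presentational: the paper supposes an infinite run and derives a contradiction from stabilization of the rank sequence, whereas you argue directly that every insertion into $R$ strictly increases $\dim [U_R]$, which yields the explicit bound of at most $1 + |A|\cdot|X|$ loop iterations.
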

\begin{proof}
The set $X$ is finite so $\mathbb{R}_\omega^X = \mathbb{R}^X$. Suppose we have an infinite computation and let $(u_n,v_n)$ be the pair checked at step $n$. The rank of any family of vectors of $\mathbb{R}^X$ is bounded by $|X|<\infty$ and denoted by $\text{\il{rank}}$. When the stationary sequence $(\text{\il{rank}}\{(u_i-v_i)_{i\leq n}\})_{n\in\mathbb{N}}$ has reached its limit at step $N$, then the rest of the sequence is in the submodule of $\mathbb{R}^X$ generated by the first $(u_n-v_n)_{n\leq N}$. According to Theorem \ref{submodule} with $\mathcal{R} = \R$, this means that the $(u_n,v_n)_{n\geq N+1}$ are in the congruence closure of the relation $\{(u_n,v_n)\mid n \leq N\}$. So at each step the test $\texttt{(3.2)}$ succeeds and the cardinality of $\texttt{todo}$ is decreased by $1$. This cardinality is finite as step $N$ so the number of following steps is finite too, yielding a contradiction.
\end{proof}

\begin{theo}[Summary]
For any $\alpha : X \rightarrow P L X$ with $X$ finite, and any $x,y\in X$, $\texttt{HKC}^\infty(x,y)$ terminates and returns $\texttt{true}$ iff $\ll x \rr = \ll y \rr$.
\end{theo}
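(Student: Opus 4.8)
The plan is to read the statement as the conjunction of two facts already established earlier in this section --- termination and conditional correctness --- and to check only that their hypotheses match up. First I would invoke Proposition \ref{finite} to secure termination. The algorithm $\texttt{HKC}^\infty(x,y)$ starts by inserting the pair $(\delta_x,\delta_y)$ into \textit{todo}, and these Dirac vectors live in $D X \subseteq \R_\omega^X$. Since $X$ is finite by hypothesis, Proposition \ref{finite} applies directly and guarantees that the while loop terminates after finitely many iterations.

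Once termination is in hand, the second step is immediate. Theorem \ref{correctness} asserts precisely that \emph{whenever} $\texttt{HKC}^\infty(x,y)$ halts, the boolean it returns coincides with the truth value of the predicate $\ll x \rr = \ll y \rr$. Chaining the two results then delivers both clauses of the summary simultaneously: the algorithm halts, and the value it outputs is correct, so it returns \texttt{true} iff $\ll x \rr = \ll y \rr$.

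The one place where I would take care is the alignment of domains: Proposition \ref{finite} is stated for arbitrary seed vectors in $\R_\omega^X$, whereas the summary concerns $x,y \in X$, so I would note explicitly that the relevant initial pair $(\delta_x,\delta_y)$ is a special case obtained through the embedding $x \mapsto \delta_x$. The termination argument itself rests on the rank bound $\mathrm{rank} \leq |X| < \infty$ in $\R^X$ together with Theorem \ref{submodule}, neither of which depends on the seeds being Diracs. Because both ingredients are already proved, I do not anticipate a genuine obstacle here; the proof is a brief assembly rather than a new argument.
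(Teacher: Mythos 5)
Your proposal is correct and takes exactly the approach the paper intends: the ``Summary'' theorem is stated without proof precisely because it is nothing more than the conjunction of Theorem \ref{correctness} (conditional correctness) and Proposition \ref{finite} (termination when $X$ is finite), which is the assembly you perform. Your added remark about instantiating Proposition \ref{finite} at the seed pair $(\delta_x,\delta_y)$ is a reasonable point of care that the paper itself glosses over, and it introduces no gap.
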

\subsubsection*{Basic example of $\texttt{HKC}^\infty$}
To begin with, here is a very simple PTS that demonstrates the power of bisimulation up-to congruence.
\begin{center}
\begin{tikzpicture}[->,>=stealth',shorten >=1pt,auto,node distance=2.8cm,
                    semithick]
  \tikzstyle{every state}=[fill=none,draw=black,text=black]

  \node[state](X)   				{$x$};
  \node[state,accepting](F) [right of=X] {$*$};
  \node[state](Y) [right of=F]			{$y$};

  \path (X) edge [loop left] node {a,1/2} (X)
		(X) edge 			 node {1/2} (F)
		(Y) edge [loop right] node {a,1/2} (Y)
		(Y) edge			 node {1/2} (F);
  
\end{tikzpicture}
\end{center}
If we try to compute \texttt{HK($\delta_x, \delta_y$)} on the determinized automaton, the algorithm encounters the pairs $(\delta_x/2^k, \delta_y/2^k),k\geq 0$ and never stops. On the other hand, $\texttt{HKC}^\infty(x,y)$ stops after one step because it has immediately spotted that $(\delta_x/2, \delta_y/2)$ is in the congruence closure of the relation $\{(\delta_x, \delta_y)\}$.
\subsubsection*{Example of $\texttt{HKC}^\infty$ with bisimulation up-to as an accelerant}
Let us compute $\texttt{HKC}^\infty$ to know if in the following automaton the states $x$ and $z$ generate the same probability measure. Here $A = \{ a \}$.
\begin{center}
\begin{tikzpicture}[->,>=stealth',shorten >=1pt,auto,node distance=2.8cm,
                    semithick]
  \tikzstyle{every state}=[fill=none,draw=black,text=black]

  \node[state](X)                    {$x$};
  \node[state](Y) 		[right of=X] {$y$};
  \node[state](Z)		[right of=Y] {$z$};
  \node[state,accepting](F) [below of=Y] {$*$};
  \node[state](I)		[above of=Y]{$i$};

  \path (X) edge              node {a,1/6} (Y)
		(X) edge              node {1/3} (F)
		(X) edge              node {a,1/2} (I)
 		(Y) edge [loop right] node {a,1/3} (Y)
 		(Y) edge              node {2/3} (F)
 		(Z) edge [loop right] node {a,1/3} (Z)
 		(Z) edge 	          node {1/3} (F)
 		(Z) edge              node {a,1/3} (I)
 		(I) edge [loop right] node {a,1} (I);
  
\end{tikzpicture}
\end{center}
Because $X$ is finite, $\R_\omega^X$ has a basis $(e_x,e_y,e_z,e_i)$. An element $u \in \R_\omega^X$ is seen a column vector $ u_x e_x + u_y e_y + u_z e_z + u_i e_i$ in this basis. Moreover $\tilde{\alpha}_\1^\#$ and $\tilde{\alpha}_\0^\#$ are linear forms that can be seen as the row vectors $L_\1 = \begin{pmatrix} 1 & 1 & 1 & 1 \end{pmatrix}$ and $L_\0 = \begin{pmatrix} 1/3 & 2/3 & 1/3 & 0 \end{pmatrix}$, and $\tau_a$ is an endomorphism with a transition matrix $M_a$ defined by $(M_a)_{j,k} = t_a(k)(j)$.
\begin{align*}
u = \begin{pmatrix} u_x\\u_y\\u_z\\u_i\end{pmatrix} & & M_a = \begin{pmatrix} 0 & 0 & 0 & 0 \\ 1/6 & 1/3 & 0 & 0 \\ 0 & 0 & 1/3 & 0 \\ 1/2 & 0 & 1/3 & 1 \end{pmatrix} & & L = \begin{pmatrix} L_\1 \\ L_\0 \end{pmatrix} 
\end{align*}
The algorithm begins with $\texttt{todo} = \{ (\eta_X(x),\eta_X(z)) \} = \{ (e_x,e_z) \}$ and $\texttt{R} = \emptyset$. It checks that $L e_x = L e_z$, etc. as shown in the following table.

\begin{center}
\scalebox{0.8}{
\begin{tabular}{|p{1.1cm}|p{2.8cm}|p{2cm}|p{2.5cm}|p{3.2cm}|p{1.8cm}|}
  \hline
  \centering \texttt{Step} & \centering \texttt{(3.1)} & \centering \texttt{(3.2)} & \centering \texttt{(3.3)} & \centering \texttt{(3.4)} & \centering \texttt{(3.5)}
  \tabularnewline
  \hline

  \centering  \gr{Loop counter} & \centering $(u,v)$ extracted from \texttt{todo} & \centering Check $(u,v) \in c(R)$ & \centering Check $Lu = Lv$ & \centering $(M_a u , M_a v)$ added to $\texttt{todo}$ & \centering Cardinality of $R$
  \tabularnewline
  \hline

  \centering \gr{1} & \centering $( \begin{pmatrix} 1 \\ 0 \\ 0 \\ 0 \end{pmatrix}, \begin{pmatrix} 0 \\ 0 \\ 1 \\ 0 \end{pmatrix} )$ & \centering Fail & \centering $\begin{pmatrix} 1 \\ 1/3 \end{pmatrix} = \begin{pmatrix} 1 \\ 1/3 \end{pmatrix}$ & \centering $( \begin{pmatrix} 0 \\ 1/6 \\ 0 \\ 1/2 \end{pmatrix} , \begin{pmatrix} 0 \\ 0 \\ 1/3 \\ 1/3 \end{pmatrix} )$  & \centering $1$  \tabularnewline \hline

  \centering \gr{2} & \centering $( \begin{pmatrix} 0 \\ 1/6 \\ 0 \\ 1/2 \end{pmatrix} , \begin{pmatrix} 0 \\ 0 \\ 1/3 \\ 1/3 \end{pmatrix} )$ & \centering Fail & \centering $\begin{pmatrix} 2/3 \\ 1/9 \end{pmatrix} = \begin{pmatrix} 2/3 \\ 1/9 \end{pmatrix}$  & \centering $( \begin{pmatrix} 0 \\ 1/18 \\ 0 \\ 1/2 \end{pmatrix} , \begin{pmatrix} 0 \\ 0 \\ 1/9 \\ 4/9 \end{pmatrix})$  & \centering $2$ \tabularnewline
  \hline
  
\centering \gr{3}
& \centering $( \begin{pmatrix} 0 \\ 1/18 \\ 0 \\ 1/2 \end{pmatrix} , \begin{pmatrix} 0 \\ 0 \\ 1/9 \\ 4/9 \end{pmatrix})$
& \centering Success & \centering / & \centering / & \centering 2 \tabularnewline \hline

\centering \gr{4}
& \centering Empty & \centering / & \centering / & \centering / & \centering / \tabularnewline \hline

\end{tabular}
}
\end{center}
The check succeeds in loop $3$ because $(u,v) \in c(R)$ according to theorem \ref{submodule}:
\[ \begin{pmatrix} 0 \\ 1/18 \\ 0 \\ 1/2 \end{pmatrix} - \begin{pmatrix} 0 \\ 0 \\ 1/9 \\ 4/9 \end{pmatrix} = \begin{pmatrix} 0 \\ 1/18 \\ -1/9 \\ 1/18 \end{pmatrix} = \frac{1}{3} \begin{pmatrix} 0 \\ 1/6 \\ -1/3 \\ 1/6 \end{pmatrix} = \frac{1}{3} \left( \begin{pmatrix} 0 \\ 1/6 \\ 0 \\ 1/2 \end{pmatrix} - \begin{pmatrix} 0 \\ 0 \\ 1/3 \\ 1/3 \end{pmatrix} \right) \]
Because $\texttt{todo}$ is eventually empty, the algorithm returns $\texttt{true}$. Indeed, if we compute directly the measures $\ll x \rr$ and $\ll z \rr$, we can see that $\ll x \rr (a^n) = 1/3^{n+1}$, $\ll x \rr (a^\omega) = 1/2$ and same for $\ll z \rr$.
Here the bisimulation up to congruence check is necessary for termination. See what is happening in the determinized infinite automaton.
\begin{center}
\scalebox{0.84}{
\begin{tikzpicture}[->,>=stealth',shorten >=1pt,auto,node distance=2.8cm,
                    semithick]
  \tikzstyle{every state}=[fill=none,draw=none,text=black]

  \node[state](U1)   {$\begin{pmatrix} 1 \\ 0 \\ 0 \\ 0 \end{pmatrix}$};
  \node[state](V1) [below of=U1] {$\begin{pmatrix} 0 \\ 0 \\ 1 \\ 0 \end{pmatrix}$};
  \node[state](U2) [right of=U1]{$\begin{pmatrix} 0 \\ 1/6 \\ 0 \\ 1/2 \end{pmatrix}$};
  \node[state](V2) [right of=V1]{$\begin{pmatrix} 0 \\ 0 \\ 1/3 \\ 1/3 \end{pmatrix}$};
  \node[state](U3) [right of=U2]{$\begin{pmatrix} 0 \\ 1/18 \\ 0 \\ 1/2 \end{pmatrix}$};
  \node[state](V3) [right of=V2]{$\begin{pmatrix} 0 \\ 0 \\ 1/9 \\ 4/9 \end{pmatrix}$};
  \node[state](U4) [right of=U3]{$...$};
  \node[state](V4) [right of=V3]{$...$};
  \node[state](U5) [right of=U4]{$\begin{pmatrix} 0 \\ 1/(2 \times 3^n) \\ 0 \\ 1/2 \end{pmatrix}$};
  \node[state](V5) [right of=V4]{$\begin{pmatrix} 0 \\ 0 \\ 1/3^n \\ (1-3^{-n})/2 \end{pmatrix}$};
  \node[state](U6) [right of=U5]{...};
  \node[state](V6) [right of=V5]{...};

  \path (U1) edge node {a} (U2)
		(V1) edge node {a} (V2)
		(U2) edge node {a} (U3)
		(V2) edge node {a} (V3)
		(U3) edge node {a} (U4)
		(V3) edge node {a} (V4)
		(U4) edge node {a} (U5)
		(V4) edge node {a} (V5)
		(U5) edge node {a} (U6)
		(V5) edge node {a} (V6);
 \draw[-,dashed] (U1) -- (V1) (U2) -- (V2);
 \draw[-,dotted] (U3) -- (V3) (U5) -- (V5);
\end{tikzpicture}}
\end{center}
The construction of the bisimulation up to equivalence (dashed + dotted lines) would take an infinite number of steps. But the construction of the bisimulation up to congruence (dashed lines) takes only $2$ steps.
\subsubsection*{Example of $\texttt{HKC}^\infty$ where infinite traces matter}
Get back to the example of the introduction, taking $A = \{a,b\}$.
\begin{center}
\begin{tikzpicture}[->,>=stealth',shorten >=1pt,auto,node distance=2.8cm,
                    semithick]
  \tikzstyle{every state}=[fill=none,draw=black,text=black]

  \node[state](A)                    {$y$};
  \node(E) [right of=A] {};
  \node[state](B) [right of=E]		{$z$};
  \path (A) edge  [loop left] node {a,1/2} (A)
  		(A) edge  [loop right] node {b,1/2} (A)
  		(B) edge  [loop left] node {a,3/4} (B)
  		(B) edge  [loop right] node {b,1/4} (B);
  		
\end{tikzpicture}
\end{center}
Take $(e_y,e_z)$ as a basis of $\R^X$. Then
\begin{align*}
u = \begin{pmatrix} u_y \\ u_z \end{pmatrix} & & M_a = \begin{pmatrix} 1/2 & 0 \\ 0 & 3/4 \end{pmatrix} & & M_b = \begin{pmatrix} 1/2 & 0 \\ 0 & 1/4 \end{pmatrix} & & L = \begin{pmatrix} 1 & 1 \\ 0 & 0 \end{pmatrix}
\end{align*}
In the first loop, everything is fine because $L \begin{pmatrix} 1 \\ 0 \end{pmatrix} = \begin{pmatrix} 1 \\ 0 \end{pmatrix} = L \begin{pmatrix} 0 \\ 1 \end{pmatrix}$. Then $\texttt{todo} = \left\{ \left(\begin{pmatrix} 1/2 \\ 0 \end{pmatrix}, \begin{pmatrix} 0 \\ 3/4 \end{pmatrix}\right) , \left( \begin{pmatrix} 1/2 \\ 0 \end{pmatrix} , \begin{pmatrix} 0 \\ 1/4 \end{pmatrix} \right) \right\}$ so both possible next equality checks fail: $\begin{pmatrix} 1/2 \\ 0 \end{pmatrix} \neq \begin{pmatrix} 3/4 \\ 0 \end{pmatrix}$ and $\begin{pmatrix} 1/2 \\ 0 \end{pmatrix} \neq \begin{pmatrix} 1/4 \\ 0 \end{pmatrix}$.  Thus, $\texttt{HKC}^\infty$ returns $\texttt{false}$. Remark that not caring about the first line of $L$ amounts to use $\texttt{HKC}$, which returns $\texttt{true}$ because it does not take into account infinite words.

\section{General case}
In this section, we generalize the trace semantics previously defined. We work with continuous PTS, defined later as coalgebras for the analogue of functor $PL$ in the category $\gr{Meas}$. The underlying distributive law is brought to light, so that the origin of the determinization process is better understood. The following table sums up the analogies and differences with the discrete case.
\begin{small}
\begin{center}
\begin{tabular}{c|c|c}
& Discrete case & General case \\
\hline
Category & $\gr{Sets}$ & $\gr{Meas}$ \\
Usual operation & $\sum$ & $\int$ \\
Language functor & $L$ & Measurable version of $L$ \\
Machine functor & $F$ & Measurable version of $F$ \\
Probability monad & Probability monad $P$ & Giry's monad $\P$ \\
Determinization monad & Sub-probability monad $D$ & Sub-Giry's monad $\D$ \\
Initial state space & Set $X$ & Measurable space $(X,\Sigma_X)$ \\
Determinized state & Finitely supported vector & Finite measure ($\leq 1$) \\
Transitions & Matrix $t_a : X \times X \to \I$ & Kernel $t_a : X \times \Sigma_X \to \I$ \\
Final $F$-coalgebra & $\omega$ & Measurable version of $\omega$ \\
Measure coalgebra & $\Pi$ & Measurable version of $\Pi$ \\
Pseudo-final morphism & $\llbracket - \rrbracket : DX \to \M (A^\infty)$ & $\llbracket - \rrbracket : \D X \to \D A^\infty$
\end{tabular}
\end{center}
\end{small}

\noindent In this section we work in the category $\gr{Meas}$ of measurable spaces and functions. It is easy to adapt functors $L$ and $F$, but considering the monads we will need some additional measure-theoretic background.\\\\
Given measurable spaces $X,Y$ and a measurable function $f : X \to Y$, define $LX = A \times X + 1$ along with its $\sigma$-algebra $\Sigma_{LX} = \mathcal{P}(A) \otimes \Sigma_X \oplus \mathcal{P}(1)$, and $Lf = id_A \times f + id_1$. Moreover, define $FX = \I \times \I \times X^A$ along with its $\sigma$-algebra $\mathcal{B}(\I) \otimes \mathcal{B}(\I) \otimes \bigotimes_{a\in A} \Sigma_X$ and $Ff = id_\I \times id_\I \times f^A$.

\subsubsection*{Integration}
Let $(X,\Sigma_X,m)$ be a measure space and $f : X \to \mathbb{R}$ be a measurable function. If $f(X) = \{ \alpha_1,...\alpha_n \}$ for some $\alpha_1,...\alpha_n \in \mathbb{R}_+$, then $f$ is called a \il{simple} function and its integral can be defined as $\int_X f dm = \sum_{i=1}^n \alpha_i m(f^{-1}(\{ \alpha_i \}))$. If $f\geq 0$ pointwise, define $\int_X f dm = \sup \left\{ \int_X g dm \mid g \leq f \text{, $g$ simple}\right\} \in [0,\infty]$. Finally, for any $f : X \to \mathbb{R}$, decompose $f = f^+ - f^-$ where $f^+ \geq 0$ and $f^- \geq 0$. If their integrals are not both $\infty$, define $\int_X f dm = \int_X f^+ dm - \int_X f^- dm$. If this is finite, we say that $f$ is $m$-integrable. Furthermore, for any $S \in \Sigma_X$, the indicator function $\gr{1}_S$ is measurable and we define $\int_S f dm = \int_X \gr{1}_S f dm$.\\\\
Given any measurable function $g : X \to Y$ and any measure $m : \Sigma_X \to \mathbb{R}_+$, the \il{image measure} of $m$ by $g$ is $m \circ g^{-1}$. For any measurable $f : Y \to \mathbb{R}$, $f$ is $m \circ g^{-1}$-integrable iff $f \circ g$ is $m$-integrable and in this case, $\int_Y f d(m\circ g^{-1}) = \int_X (f \circ g) dm$.\\\\
Actually, each positive measurable function $X \to \mathbb{R_+}$ is the pointwise limit of an increasing sequence of simple functions. In order to prove some property for every positive measurable function, one can prove it for simple functions (or for indicator functions, if it is preserved by linear combinations) and show that the property is preserved when taking the limit. Many such proofs use the monotone convergence theorem, which states that if $(f_n)_{n\in\N}$ is an increasing sequence of positive functions with pointwise limit $f$, then $f$ is measurable and $\int_X f dm = \lim \int_X f_n dm$.
\subsubsection*{The Giry monad}
The Giry monad \cite{Giry82} provides a link between probability theory and category theory.
\begin{defi}
In $\gr{Meas}$, the Giry monad $(\P,\eta,\mu)$ is defined as follows. For any measurable space $X$, $\P X$ is the space of probability measures over $(X,\Sigma_X)$, and $\Sigma_{\P X}$ is the $\sigma$-algebra generated by the functions $e^X_S : \P X \to \I$ defined by $e^X_S(m) = m(S)$. For any measurable function $g : X \to Y$, $(\P g)(m) = m \circ g^{-1}$.\\
The unit is defined by $\eta_X(x)(S) = \gr{1}_S(x)$ and the multiplication by $\mu_X(\Phi)(S) = \int_{\P X} e_S^X d\Phi$.
\end{defi}
\noindent In the same way, one can define the sub-Giry monad $(\D,\eta,\mu)$. The only difference is that $\D X$ is then the space of sub-probability measures over $(X,\Sigma_X)$. There is a natural transformation $\iota : \P \Rightarrow \D$ that comes from the inclusion : $\iota_X(m)=m$.

\subsection{Trace semantics via determinization}
The aim of this section is to define trace semantics for continuous PTS, i.e., coalgebras of the form $\alpha : X \to \P L X$. We proceed in the same way as for discrete systems.
\begin{enumerate}
\item[{\color{blue}(i)}] Transform $\alpha$ into a more convenient coalgebra $\tilde{\alpha} : X \to F \D X$.
\item[{\color{red}(ii)}] Determinize $\tilde{\alpha}$ into an $F$-coalgebra $\tilde{\alpha}^\# : \D X \to F\D X$.
\item[(iii)] Factorize the final morphism : $\varphi_{\tilde{\alpha}^\#} = \varphi_\Pi \circ \llbracket - \rrbracket$ and take $\ll - \rr = \llbracket - \rrbracket \circ \eta_X$.
\end{enumerate}
The following diagram sums up the construction. Here $\Omega = (\I \times \I)^{A^*}$ and $\Sigma_\Omega$ is the $\Sigma$-algebra generated by the functions $L \mapsto L(w)$. 
\begin{center}
\begin{tikzpicture}
  \matrix (m) [matrix of math nodes,row sep=2em,column sep=4em,minimum width=2em]
  {
     X & \D X & \D A^\infty & \Omega \\
     \P L X & & & \\
     \D L X & F \D X & F \D A^\infty & F\Omega \\};
  \path[-stealth]
    (m-1-1) edge [color=blue] node [left] {$\alpha$} (m-2-1)
    (m-2-1) edge [color=blue] node [left] {$\iota_{LX}$} (m-3-1)
    (m-1-1) edge [color=red] node [below] {$\eta_X$} (m-1-2)
    (m-1-2) edge [color=black] node [below] {$\llbracket - \rrbracket$} (m-1-3)
    (m-1-3) edge [color=black] node [below] {$\varphi_{\Pi}$} (m-1-4)
    (m-1-2) edge [color=red] node [right] {$\tilde{\alpha}^\#$} (m-3-2)
    (m-1-3) edge [color=black] node [right] {$\Pi$} (m-3-3)
    (m-1-4) edge [color=red] node [right] {$\omega$} (m-3-4)
    (m-3-1) edge [color=blue] node [below] {$\e_X$} (m-3-2)
    (m-3-2) edge [color=black] node [above] {$F\llbracket - \rrbracket$} (m-3-3)
    (m-3-3) edge [color=black] node [above] {$F\varphi_{\Pi}$} (m-3-4)
    (m-1-2) edge [color=red, bend left] node [above] {$\varphi_{\tilde{\alpha}^\#}$} (m-1-4)
    (m-3-2) edge [color=red, bend right] node [below] {$F\varphi_{\tilde{\alpha}^\#}$} (m-3-4)
    (m-1-1) edge [color=blue] node [above right] {$\tilde{\alpha}$} (m-3-2)
    (m-1-1) edge [color=black, bend left] node [above] {$\ll - \rr$} (m-1-3)

    ;
\end{tikzpicture}
\end{center}
\subsubsection*{(i) Translation: from $\alpha$ to $\tilde{\alpha}$}
\begin{prop}
For any measurable space $X$, the function $\e_X : \D L X \to F \D X$ defined by
\[ \e_X(m) = \langle m(LX), m(1), a \mapsto [S \mapsto m(\{a\}\times S)]\rangle \]
is measurable. Moreover, $\e : \D L \Rightarrow F \D$ is a natural transformation.
\end{prop}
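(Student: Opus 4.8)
The plan is to treat the two assertions separately: first the measurability of each $\e_X$ via the generated-$\sigma$-algebra machinery of Lemmas \ref{generated} and \ref{usb}, then the naturality squares by a direct computation with image measures.

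For measurability I would exploit that the codomain $F\D X = \I \times \I \times (\D X)^A$ carries a product $\sigma$-algebra, so by Lemma \ref{usb} it suffices to check that each of the three components of $\e_X$ is measurable. The first component $m \mapsto m(LX)$ and the second $m \mapsto m(1)$ are exactly the generating evaluation maps $e^{LX}_{LX}$ and $e^{LX}_{1}$ of $\Sigma_{\D L X}$ (note that $LX, 1 \in \Sigma_{LX}$, since $LX = (A\times X) + 1$ and $1 = \emptyset + 1$), hence measurable by the very definition of the sub-Giry $\sigma$-algebra. The third component is a map $\D L X \to (\D X)^A$; since $\bigotimes_{a\in A}\Sigma_{\D X}$ is generated by the projections, Lemma \ref{usb} reduces the task to showing that, for each fixed $a\in A$, the map $c_a : m \mapsto [S \mapsto m(\{a\}\times S)]$ is a measurable function $\D L X \to \D X$.

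So the crux is the measurability of each $c_a$. First I would check well-definedness: for $S\in\Sigma_X$ we have $\{a\}\times S \in \mathcal{P}(A)\otimes\Sigma_X \subseteq \Sigma_{LX}$, and the set function $S \mapsto m(\{a\}\times S)$ inherits $\sigma$-additivity from $m$ (using $\{a\}\times\bigcup_n S_n = \bigcup_n (\{a\}\times S_n)$, a disjoint union) together with $m(\{a\}\times X) \le m(LX)\le 1$, so indeed $c_a(m)\in\D X$. For measurability, the target $\sigma$-algebra $\Sigma_{\D X}$ is generated by the maps $e^X_S$, so by Lemma \ref{usb} it is enough that each $e^X_S \circ c_a$ is measurable; but $(e^X_S\circ c_a)(m) = m(\{a\}\times S) = e^{LX}_{\{a\}\times S}(m)$, which is again one of the generating evaluation maps of $\Sigma_{\D L X}$ and therefore measurable. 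This finishes the measurability claim; the only thing to keep straight is the repeated passage through generated $\sigma$-algebras, which is precisely what Lemmas \ref{generated} and \ref{usb} were set up to automate.

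For naturality, given a measurable $f : X \to Y$ I would verify $F\D f \circ \e_X = \e_Y \circ \D L f$ by evaluating both sides on an arbitrary $m \in \D L X$ and comparing the three components, using the image-measure formulas $\D f(\nu) = \nu\circ f^{-1}$ and $\D L f(m) = m\circ (Lf)^{-1}$ with $Lf = id_A \times f + id_1$. The key preimage computations are $(Lf)^{-1}(LY) = LX$, $(Lf)^{-1}(1) = 1$ (since $Lf$ sends $A\times X$ into $A\times Y$ and is the identity on the $1$ summand), and, for $S\in\Sigma_Y$, $(Lf)^{-1}(\{a\}\times S) = \{a\}\times f^{-1}(S)$. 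These give $\e_Y(m\circ(Lf)^{-1}) = \langle m(LX), m(1), a\mapsto [S\mapsto m(\{a\}\times f^{-1}(S))]\rangle$. On the other side $F\D f$ is the identity on the first two coordinates and applies $\D f$ componentwise on the last, so $F\D f(\e_X(m))$ has the same first two coordinates and transition part $a\mapsto \D f([S'\mapsto m(\{a\}\times S')]) = a\mapsto [S\mapsto m(\{a\}\times f^{-1}(S))]$, matching the previous expression. I expect no real obstacle here beyond bookkeeping; the one place to be careful is the identity $(Lf)^{-1}(\{a\}\times S) = \{a\}\times f^{-1}(S)$, which is exactly what makes the transition components agree.
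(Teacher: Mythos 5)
Your proposal is correct and follows essentially the same route as the paper's proof: reduce measurability componentwise via Lemma \ref{usb}, observe that the first two components are generating evaluation maps of $\Sigma_{\D L X}$, handle the transition components through the key identity $e^X_S \circ c_a = e^{LX}_{\{a\}\times S}$ (with $\{a\}\times S \in \Sigma_{LX}$), and then verify naturality by the preimage computation $(Lf)^{-1}(\{a\}\times S) = \{a\}\times f^{-1}(S)$. Your extra checks (that $c_a(m)$ is indeed a sub-probability measure, and that $LX, 1 \in \Sigma_{LX}$) are routine well-definedness points the paper leaves implicit, but the substance of the argument is identical.
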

\begin{proof} First, see that this is a measurable function. Note that we can write $\e_X = \langle e^{LX}_{LX}, e^{LX}_1, a \mapsto \phi_a \rangle$, where $\phi_a : \D LX \to \D X$ is defined by $\phi_a(m)(S) = m(\{a\} \times S)$. According to Lemma \ref{usb}, it suffices to prove that the $|A|+2$ components are measurable functions. The first two are, by definition of $\Sigma_{\D X}$. Given $a\in A$, use Lemma \ref{generated} and see that for any $S \in \Sigma_X$, $e_S^X \circ \phi_a = e^{LX}_{\{a\}\times S}$ is measurable because $\{a\} \times S \in \Sigma_{LX}$. For naturality, let $f : X \to Y$ be a morphism. Note that if $B_A \in \Sigma_A$, $B_Y \in \Sigma_Y$ and $B_1 \in \Sigma_1$, then $Lf^{-1} (B_A \times B_Y + B_1) = B_A \times f^{-1}(B_Y) + B_1$. Thus, $\e$ is a natural transformation because
\begin{align*} (\e_Y \circ \D L f)(m) &= \langle m(Lf^{-1}(A\times Y + 1)), m(Lf^{-1}(1)), a \mapsto [S\mapsto m(Lf^{-1}(\{a\}\times S))]\rangle \\
&= \langle m(A\times X + 1), m(1), a \mapsto [S\mapsto m(\{a\}\times f^{-1}(S))]\rangle 
\\&=  \langle e_{LX}^{LX}, e_1^{LX}, a \mapsto \D f \circ \phi_a \rangle (m) = (F\D f \circ \e_X)(m)
\end{align*}
\end{proof} 
\noindent Now take $\tilde{\alpha} = \e_X \circ \iota_{LX} \circ \alpha$. We will write its components $\langle \tilde{\alpha}_\1, \tilde{\alpha}_\0, a \mapsto t_a \rangle$. A direct expression of $\tilde{\alpha}$ is given by $\tilde{\alpha}(x) = \langle \underbrace{\alpha(x)(LX)}_1, \alpha(x)(1), a \mapsto [S \mapsto \alpha(x)(\{a\} \times S)]\rangle$.
\subsubsection*{(ii) Determinization}
The following lemma (inspired from \cite{Silva10}) establishes some conditions under which there is a canonical notion of determinization.
\begin{lem}\label{deter}
Let $\gr{C}$ be a category, $F : \gr{C} \to \gr{C}$ be an endofunctor and $(T,\eta,\mu)$ be a monad on $\gr{C}$. Let $f : X \to TFX$ be a $TF$-coalgebra and $h : TFTX \to FTX$ be an Eilenberg-Moore $T$-algebra. Then there exists a unique $T$-algebra morphism $f^\# : (TX,\mu_X) \to (FTX,h)$ such that $f = f^\# \circ \eta_X$.
\end{lem}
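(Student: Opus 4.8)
The plan is to recognize this statement as nothing more than the universal property of the free Eilenberg--Moore algebra $(TX,\mu_X)$, applied to $f$ regarded as a $\gr{C}$-morphism into the carrier $FTX$ of the Eilenberg--Moore algebra $h$ (this is the codomain that makes $f$ and $f^\#\circ\eta_X$ comparable). Given this reading, the only candidate is $f^\# = h \circ Tf$, and the first thing I would do is check the types: $Tf : TX \to TFTX$ and $h : TFTX \to FTX$, so indeed $f^\# : TX \to FTX$ as required.

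Next I would verify the two defining properties. For the unit equation, I would compute $f^\#\circ\eta_X = h\circ Tf\circ\eta_X$ and use naturality of $\eta$ (that is, $Tf\circ\eta_X = \eta_{FTX}\circ f$) to rewrite this as $h\circ\eta_{FTX}\circ f$; the Eilenberg--Moore \emph{unit} axiom $h\circ\eta_{FTX} = id_{FTX}$ then yields $f^\#\circ\eta_X = f$. For the algebra-morphism condition $f^\#\circ\mu_X = h\circ Tf^\#$, I would start from $f^\#\circ\mu_X = h\circ Tf\circ\mu_X$, push $\mu$ through $Tf$ by naturality of $\mu$ to obtain $h\circ\mu_{FTX}\circ TTf$, and finally invoke the Eilenberg--Moore \emph{associativity} axiom $h\circ\mu_{FTX} = h\circ Th$ to conclude $h\circ Th\circ TTf = h\circ T(h\circ Tf) = h\circ Tf^\#$.

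For uniqueness, suppose $g : (TX,\mu_X)\to(FTX,h)$ is any algebra morphism with $g\circ\eta_X = f$. I would use the monad law $\mu_X\circ T\eta_X = id_{TX}$ to write $g = g\circ\mu_X\circ T\eta_X$, then replace $g\circ\mu_X$ by $h\circ Tg$ because $g$ is an algebra morphism, giving $g = h\circ Tg\circ T\eta_X = h\circ T(g\circ\eta_X) = h\circ Tf = f^\#$. This forces $g = f^\#$, establishing uniqueness.

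The computation involves no deep obstacle; it is a diagram chase, and the only point requiring genuine care is that \emph{both} Eilenberg--Moore axioms for $h$ are used and are indispensable: the unit law delivers $f^\#\circ\eta_X = f$, while the associativity law $h\circ\mu_{FTX} = h\circ Th$ is precisely what makes $f^\#$ an algebra morphism. A chase using only that $h$ is an $F$-algebra (or only a $T$-algebra) would stall exactly at this associativity step, so the full Eilenberg--Moore hypothesis on $h$ is the essential ingredient rather than incidental bookkeeping.
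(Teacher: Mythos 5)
Your proof is correct and takes essentially the same route as the paper's: the same candidate $f^\# = h \circ Tf$, the same verification of the unit equation via naturality of $\eta$ plus the Eilenberg--Moore unit axiom, the same verification of the algebra-morphism condition via naturality of $\mu$ plus the Eilenberg--Moore multiplication axiom, and the same uniqueness chase $g = g \circ \mu_X \circ T\eta_X = h \circ Tg \circ T\eta_X = h \circ T(g \circ \eta_X) = h \circ Tf$. One remark in your favor: the lemma as stated types $f$ as $X \to TFX$, which makes neither $h \circ Tf$ nor $f = f^\# \circ \eta_X$ well-typed; your reading of $f$ as a morphism $X \to FTX$ (into the carrier of $h$) is the correct one, and it is the reading the paper itself uses both in its proof and in the application $\tilde{\alpha}^\# = h \circ \D\tilde{\alpha}$ with $\tilde{\alpha} : X \to F\D X$.
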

\begin{proof} \il{Uniqueness.} Let $f^\#$ be such a morphism, then, using the first diagram of monads $(i)$ and that $f^\#$ is a $T$-algebra morphism $(ii)$, we get
\[ f^\# \underset{(i)}{=} f^\# \circ \mu_X \circ T\eta_X \underset{(ii)}{=} h \circ Tf^\# \circ T\eta_X = h \circ T(f^\# \circ \eta_X) = h \circ Tf \]
\il{Existence.} Take $f^\# = h \circ Tf$, then, using the naturality of $\mu : TT \Rightarrow T$ $(i)$ and that $h$ is an Eilenberg-Moore $T$-algebra $(ii)$, we get that $f^\#$ is a $T$-algebra morphism because
\[ f^\# \circ \mu_X = h \circ Tf \circ \mu_X \underset{(i)}{=} h \circ \mu_{FTX} \circ TTf \underset{(ii)}{=} h \circ Th \circ TTf = h \circ T(h\circ Tf) = h \circ Tf^\# \]
Furthermore, using the naturality of $\eta : Id_\gr{C} \Rightarrow T$ and that $h$ is an Eilenberg-Moore $T$-algebra,  $f^\# \circ \eta_X = h \circ Tf \circ \eta_X = h \circ \eta_{FTX} \circ f = f$.
\end{proof}
\begin{lem}\label{deter2}
With the same notations as for Lemma \ref{deter}, and given a distributive law $\lambda : T F \Rightarrow F T$, then $h = F\mu_X \circ \lambda_{T X} : TFTX \to FTX$ is an Eilenberg-Moore $T$-algebra.
\end{lem}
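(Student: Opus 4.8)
The plan is to verify directly that $h = F\mu_X \circ \lambda_{TX}$ satisfies the two Eilenberg-Moore $T$-algebra axioms on the object $FTX$: the unit law $h \circ \eta_{FTX} = id_{FTX}$ and the associativity law $h \circ \mu_{FTX} = h \circ Th$. Each reduces to a short diagram chase that feeds the two distributive-law axioms, the naturality of $\lambda$, and the monad laws into one another; the only care needed is to specialize every axiom at the correct object.

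For the unit law, I would first instantiate the unit axiom of the distributive law (the left-hand diagram in the definition of $\lambda$) at the object $TX$, which reads $\lambda_{TX} \circ \eta_{FTX} = F\eta_{TX}$. Then
\[ h \circ \eta_{FTX} = F\mu_X \circ \lambda_{TX} \circ \eta_{FTX} = F\mu_X \circ F\eta_{TX} = F(\mu_X \circ \eta_{TX}), \]
and the monad unit triangle $\mu_X \circ \eta_{TX} = id_{TX}$ gives $h \circ \eta_{FTX} = F(id_{TX}) = id_{FTX}$.

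The associativity law is where the real work lies. I would expand both composites and bring them to a common form. On the left, instantiating the multiplication axiom of the distributive law (the right-hand diagram) at $TX$ yields $\lambda_{TX} \circ \mu_{FTX} = F\mu_{TX} \circ \lambda_{TTX} \circ T\lambda_{TX}$, so that
\[ h \circ \mu_{FTX} = F(\mu_X \circ \mu_{TX}) \circ \lambda_{TTX} \circ T\lambda_{TX}. \]
On the right, since $Th = TF\mu_X \circ T\lambda_{TX}$, I apply naturality of $\lambda : TF \Rightarrow FT$ at the morphism $\mu_X : TTX \to TX$, namely $\lambda_{TX} \circ TF\mu_X = FT\mu_X \circ \lambda_{TTX}$, to obtain
\[ h \circ Th = F(\mu_X \circ T\mu_X) \circ \lambda_{TTX} \circ T\lambda_{TX}. \]
The two expressions coincide exactly when $\mu_X \circ \mu_{TX} = \mu_X \circ T\mu_X$, which is precisely the associativity square of the monad, and this completes the verification.

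The main obstacle is purely bookkeeping rather than conceptual: one must specialize the distributive-law axioms at $TX$ (not at $X$) and apply naturality of $\lambda$ at the morphism $\mu_X$ rather than at an identity, otherwise the composites fail to line up. Once the instantiations are correct, nothing genuinely hard remains; the argument is a matter of matching the two composites $F(\mu_X \circ \mu_{TX}) \circ \lambda_{TTX} \circ T\lambda_{TX}$ and $F(\mu_X \circ T\mu_X) \circ \lambda_{TTX} \circ T\lambda_{TX}$ via monad associativity.
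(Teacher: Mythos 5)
Your proof is correct and follows essentially the same route as the paper: the unit law is verified identically, and your associativity check uses exactly the same three ingredients (naturality of $\lambda$ at $\mu_X$, the multiplication axiom of the distributive law instantiated at $TX$, and monad associativity). The only cosmetic difference is that you expand $h \circ \mu_{FTX}$ and $h \circ Th$ separately and meet in the middle at $F(\mu_X \circ \mu_{TX}) \circ \lambda_{TTX} \circ T\lambda_{TX}$, whereas the paper rewrites $h \circ Th$ into $h \circ \mu_{FTX}$ in one linear chain.
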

\begin{proof}
The first diagram of distributive laws $(i)$ and monads $(ii)$ yields
\[ h \circ \eta_{FTX} = F\mu_X \circ \lambda_{TX} \circ \eta_{FTX} \underset{(i)}{=} F\mu_X \circ F\eta_{TX} \circ id_{FX} = F(\mu_X \circ \eta_{TX}) \underset{(ii)}{=} Fid_{TX} = id_{TFX} \]
Furthermore
\begin{align*}
h \circ Th &= F\mu_X \circ \lambda_{TX} \circ TF \mu_X \circ T\lambda_{TX} \\
&= F\mu_X \circ FT\mu_X \circ \lambda_{TTX} \circ T\lambda_{TX} & \text{(naturality of $\lambda$)} \\
&= F(\mu_X \circ T\mu_X) \circ \lambda_{TTX} \circ T\lambda{TX} \\
&= F(\mu_X \circ \mu_{TX}) \circ \lambda_{TTX} \circ T\lambda_{TX} & \text{(second diagram of the monad)} \\
&= F\mu_X \circ F\mu_{TX} \circ \lambda_{TTX} \circ T\lambda_{TX} \\
&= F\mu_X \circ \lambda_{TX} \circ \mu_{FTX} & \text{(second diagram of distributive laws)} \\
&= h \circ \mu_{FTX}
\end{align*}
\end{proof}
\noindent The next step is to define a distributive law $\lambda : \D F \Rightarrow F \D$  in order to apply Lemmas \ref{deter} and \ref{deter2}. In the following we write $id_{FX} = \langle \pi_X^\1, \pi_X^\0, a \mapsto \pi_X^a \rangle$. Note that $\pi^\epsilon : F \Rightarrow \I$ (for $\epsilon \in \{\0,\1\}$) and $\pi^a : F \Rightarrow Id_\gr{C}$ (for $a\in A$) are natural transformations.
\begin{lem} Let $f : X \to \I$ be a measurable function. Then $g_f : \D X \to \I$ defined by $g_f(m) = \int_X f dm$ is measurable. Furthermore, if $g : \D \I \to \I$ is the measurable function defined by $g(m) = \int_{\I} id_{\I} dm$, we have $g_f \circ \eta_X = f$ and $g_f \circ \mu_X = g \circ \D g_f$. \end{lem}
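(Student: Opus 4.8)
The plan is to prove all three assertions by the standard measure-theoretic induction recalled in the integration paragraph: verify each claim on indicator functions, extend to simple functions by linearity, and pass to a pointwise increasing limit via the monotone convergence theorem. Since $f$ takes values in $\I$ it is automatically nonnegative and bounded, and for any $m \in \D X$ one has $0 \le \int_X f\,dm \le m(X) \le 1$, so $g_f$ is genuinely $\I$-valued.

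For measurability of $g_f$, I would first observe that when $f = \gr{1}_S$ for some $S \in \Sigma_X$, the map $g_f$ is exactly the generator $e^X_S$, hence measurable by the very definition of $\Sigma_{\D X}$. For a simple $f$, $g_f$ is then a finite nonnegative linear combination of such generators, hence measurable; and for a general measurable $f : X \to \I$, choosing simple functions $f_n \uparrow f$ gives $g_{f_n}(m) = \int_X f_n\,dm \to \int_X f\,dm = g_f(m)$ pointwise by monotone convergence, so $g_f$ is a pointwise limit of measurable functions and is therefore measurable.

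For the unit equation, I would use that $\eta_X(x)$ is the Dirac measure at $x$ (indeed $\eta_X(x)(S) = \gr{1}_S(x)$): integrating an indicator yields $\int_X \gr{1}_S\,d\eta_X(x) = \eta_X(x)(S) = \gr{1}_S(x)$, and the same induction extends this to $\int_X f\,d\eta_X(x) = f(x)$, that is $g_f \circ \eta_X = f$. For the multiplication equation, I would first rewrite the right-hand side using the image-measure change-of-variables formula: since $\D g_f(\Phi) = \Phi \circ g_f^{-1}$,
\[ (g \circ \D g_f)(\Phi) = \int_\I id_\I \, d(\Phi \circ g_f^{-1}) = \int_{\D X} g_f \, d\Phi = \int_{\D X} \left( \int_X f \, dm \right) d\Phi(m), \]
whereas $(g_f \circ \mu_X)(\Phi) = \int_X f\,d\mu_X(\Phi)$. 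The equation thus reduces to the Fubini-type identity $\int_X f\,d\mu_X(\Phi) = \int_{\D X} \left( \int_X f\,dm \right) d\Phi(m)$, which holds for $f = \gr{1}_S$ precisely by the definition $\mu_X(\Phi)(S) = \int_{\D X} e^X_S\,d\Phi$, extends to simple $f$ by linearity, and to all measurable $f : X \to \I$ by monotone convergence.

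The main obstacle is this last Fubini-type identity: the definition of $\mu_X$ only pins down the equation on indicators, so the whole content lies in the bootstrapping step, where I must apply monotone convergence twice — once for each fixed $m$ to see that the inner integrals increase to $g_f(m)$, giving $g_{f_n} \uparrow g_f$ on $\D X$, and once more to push the limit through the outer integral $\int_{\D X}(\cdot)\,d\Phi$ — and take care to invoke the image-measure formula correctly when converting $g \circ \D g_f$ into an integral over $\D X$.
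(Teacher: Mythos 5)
Your proposal is correct and follows essentially the same route as the paper: verify everything on indicators (where $g_{\mathbf{1}_S} = e^X_S$ and the $\mu_X$-equation is just the definition of $\mu_X$ plus the image-measure formula), extend by linearity to simple functions, and pass to the limit with the monotone convergence theorem. Your reorganization of the multiplication equation into a single Fubini-type identity $\int_X f\,d\mu_X(\Phi) = \int_{\D X} g_f\,d\Phi$ is only a cosmetic repackaging of the paper's computation, which uses the same change-of-variables step in the opposite direction.
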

\begin{proof} \il{Measurability}. If $f = \gr{1}_B$ for some $B \in \Sigma_X$, remark that $g_f(m) = \int_X \gr{1}_B dm = m(B) = e_B^X(m)$ so $g_f$ is measurable. By linearity of the integral, $g_f$ is measurable for any simple function. Finally if $f$ is the pointwise limit of an increasing sequence of simple functions $(f_n)_{n\in \N}$, we have that $g_f(m) = \int_X f dm = \int_X \lim f_n dm = \lim \int_X f_n dm = \lim g_{f_n} (m)$ by the monotone convergence theorem.\\
\il{Equations}. If $f = \gr{1}_B$ for some $B \in \Sigma_X$, then $g_f \circ \eta_X = e_B^X \circ \eta_X = \gr{1}_B = f$ and for any $\Phi \in \D \D X$, $(g_f \circ \mu_X)(\Phi) = \mu_X(\Phi)(B) = \int_{\D X} e_B^X d\Phi = \int_{\I} id_\I d(\D e_B^X (\Phi)) = (g \circ \D g_f)(\Phi)$. Note that both equations are preserved by linear combinations, hence the result is true for any simple function. Finally, if $f$ is the pointwise limit of an increasing sequence of simple functions $(f_n)_{n\in \N}$ for which the result is true, we already know that $g_f$ is the pointwise limit of the increasing sequence $g_{f_n}$. In particular $g_{f_n} \circ \eta_X \to g_f \circ \eta_X$ and $g_{f_n} \circ \mu_X \to g_f \circ \mu_X$. For any $\Phi \in \D \D \I$, $(g \circ \D g_{f_n})(\Phi) = \int_\I id_\I d(\D g_{f_n} (\Phi)) = \int_{\D \I} g_{f_n} d\Phi \to \int_{\D \I} g_f d\Phi = (g\circ \D g_f)(\Phi)$. Hence both equations are true for all measurable $f : X \to \I$.
\end{proof}
\noindent Taking $f = id_\I$ and using this lemma we get that $g : \D \I \to \I$ satisfies $g \circ \eta_\I = id_\I$ and $g \circ \mu_\I = g \circ \D g$, hence $g$ is an Eilenberg-Moore $\D$-algebra. For any object $X$ of $\gr{Meas}$, define $\lambda_X : \D F X \to F \D X$ by
\[ \lambda_X = \langle g \circ \D \pi_X^\1, g \circ \D\pi_X^\0, a \mapsto \D\pi_X^a \rangle \]
This is a measurable function because each component is measurable.
\begin{prop}
Let $X,Y$ be objects of $\gr{Meas}$ and $f : X \to Y$ be a measurable function. The following diagrams commute. Consequently, $\lambda : \D F \Rightarrow F \D$ is a distributive law.
\begin{center}
\begin{tikzpicture}
  \matrix (m) [matrix of math nodes,row sep=3em,column sep=2em,minimum width=2em]
  {  \D F X & \D F Y & FX & FX & \D\D FX & \D F \D X & F\D\D X \\
    F \D X & F\D Y & \D FX & F\D X & \D FX & & F\D X \\};
  \path[-stealth]
  	(m-1-1) edge node [above] {$\D F f$} (m-1-2)
  	(m-1-1) edge node [left] {$\lambda_X$} (m-2-1)
  	(m-1-2) edge node [right] {$\lambda_Y$} (m-2-2)
  	(m-2-1) edge node [below] {$F \D f$} (m-2-2)
    (m-1-3) edge node [above] {$id_{FX}$} (m-1-4)
    (m-1-3) edge node [left] {$\eta_{FX}$} (m-2-3)
    (m-2-3) edge node [below] {$\lambda_X$} (m-2-4)
    (m-1-4) edge node [right] {$F\eta_X$} (m-2-4)
    (m-1-5) edge node [above] {$\D\lambda_X$} (m-1-6)
    (m-1-6) edge node [above] {$\lambda_{\D X}$} (m-1-7)
    (m-2-5) edge node [below] {$\lambda_X$} (m-2-7)
    (m-1-5) edge node [left] {$\mu_{FX}$} (m-2-5)
    (m-1-7) edge node [right] {$F\mu_X$} (m-2-7);
\end{tikzpicture}
\end{center}
\end{prop}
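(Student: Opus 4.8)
The plan is to exploit that $F$ is a product functor $FX = \I \times \I \times X^A$, so a measurable map into a space of the form $F\D(-)$ is completely determined by its $|A|+2$ projections. Since each of the three diagrams to be checked has codomain of this shape, it suffices to post-compose with each of the projections $\pi^\1$, $\pi^\0$ and $\pi^a$ and to verify equality slot by slot. The definition of $\lambda$ hands us the two kinds of component directly: for the transition slots $\pi^a_{\D X} \circ \lambda_X = \D\pi^a_X$, and for the two output slots $\pi^\epsilon_{\D X} \circ \lambda_X = g \circ \D\pi^\epsilon_X$ with $\epsilon \in \{\0,\1\}$. Throughout I would use that $\pi^\epsilon : F \Rightarrow \I$ and $\pi^a : F \Rightarrow Id$ are natural, i.e. $\pi^\epsilon_Y \circ Ff = \pi^\epsilon_X$ (constant target) and $\pi^a_Y \circ Ff = f \circ \pi^a_X$, together with the naturality of $\eta$ and $\mu$, and the fact established in the preceding lemma that $g$ is an Eilenberg-Moore $\D$-algebra, namely $g \circ \eta_\I = id_\I$ and $g \circ \mu_\I = g \circ \D g$.

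For the naturality square, on a transition slot I compute $\pi^a_{\D Y} \circ \lambda_Y \circ \D F f = \D\pi^a_Y \circ \D Ff = \D(f \circ \pi^a_X) = \D f \circ \D\pi^a_X$, which matches $\pi^a_{\D Y} \circ F\D f \circ \lambda_X = \D f \circ \D\pi^a_X$; on an output slot $\pi^\epsilon_{\D Y} \circ \lambda_Y \circ \D F f = g \circ \D\pi^\epsilon_Y \circ \D Ff = g \circ \D\pi^\epsilon_X$, which matches $\pi^\epsilon_{\D Y} \circ F\D f \circ \lambda_X = g \circ \D\pi^\epsilon_X$. For the unit triangle, the transition slot reduces to naturality of $\eta$, and the output slot gives $g \circ \D\pi^\epsilon_X \circ \eta_{FX} = g \circ \eta_\I \circ \pi^\epsilon_X = \pi^\epsilon_X$ by the unit law of $g$, matching the $\epsilon$-component $\pi^\epsilon_X$ of $F\eta_X$. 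All of these are short, essentially formal, manipulations.

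The real content sits in the multiplication hexagon $\lambda_X \circ \mu_{FX} = F\mu_X \circ \lambda_{\D X} \circ \D\lambda_X$. On a transition slot both sides collapse to $\mu_X \circ \D\D\pi^a_X$ using only naturality of $\mu$ and the two defining identities for $\lambda$ at $\D X$, so that slot is again formal. On an output slot I would compute the left side as $g \circ \D\pi^\epsilon_X \circ \mu_{FX} = g \circ \mu_\I \circ \D\D\pi^\epsilon_X$ by naturality of $\mu$, and the right side, peeling off first $\pi^\epsilon_{\D\D X} \circ \lambda_{\D X} = g \circ \D\pi^\epsilon_{\D X}$ and then $\pi^\epsilon_{\D X} \circ \lambda_X = g \circ \D\pi^\epsilon_X$, as $g \circ \D g \circ \D\D\pi^\epsilon_X$. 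The two expressions agree precisely when $g \circ \mu_\I = g \circ \D g$, which is exactly the multiplication law of the Eilenberg-Moore algebra $g$, so the square closes.

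I expect this output-slot computation in the multiplication diagram to be the main obstacle, since it is the only place where the genuinely analytic input is used: the identity $g \circ \mu_\I = g \circ \D g$ for $g(m) = \int_\I id_\I \, dm$ rests on the monotone convergence argument of the previous lemma, whereas everything touching the transition slots, and indeed both the naturality square and the unit triangle, is pure naturality bookkeeping. Once the componentwise equalities are in hand, the conclusion that $\lambda$ is a distributive law is immediate, the first diagram giving naturality and the last two giving the two distributive-law axioms.
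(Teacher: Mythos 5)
Your proposal is correct and takes essentially the same approach as the paper: the paper verifies exactly the same componentwise identities, merely writing them as tupled expressions $\langle \cdot \rangle$ rather than post-composing with the projections $\pi^\1$, $\pi^\0$, $\pi^a$, and it relies on the same ingredients (naturality of $\pi^\1$, $\pi^\0$, $\pi^a$, $\eta$, $\mu$, and the Eilenberg-Moore laws $g \circ \eta_\I = id_\I$, $g \circ \mu_\I = g \circ \D g$). In particular, the paper's multiplication hexagon also closes precisely at the step $g \circ \D g = g \circ \mu_\I$, just as you identified.
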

\begin{proof}
\begin{align*}
\lambda_Y \circ \D Ff &= \langle g\circ \D \pi^\1_Y \circ \D F f, g \circ \D \pi^\0_Y \circ \D F f, a \mapsto \D \pi^a_Y \circ \D F f \rangle
\\ &= \langle g\circ \D (\pi^\1_Y \circ F f), g \circ \D( \pi^\0_Y \circ F f), a \mapsto \D( \pi^a_Y \circ F f) \rangle
\\ &= \langle g\circ \D(id_\I \circ \pi^\1_X), g \circ \D(id_\I \circ \pi^\0_X), a \mapsto \D(f\circ \pi^a_X) \rangle
\\ &= \langle g\circ \D\pi^\1_X, g \circ \D\pi^\0_X, a \mapsto \D f \circ \D\pi^a_X \rangle & \text{($\pi^\1$, $\pi^\0$ natural)}
\\ &= F\D f \circ \lambda_X
\end{align*}
\begin{align*}
\lambda_X \circ \eta_{FX} &= \langle g \circ \D\pi^\1_X \circ \eta_{FX}, g \circ \D \pi^\0_X \circ \eta_{FX}, a \mapsto \D\pi^a_X \circ \eta_{FX} \rangle
\\ &= \langle g \circ \eta_\I \circ \pi^\1_X, g \circ \eta_I \circ \pi^\0_X, a \mapsto \eta_X \circ \pi^a_X \rangle & \text{ (naturality of $\eta$)}
\\ &= \langle \pi^\1_X, \pi^\0_X, a \mapsto \eta_X \circ \pi^a_X \rangle & \text{($g$ is an EM algebra)}
\\ &= F\eta_X 
\end{align*}
\begin{align*}
& F\mu_X \circ \lambda_{\D X} \circ \D \lambda_X 
\\ &= F\mu_X \circ \langle g \circ \D\pi^\1_{\D X} \circ \D\lambda_X, g \circ \D\pi^\0_{\D X} \circ \D\lambda_X, a \mapsto \D\pi^a_{\D X} \circ \D \lambda_X \rangle & \text{(definition of $\lambda$)}
\\ &= F\mu_X \circ \langle g \circ \D(\pi^\1_{\D X} \circ \lambda_X), g \circ \D(\pi^\0_{\D X} \circ \lambda_X), a \mapsto \D(\pi^a_{\D X} \circ \lambda_X) \rangle
\\ &= F\mu_X \circ \langle g \circ \D(g \circ \D \pi^\1_X), g \circ \D(g \circ \D \pi^\0_X), a \mapsto \D\D\pi^a_X \rangle & \text{ (definition of $\lambda$)}
\\ &= \langle g \circ \D g \circ \D\D \pi^\1_X, g \circ \D g \circ \D\D \pi^\0_X, a \mapsto \mu_X \circ \D\D\pi^a_X \rangle
\\ &= \langle g \circ \mu_\I \circ \D\D \pi^\1_X, g \circ \mu_\I \circ \D\D\pi^\0_X, a \mapsto \mu_X \circ \D\D\pi^a_X \rangle & \text{($g$ EM algebra)} 
\\ &= \langle g \circ \D \pi^\1_X \circ \mu_{FX}, g \circ \D \pi^\0_X \circ \mu_{FX}, a \mapsto \D\pi^a_X \circ \mu_{FX} \rangle & \text{(naturality of $\mu$)}
\\ &= \lambda_X \circ \mu_{FX}
\end{align*}
\end{proof}
\noindent Let us compute the value of our resulting determinization. Given $\tilde{\alpha} : X \to F\D X$, we recall the notation $\tilde{\alpha} = \langle \tilde{\alpha}_\1, \tilde{\alpha}_\0, a \mapsto t_a \rangle$, then take $h = F\mu_X \circ \lambda_{\D X}$ (Lemma \ref{deter2}) and $\tilde{\alpha}^\# = \tilde{\alpha} \circ \D h$ (Lemma \ref{deter}). We get
\begin{align*}
\tilde{\alpha}^\# &= h \circ \D \tilde{\alpha} \\
&= F\mu_X \circ \lambda_{\D X}\circ \D\tilde{\alpha} \\
&= F\mu_X \circ \langle g \circ \D( \pi^\1_{\D X} \circ \tilde{\alpha}), g \circ \D (\pi^\0_{\D X} \circ \tilde{\alpha}), a \mapsto \D (\pi^a_{\D X} \circ \tilde{\alpha}) \rangle \\
&= \langle g \circ \D \tilde{\alpha}_\1, g \circ \D \tilde{\alpha}_\0, a\mapsto \mu_X  \circ \D t_a \rangle
\end{align*}
\noindent Let $m \in \D X$. This more explicit expression shows that the coalgebra that arises from the determinization is natural in the sense that the components of $\tilde{\alpha^\#}$ are basically obtained by integrating the information provided by $\alpha$.
\begin{align*} \tilde{\alpha}^\#(m) &= \left\langle \int_X \tilde{\alpha}_\1 dm, \int_X \tilde{\alpha}_\0 dm, a\mapsto \left[ S \mapsto \int_X t_a(-)(S) dm\right] \right\rangle \\
&= \left\langle \int_X \alpha(-)(LX) dm, \int_X \alpha(-)(1), a\mapsto \left[ S \mapsto \int_X \alpha(-)(\{a\}\times S) dm \right] \right\rangle \end{align*}
\subsubsection*{(iii) Final coalgebra}
This heavy determinization part allows us to work on $F$-coalgebras, which are nice ones because there exists a final object in $\gr{Coalg}(F)$. 
\begin{prop}
Let $\Omega = (\I \times \I)^{A^*}$ and $\Sigma_\Omega$ be the smallest $\sigma$-algebra that makes the functions $e_w : \Omega \to \I \times \I$ defined by $e_w(L) = L(w)$ measurable for every $w \in A^*$. Let $\omega : \Omega \to F\Omega$ be defined by $\omega(L) = \langle L(\varepsilon), a \mapsto L_a \rangle$. Then $(\Omega,\omega)$ is the final $F$-coalgebra.
\end{prop}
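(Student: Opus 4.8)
The plan is to proceed exactly as in the proof of Proposition \ref{final coalgebra}, treating the output $B = \I \times \I$ as a measurable object, the one genuinely new ingredient being the verification that the induced final morphism is \emph{measurable}. As a preliminary I would note that $\omega$ is itself measurable, so that $(\Omega,\omega)$ is indeed an $F$-coalgebra: applying Lemma \ref{usb} first to the generating projections of $F\Omega$ and then to the generators $e_w$ of $\Sigma_\Omega$ reduces this to the measurability of $L \mapsto e_\varepsilon(L) = L(\varepsilon)$ and of $L \mapsto e_w(L_a) = L(aw)$, both of which hold by definition of $\Sigma_\Omega$. Now let $\beta : X \to FX$ be an arbitrary $F$-coalgebra in $\gr{Meas}$. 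Since $FX = \I \times \I \times X^A$ carries the product $\sigma$-algebra, composing $\beta$ with the projections shows that its components $\beta_\1, \beta_\0 : X \to \I$ and $t_a : X \to X$ (for $a \in A$) are all measurable; write $\beta = \langle \beta_\1, \beta_\0, a \mapsto t_a \rangle$.

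Next I would define the candidate morphism $\varphi : X \to \Omega$ by induction on word length, mimicking the discrete case:
\begin{align*}
\varphi(x)(\varepsilon) = \langle \beta_\1(x), \beta_\0(x) \rangle & & \varphi(x)(aw) = \varphi(t_a(x))(w).
\end{align*}
The commutation $F\varphi \circ \beta = \omega \circ \varphi$ is then immediate: unfolding $\omega(\varphi(x)) = \langle \varphi(x)(\varepsilon), a \mapsto \varphi(x)_a \rangle$ and using $\varphi(x)_a = \varphi(t_a(x))$ together with $\varphi(x)(\varepsilon) = \langle \beta_\1(x), \beta_\0(x)\rangle$ reproduces $F\varphi(\beta(x))$. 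Uniqueness is equally direct: the two defining equations are forced by the commutation of the square, so any coalgebra morphism must coincide with $\varphi$ on every word, again by induction on word length.

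The main obstacle is the measurability of $\varphi$, and this is where Lemma \ref{usb} does the work. Because $\Sigma_\Omega$ is by definition generated by the family $(e_w : \Omega \to \I \times \I)_{w \in A^*}$, it suffices to check that $e_w \circ \varphi$ is measurable for every $w \in A^*$. Setting $t_\varepsilon = id_X$ and $t_{wa} = t_a \circ t_w$ as in Lemma \ref{bisimeq}, an easy induction unwinds the definition to $\varphi(x)(w) = \langle \beta_\1(t_w(x)), \beta_\0(t_w(x)) \rangle$, so that $e_w \circ \varphi = \langle \beta_\1 \circ t_w, \beta_\0 \circ t_w \rangle$. Each $t_w$ is measurable as a finite composite of the measurable maps $t_a$, hence $\beta_\1 \circ t_w$ and $\beta_\0 \circ t_w$ are measurable; since the codomain $\I \times \I$ carries the product $\sigma$-algebra, a further application of Lemma \ref{usb} to its two projections shows that the pairing $e_w \circ \varphi$ is measurable. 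Thus $\varphi$ is measurable, it is the unique $F$-coalgebra morphism from $\beta$ to $\omega$, and $(\Omega,\omega)$ is final.
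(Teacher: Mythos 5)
Your proposal is correct and follows essentially the same route as the paper: measurability of $\omega$ via the generators, the inductive definition of $\varphi$ forced by commutation (giving existence and uniqueness as in the discrete case), and measurability of $\varphi$ by checking that each $e_w \circ \varphi$ is measurable and invoking Lemma \ref{usb}. The only cosmetic difference is that you unwind the recursion into the closed form $e_w \circ \varphi = \langle \beta_\1 \circ t_w, \beta_\0 \circ t_w \rangle$, whereas the paper runs the same induction directly via $e_{aw} \circ \varphi = (e_w \circ \varphi) \circ \tau_a$.
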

\begin{proof}
First, $\omega$ is measurable. Indeed $\pi_1 \circ e_\varepsilon$ and $\pi_2 \circ e_\varepsilon$ are (where $\pi_i : \I \times \I \to \I$) and for $a\in A$, the function $\phi_a : L \mapsto L_a$ is measurable because $e_w \circ \phi_a = e_{aw}$ is measurable for every $w \in A^*$. Let $\beta = \langle \beta_\1,\beta_\0, a\mapsto \tau_a \rangle : X \to FX$ be an $F$-coalgebra. It is easy to see that there is at most one coalgebra morphism $\varphi$ from $\beta$ to $\omega$, because the commutation of the following diagram yields $\varphi(x)(\varepsilon) = \langle \beta_\1(x), \beta_\0(x)\rangle$ and $\varphi(x)(aw) = \varphi(\tau_a(x))(w)$.
\begin{center}
\begin{tikzpicture}
  \matrix (m) [matrix of math nodes,row sep=3em,column sep=4em,minimum width=2em]
  {
     X & \Omega \\
     FX& F\Omega \\};
  \path[-stealth]
  (m-1-1) edge node [above] {$\varphi$} (m-1-2)
  (m-1-1) edge node [left] {$\beta$} (m-2-1)
  (m-1-2) edge node [right] {$\omega$} (m-2-2)
  (m-2-1) edge node [below] {$F\varphi$} (m-2-2);
\end{tikzpicture}
\end{center}
The only thing to check is that this $\varphi$ is measurable. We prove it by induction on words. The function $e_\varepsilon \circ \varphi = \langle \beta_\1, \beta_\0 \rangle$ is measurable, and if $e_w \circ \varphi$ is measurable then $e_{aw} \circ \varphi = (e_w \circ \varphi) \circ \tau_a$ is measurable too by induction hypothesis.
\end{proof}
\noindent Thus for any $F$-coalgebra $\beta$ the final morphism towards $\omega$, denoted $\varphi_\beta$, gives a canonical notion of semantics. What we want it something slightly more specific that takes into account the way $\tilde{\alpha}^\#$ was built to produce a probability measure in $\D A^\infty$. This is obtained via a pseudo-final coalgebra $\Pi : \D A^\infty \to F\D A^\infty$ as follows.
\begin{prop}
Let $\pi : A^\infty \to L A^\infty$ be defined by $\pi(\varepsilon) = *$ and $\pi(aw) = (a,w)$. This is the final $L$-coalgebra.
\end{prop}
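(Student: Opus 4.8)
The plan is to show that for every $L$-coalgebra $\gamma : Y \to LY = A \times Y + 1$ in $\gr{Meas}$ there is a unique measurable $\psi : Y \to A^\infty$ with $L\psi \circ \gamma = \pi \circ \psi$. Before this, one should check that $\pi$ is itself a $\gr{Meas}$-morphism, so that $(A^\infty,\pi)$ really is an $L$-coalgebra. Since $\Sigma_{LA^\infty} = \mathcal{P}(A) \otimes \Sigma_{A^\infty} \oplus \mathcal{P}(1)$ is generated by the sets $\{a\} \times S$ (for $a \in A$, $S \in \Sigma_{A^\infty}$) together with $\{*\}$, Lemma \ref{generated} reduces measurability of $\pi$ to two preimage computations: $\pi^{-1}(\{a\}\times S) = aS$, which is measurable by Proposition \ref{usualsets}$(v)$, and $\pi^{-1}(\{*\}) = \{\varepsilon\}$, which is a measurable singleton. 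In fact $\pi$ is a bijection, with inverse $* \mapsto \varepsilon$ and $(a,w) \mapsto aw$, as must hold for the structure map of any final coalgebra.

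For existence and uniqueness of $\psi$ as a plain function, I would unfold $\gamma$. Writing $D = \gamma^{-1}(A \times Y)$ for the set of ``continuing'' states and letting $o : D \to A$, $t : D \to Y$ be the two components of $\gamma|_D$, define $\psi(y)$ to be the word whose $j$-th letter is $o(t^j(y))$ for as long as $y, t(y), \dots, t^j(y)$ keep landing in $D$, and which stops as soon as some $t^j(y) \in \gamma^{-1}(1)$. Reading off the commuting square shows that any morphism must satisfy $\psi(y) = \varepsilon$ when $\gamma(y) = *$ and $\psi(y) = a\,\psi(y')$ when $\gamma(y) = (a,y')$; an easy induction on $n$ then shows that $\psi(y)(n)$ is forced for every $n$, which simultaneously gives that the unfolding makes the diagram commute and that it is the only function doing so.

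The real work is the measurability of $\psi$, and this is where I expect the main obstacle: iterating $t$ only makes sense on states that have not yet terminated, so one must track the shrinking domains. Concretely I would set $D_0 = Y$, $T_0 = id_Y$, and inductively $D_{k+1} = T_k^{-1}(D)$ and $T_{k+1} = t \circ T_k : D_{k+1} \to Y$; each $D_k$ is measurable as a preimage of the measurable set $D$ under the measurable map $T_k$, and each $T_k$ is measurable. One checks that $D_k \subseteq D$ and more precisely $T_j(y) \in D$ for all $j < k$ whenever $y \in D_k$, so that $o \circ T_j$ is well defined and measurable on $D_k$ for $j < k$. Then for $w = a_0 \cdots a_{k-1} \in A^*$ one computes
\[ \psi^{-1}(wA^\infty) = D_k \cap \bigcap_{j<k} (o \circ T_j)^{-1}(\{a_j\}), \qquad \psi^{-1}(\{w\}) = \psi^{-1}(wA^\infty) \cap T_k^{-1}(\gamma^{-1}(1)), \]
both finite intersections of measurable sets, while $\psi^{-1}(\emptyset) = \emptyset$. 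Since these sets exhaust the generators $S_\infty$ of $\Sigma_{A^\infty}$, Lemma \ref{generated} yields that $\psi$ is measurable, completing the proof. The only genuine care needed is the bookkeeping of the domains $D_k$ and the observation that the coproduct structure of $\gamma$ splits cleanly into the measurable data $(o,t)$ on $D$; everything after that is routine.
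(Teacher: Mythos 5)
Your proof is correct and follows the paper's skeleton at the top level: you establish measurability of $\pi$ exactly as the paper does (Lemma \ref{generated} plus Proposition \ref{usualsets} $(v)$ for $\pi^{-1}(\{a\}\times S)=aS$), you derive uniqueness from the same two forced equations, and you prove measurability of the induced morphism by computing preimages of the generators in $S_\infty$ and invoking Lemma \ref{generated}. The genuine difference is in how those preimages are computed. The paper never decomposes $\gamma$ into the pair $(o,t)$ and never forms iterates; it inducts on the word itself, using
\begin{align*}
&\varphi^{-1}(\{\varepsilon\}) = \gamma^{-1}(1), && \varphi^{-1}(\varepsilon A^\infty) = X,\\
&\varphi^{-1}(\{aw\}) = \gamma^{-1}\left(\{a\}\times \varphi^{-1}(\{w\})\right), && \varphi^{-1}(awA^\infty) = \gamma^{-1}\left(\{a\}\times \varphi^{-1}(wA^\infty)\right),
\end{align*}
so that measurability of $\gamma$ alone closes the induction: there are no partial maps, no subspace $\sigma$-algebras, and none of the domain bookkeeping you flag as the main obstacle. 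You instead flatten the corecursion into iterates $T_k$ on shrinking domains $D_k$ and write closed-form preimages; this is sound (your formulas are right, and the intersection with $D_k$ is even redundant, since the factor $j=k-1$ already lies in $D_k$), but the care about trace $\sigma$-algebras that it demands is precisely what the paper's one-step recursion avoids. What your route buys is an explicit, non-recursive description of $\psi$ and of each preimage set, which the paper leaves implicit in the corecursive definition; what the paper's route buys is brevity: applying the structure map once per letter is all the measure theory needed.
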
 
\begin{proof}
The $\sigma$-algebra on $LA^\infty$ is generated by the sets $1$ and $\{a\} \times S$ for $S \in \Sigma_{A^\infty}$. Applying Lemma \ref{usualsets} $(v)$, see that $\pi^{-1}(1) = \{ \varepsilon \} \in \Sigma_{A^\infty}$ and $\pi^{-1}(\{a\} \times S) = aS \in \Sigma_{A^\infty}$. Because of Lemma \ref{generated} this shows that $\pi$ is measurable. Let $\gamma : X \to LX$ be an $L$-coalgebra. There is at most one coalgebra morphism from $\gamma$ to $\pi$. Indeed, the commutation of the following diagram yields that $\pi(\varphi(x)) = (id_A \times \varphi + id_1)(\gamma(x))$ so if $\gamma(x) = *$ then $\varphi(x) = \varepsilon$, and if $\gamma(x) = (a,y)$ then $\varphi(x) = a\cdot \varphi(y)$.
\begin{center}
\begin{tikzpicture}
  \matrix (m) [matrix of math nodes,row sep=3em,column sep=4em,minimum width=2em]
  {
     X & A^\infty \\
     LX & LA^\infty \\};
  \path[-stealth]
  (m-1-1) edge node [above] {$\varphi$} (m-1-2)
  (m-1-1) edge node [left] {$\gamma$} (m-2-1)
  (m-1-2) edge node [right] {$\pi$} (m-2-2)
  (m-2-1) edge node [below] {$F\varphi$} (m-2-2);
\end{tikzpicture}
\end{center}
Using Lemma \ref{generated}, we check that $\varphi$ is measurable by focusing on sets in $S_\infty$. First see that $\varphi^{-1}(\{\varepsilon\}) = \gamma^{-1}(1) \in \Sigma_X$ because $\gamma$ is measurable and that $\varphi^{-1}(\varepsilon A^\infty) = X \in \Sigma_X$. Assume that $\varphi^{-1}(\{w\})$ and $\varphi^{-1}(wA^\infty) \in \Sigma_X$, then $\varphi^{-1}(\{aw\}) = \gamma^{-1}(\{a\}\times \varphi^{-1}(\{w\}))$ and $\varphi^{-1}(awA^\infty) = \gamma^{-1}(\{a\} \times \varphi^{-1}(wA^\infty))$ are in $\Sigma_X$ because $\gamma$ is measurable.
\end{proof}
\noindent Let $\Pi : A^\infty \to FA^\infty$ be the $F$-coalgebra $\Pi = \e_{A^\infty} \circ \D \pi$. It has a direct expression involving the measure derivative; it is exactly the same as the $\Pi$ of section $2$.
\begin{align*}
\Pi(m) &= \langle m(\pi^{-1}(LA^\infty)), m(\pi^{-1}(1)), a \mapsto [S \mapsto m(\pi^{-1}(\{a\} \times S))] \rangle \\
 &= \langle m(A^\infty), m(\varepsilon), a \mapsto m_a \rangle
\end{align*}
The aim is now to factorize the semantics obtained via $\omega$ into semantics obtained via $\Pi$. The following result is kind of a completeness property for this operation.
\begin{lem} \label{injective}
The final morphism $\varphi_\Pi$ from $\Pi$ to $\omega$ is injective.
\end{lem}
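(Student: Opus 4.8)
The plan is to first make the final morphism $\varphi_\Pi$ completely explicit, and then to reduce injectivity to the uniqueness of measure extension provided by Theorem \ref{key}.

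First I would unfold the defining property of $\varphi_\Pi$ as a coalgebra morphism from $\Pi$ to $\omega$. As for any $F$-coalgebra, the commutation of the final square forces $\varphi_\Pi(m)(\varepsilon) = \langle \Pi_\1(m), \Pi_\0(m)\rangle = \langle m(A^\infty), m(\varepsilon)\rangle$ and $\varphi_\Pi(m)(aw) = \varphi_\Pi(m_a)(w)$, since the letter-components of $\Pi$ are exactly the measure derivatives $m \mapsto m_a$. A straightforward induction on the length of $w$, using $m_w(S) = m(wS)$, then yields the closed formula
\[ \varphi_\Pi(m)(w) = \langle m(wA^\infty), m(w)\rangle \qquad (w \in A^*). \]
In words, $\varphi_\Pi(m)$ is nothing but the record of the values of $m$ on all cones $wA^\infty$ and all finite singletons $\{w\}$.

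Next, suppose $\varphi_\Pi(m) = \varphi_\Pi(m')$ for two sub-probability measures $m, m' \in \D A^\infty$. Reading off the two components of the formula above, this is equivalent to saying $m(wA^\infty) = m'(wA^\infty)$ and $m(w) = m'(w)$ for every $w \in A^*$. Together with $m(\emptyset) = 0 = m'(\emptyset)$, this means precisely that $m$ and $m'$ agree on every element of the generating family $S_\infty = \{\emptyset\} \cup \{\{w\} \mid w \in A^*\} \cup \{wA^\infty \mid w \in A^*\}$.

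Finally I would conclude via the uniqueness half of Theorem \ref{key}. Both $m$ and $m'$ are measures on $\Sigma_{A^\infty} = \sigma_{A^\infty}(S_\infty)$ whose restrictions to $S_\infty$ coincide; calling this common restriction $\hat m$, each of $m$ and $m'$ is an extension of $\hat m$ to $\Sigma_{A^\infty}$. Since $\hat m$ is the restriction of a genuine measure it is $\sigma$-additive and in particular satisfies condition $(ii)$ of Theorem \ref{key}, so its extension is unique, whence $m = m'$. The bulk of the argument is routine bookkeeping; the only step requiring mild care is the inductive derivation of the explicit formula for $\varphi_\Pi$, while the real content is outsourced to the uniqueness of extension from the semiring $S_\infty$. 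I expect no genuine obstacle beyond keeping the iterated measure derivatives straight.
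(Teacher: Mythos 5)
Your proof is correct and follows essentially the same route as the paper's: both derive from the coalgebra-morphism square that $\varphi_\Pi(m)$ records exactly the values $\langle m(wA^\infty), m(w)\rangle$ (the paper runs the induction directly on the pair $m, m'$, while you establish the closed formula for a single $m$ first), and both then conclude that agreement on $S_\infty$ forces $m = m'$ by the uniqueness of extension in Theorem \ref{key}. Your explicit remark that condition $(ii)$ of Theorem \ref{key} holds because $\hat m$ restricts a genuine measure is a small but welcome piece of care that the paper leaves implicit.
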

\begin{proof}
For any $m, m' \in \D A^\infty$, in order to have $m=m'$, it is sufficient to prove that $m_{|S_\infty} = m'_{|S_\infty}$ according to Theorem \ref{key}. By induction on $w$, we prove that for all $m,m' \in \D A^\infty$ such that $\varphi_\Pi(m) = \varphi_\Pi(m')$, then $\langle m(wA^\infty),m(w) \rangle = \langle m'(wA^\infty), m'(w) \rangle$. First, $\langle m(\varepsilon A^\infty), m(\varepsilon) \rangle = \varphi_\Pi (m)(\varepsilon) = \varphi_\Pi (m')(\varepsilon) = \langle m'(\varepsilon A^\infty), m'(\varepsilon) \rangle$. Note that $\varphi_\Pi(m) = \varphi_\Pi(m')$ implies $\varphi_\Pi(m_a)(w) = \varphi_\Pi(m)(aw) = \varphi_\Pi(m')(aw) = \varphi_\Pi (m'_a)(w)$ so that $\varphi_\Pi(m_a) = \varphi_\Pi(m'_a)$. Use the induction hypothesis to see that $\langle m(awA^\infty), m(aw) \rangle = \langle m_a(wA^\infty), m_a(w) \rangle = \langle m'_a(wA^\infty), m'_a(w) \rangle = \langle m'(awA^\infty), m'(aw) \rangle$. This achieves the induction, so $m$ and $m'$ coincide on $S_\infty$, hence $m=m'$.
\end{proof}
\noindent The following proposition states precisely in which cases the factorization can be done. This is a variant of Theorem \ref{key} in which we really see that our system is making one step. This version is a bit higher-end than Theorem \ref{keybisd} because it also proves that the involved functions are measurable.
\begin{theo} \label{keybis}
Let $\beta = \langle \beta_\1, \beta_\0, a \mapsto \tau_a \rangle : Y \to F Y$ be an $F$-coalgebra. The two following conditions are equivalent:
\begin{enumerate}
\item[$(i)$] There exists an $F$-coalgebra morphism $\llbracket - \rrbracket$ from $\beta$ to $\Pi$.
\item[$(ii)$] The equation $\beta_\1 = \beta_\0 + \sum_{a\in A} \beta_\1 \circ \tau_a$ holds.
\end{enumerate}
In this case, this morphism is unique.
\end{theo}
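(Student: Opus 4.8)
The plan is to prove the two implications separately, then settle uniqueness, and finally---this being the one point that Proposition \ref{keybisd} does not address---the measurability of the morphism.

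For $(i)\Rightarrow(ii)$, I would first unfold the condition $F\llbracket-\rrbracket\circ\beta=\Pi\circ\llbracket-\rrbracket$ componentwise. Since $F$ leaves the two $\I$-outputs untouched, commutation is equivalent to the three identities $\llbracket y\rrbracket(A^\infty)=\beta_\1(y)$, $\llbracket y\rrbracket(\varepsilon)=\beta_\0(y)$ and $(\llbracket y\rrbracket)_a=\llbracket\tau_a(y)\rrbracket$ for all $a\in A$. Now $A^\infty=\{\varepsilon\}\sqcup\bigsqcup_{a\in A}aA^\infty$ is a finite disjoint union, so $\sigma$-additivity of the measure $\llbracket y\rrbracket$ gives $\llbracket y\rrbracket(A^\infty)=\llbracket y\rrbracket(\varepsilon)+\sum_{a\in A}\llbracket y\rrbracket(aA^\infty)$. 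Rewriting the summand as $(\llbracket y\rrbracket)_a(A^\infty)=\llbracket\tau_a(y)\rrbracket(A^\infty)=\beta_\1(\tau_a(y))$ and substituting the first two identities yields exactly $\beta_\1(y)=\beta_\0(y)+\sum_{a\in A}\beta_\1(\tau_a(y))$, which is $(ii)$.

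For $(ii)\Rightarrow(i)$, I would construct $\llbracket y\rrbracket$ on the generators $S_\infty$ and extend it by Theorem \ref{key}. Writing $\tau_\varepsilon=id_Y$ and $\tau_{wa}=\tau_a\circ\tau_w$ as in Lemma \ref{bisimeq}, the morphism conditions force, by induction on $|w|$, the values $m_y(\emptyset)=0$, $m_y(wA^\infty)=\beta_\1(\tau_w(y))$ and $m_y(\{w\})=\beta_\0(\tau_w(y))$, so these are what I take as definition. To apply Theorem \ref{key} I must verify its condition $(ii)$: for every $w$, $m_y(wA^\infty)=m_y(w)+\sum_{a\in A}m_y(waA^\infty)$; writing $z=\tau_w(y)$ this reads $\beta_\1(z)=\beta_\0(z)+\sum_{a\in A}\beta_\1(\tau_a(z))$, which is precisely hypothesis $(ii)$ at the point $z$. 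Theorem \ref{key} then yields a unique measure $\llbracket y\rrbracket$ extending $m_y$, and it is a sub-probability because $\llbracket y\rrbracket(A^\infty)=\beta_\1(y)\in\I$. That $\llbracket-\rrbracket$ is a morphism is checked on $S_\infty$: the first two components hold by construction, while $(\llbracket y\rrbracket)_a$ and $\llbracket\tau_a(y)\rrbracket$ are two measures that agree on $S_\infty$ --- both equal $\beta_\1(\tau_w(\tau_a(y)))$ on $wA^\infty$ and $\beta_\0(\tau_w(\tau_a(y)))$ on $\{w\}$, using $\tau_{aw}=\tau_w\circ\tau_a$ --- hence they coincide by the uniqueness half of Theorem \ref{key}. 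Uniqueness of $\llbracket-\rrbracket$ is then immediate: any morphism is forced to the values $m_y$ on $S_\infty$ and therefore agrees with the constructed one by the same uniqueness.

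The step I expect to be the crux is the measurability of $\llbracket-\rrbracket:Y\to\D A^\infty$. By definition $\Sigma_{\D A^\infty}$ is generated by the evaluations $e_S:m\mapsto m(S)$, so by Lemma \ref{usb} it suffices that $y\mapsto\llbracket y\rrbracket(S)$ be measurable for every $S\in\Sigma_{A^\infty}$. I would establish this with a Dynkin ($\pi$-$\lambda$) argument. Let $\mathcal{D}=\{S\in\Sigma_{A^\infty}\mid y\mapsto\llbracket y\rrbracket(S)\text{ is measurable}\}$. On $S_\infty$ this holds since $y\mapsto\llbracket y\rrbracket(wA^\infty)=\beta_\1(\tau_w(y))$ and $y\mapsto\llbracket y\rrbracket(\{w\})=\beta_\0(\tau_w(y))$ are composites of the measurable maps $\tau_w$, $\beta_\1$, $\beta_\0$. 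As $S_\infty$ is closed under finite intersection, it remains to see that $\mathcal{D}$ is a $\lambda$-system: it contains $A^\infty$; it is closed under proper differences because $\llbracket y\rrbracket$ is finite, giving $\llbracket y\rrbracket(B\setminus S)=\llbracket y\rrbracket(B)-\llbracket y\rrbracket(S)$; and it is closed under increasing countable unions by continuity from below, $\llbracket y\rrbracket(\bigcup_n S_n)=\lim_n\llbracket y\rrbracket(S_n)$, a pointwise limit of measurable functions. The $\pi$-$\lambda$ theorem then gives $\mathcal{D}\supseteq\sigma_{A^\infty}(S_\infty)=\Sigma_{A^\infty}$, which completes the proof.
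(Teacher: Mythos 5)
Your proof is correct and follows essentially the same route as the paper: the same $\sigma$-additivity argument for $(i)\Rightarrow(ii)$, the same construction of the measure on $S_\infty$ extended via Theorem \ref{key} for $(ii)\Rightarrow(i)$ with uniqueness obtained by agreement on $S_\infty$, and the same Lemma \ref{usb} plus $\pi$-$\lambda$ argument for measurability. The only cosmetic differences are that you define the candidate measure in closed form via iterated transitions $\tau_w$ instead of by recursion on words, which turns the paper's inductive verification of condition $(ii)$ of Theorem \ref{key} into a one-line substitution, and that you explicitly check the total mass is at most $1$, a detail the paper leaves implicit.
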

\noindent For convenience we will now denote $e_S^{A^\infty} \circ \llbracket - \rrbracket$ by $\llbracket - \rrbracket(S)$, and $\phi_a \circ \llbracket - \rrbracket$ by $\llbracket - \rrbracket_a$, where the measure derivative function $\phi_a : m \mapsto m_a$ is measurable as a component of $\Pi$.
\begin{proof}
$(i) \Rightarrow (ii)$ Assume that $\llbracket - \rrbracket$ is a coalgebra morphism from $\beta$ to $\Pi$. Commutation of the diagram yields $\langle \beta_\1, \beta_\0, a \mapsto \llbracket - \rrbracket \circ \tau_a \rangle = \langle\llbracket - \rrbracket(A^\infty), \llbracket - \rrbracket(\varepsilon), a \mapsto \llbracket - \rrbracket_a \rangle $. Let $y \in Y$. Because $\llbracket y \rrbracket$ is a measure, $\beta_\1(y) = \llbracket y \rrbracket (\varepsilon A^\infty) = \llbracket y \rrbracket (\varepsilon) + \sum_{a\in A} \llbracket y \rrbracket (bA^\infty)$. Thus $\beta_\1(y) = \beta_\0 (y) + \sum_{a\in A} \llbracket \tau_a(y) \rrbracket(A^\infty) = \beta_\0(y) + \sum_{a\in A} (\beta_\1 \circ \tau_a) (y)$.\\\\
\il{Uniqueness}. If $\llbracket - \rrbracket'$ is another such morphism, we have $\llbracket - \rrbracket(A^\infty) = \llbracket - \rrbracket'(A^\infty)$, $\llbracket - \rrbracket(\varepsilon) = \llbracket - \rrbracket'(\varepsilon)$  and for any $a \in A$, $\llbracket - \rrbracket \circ \tau_a = \llbracket - \rrbracket_a$ and $\llbracket - \rrbracket' \circ \tau_a = \llbracket - \rrbracket'_a$. An immediate induction yields $\llbracket - \rrbracket_{|S_\infty} = \llbracket - \rrbracket'_{|S_\infty}$, thus $\llbracket - \rrbracket = \llbracket - \rrbracket'$ by Theorem \ref{key}.\\\\
$(ii)\Rightarrow (i)$ Assume that $(ii)$ holds. Let us define $\llbracket - \rrbracket_{|S_\infty}$ by induction:
\begin{align*}
&\llbracket y \rrbracket_{|S_\infty} (\varepsilon A^\infty) = \beta_\1 (y) & & \llbracket y \rrbracket_{|S_\infty} (\varepsilon) = \beta_\0 (y) \\
&\llbracket y \rrbracket_{|S_\infty} (awA^\infty) = \llbracket \tau_a(y) \rrbracket_{|S_\infty} (wA^\infty) & & \llbracket y \rrbracket_{|S_\infty} (aw) = \llbracket \tau_a (y) \rrbracket_{|S_\infty} (w)
\end{align*}
We must prove that it can be extended to a measure, using Theorem \ref{key}. First, note that $\llbracket y \rrbracket_{|S_\infty} (\varepsilon A^\infty) = \beta_\1(y) = \beta_\0(y) + \sum_{a\in A} (\beta_\1 \circ \tau_a)(y) = \llbracket y \rrbracket_{|S_\infty} (\varepsilon) + \sum_{a\in A} \llbracket y \rrbracket_{|S_\infty} (aA^\infty)$. If it is known that for all $y\in Y$, $\llbracket y \rrbracket_{|S_\infty} (wA^\infty) = \llbracket y \rrbracket_{|S_\infty} (w) + \sum_{a\in A} \llbracket y \rrbracket_{|S_\infty} (waA^\infty)$ then for any $b\in A$ we have $\llbracket y \rrbracket_{|S_\infty} (bwA^\infty) = \llbracket \tau_b(y) \rrbracket_{|S_\infty} (wA^\infty) = \llbracket \tau_b(y) \rrbracket_{|S_\infty} (w) + \sum_{a \in A} \llbracket \tau_b(y) \rrbracket_{|S_\infty} (waA^\infty) = \llbracket y \rrbracket_{|S_\infty} (bw) + \sum_{a\in A} \llbracket y \rrbracket_{|S_\infty} (bwaA^\infty)$. This proves the $(ii)$ of Theorem \ref{key}. We denote by $\llbracket - \rrbracket$ the extension of $\llbracket - \rrbracket_{|S_\infty}$. We postpone the proof of the measurability of $\llbracket - \rrbracket$; what is left is the commutation of the coalgebra diagram. The first line of the definition of $\llbracket - \rrbracket_{|S_\infty}$ gives directly that $\beta_\1 = \llbracket - \rrbracket(A^\infty)$ and $\beta_\0 = \llbracket - \rrbracket(\varepsilon)$. Let $a\in A$. For any $y \in Y$, according to the second line of the definition of $\llbracket - \rrbracket_{|S_\infty}$, the measures $\llbracket \tau_a(y) \rrbracket$ and $\llbracket y \rrbracket_a$ coincide on $S_\infty$, hence are equal according to Theorem \ref{key}, so $\llbracket - \rrbracket \circ \tau_a = \llbracket - \rrbracket_a$. This achieves the proof that the diagram commutes.\\\\
\il{Measurability}. It is not immediate to notice why $\llbracket - \rrbracket : Y \to \D A^\infty$ is a measurable function. What has to be shown according to Lemma \ref{usb} is that for any $S \in \Sigma_{A^\infty}$, $\llbracket - \rrbracket(S)$ is measurable. This is true when $S \in S_\infty$. Indeed, $\llbracket - \rrbracket(\emptyset)$ is the zero function, which is measurable. For the rest we proceed by induction. Obviously $\llbracket - \rrbracket(\varepsilon A^\infty) = \beta_\1$ and $\llbracket - \rrbracket(\varepsilon) = \beta_\0$ are measurable because $\beta$ is. Furthermore, $\llbracket - \rrbracket(awA^\infty) = \llbracket - \rrbracket_a(wA^\infty) = \llbracket - \rrbracket(wA^\infty) \circ \tau_a$ and $\llbracket - \rrbracket(aw) = \llbracket - \rrbracket_a(w) = \llbracket - \rrbracket(w) \circ \tau_a$ are measurable by induction hypothesis and composition.\\\\
Let $Z$ be a set. A set $P \subseteq \mathcal{P}(Z)$ is a \il{$\pi$-system} if it is non-empty and closed under finite intersections. A set $D \subseteq \mathcal{P}(Z)$ is a \il{$\lambda$-system} if it contains $Z$ and is closed under difference (if $A,B \in D$ and $A\subseteq B$ then $B\setminus A \in D$) and countable \il{increasing} union. A widely known theorem of measure theory, namely the $\pi-\lambda$ theorem (see \cite{Aliprantis06}, lemma $4.11$) is that given $P$ a $\pi$-system, $D$  a $\lambda$-system such that $P \subseteq D$, then $\sigma_Z(P) \subseteq D$.\\\\
Take $Z = A^\infty$, $P = S_\infty$ and $D = \{ S \in \Sigma_{A^\infty} \mid \llbracket - \rrbracket(S) \text{ is measurable} \}$. It is easy to see that $S_\infty$ is a $\pi$-system. Moreover, $D$ is a $\lambda$-system. Indeed, $A^\infty \in D$ (see above), if $(S_n)_{n\in\N}$ is an increasing sequence of sets in $D$, then $\llbracket - \rrbracket(S_1 \setminus S_0) = \llbracket - \rrbracket(S_1) - \llbracket - \rrbracket(S_0)$ is measurable as a difference of measurable functions and $\llbracket - \rrbracket \left( \bigcup_{n\in\N} S_n\right) = \lim_{n\to \infty} \llbracket - \rrbracket (S_n)$ is measurable as a pointwise limit of measurable functions. Finally, given the preceding paragraph, we have $S_\infty \subseteq D$. The $\pi-\lambda$ theorem therefore yields $\Sigma_{A^\infty} \subseteq D$.
\end{proof}
\noindent An interpretation of the last proposition is that, in the subcategory of $F$-coalgebras that satisfy the equation $(ii)$, the final object is $\Pi$. If Proposition \ref{keybis} holds, then note that $\varphi_{\Pi} \circ \llbracket - \rrbracket$ is a coalgebra morphism from $\beta$ into the final coalgebra $\omega$. Hence by finality $\varphi_{\Pi} \circ \llbracket - \rrbracket = \varphi_{\beta}$. This is kind of a soundness property for our factorization. Soundness and completeness together yield the following proposition, which is exactly the same as in section $2$.
\begin{prop}
Let $\beta : Y \to FY$ be an $F$-coalgebra for which Proposition \ref{keybis} holds. Then for any $y,z \in Y$, $\llbracket y \rrbracket = \llbracket z \rrbracket$ iff $\varphi_\beta(y) = \varphi_\beta(z)$.
\end{prop}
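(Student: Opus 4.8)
The plan is to leverage the soundness remark stated just above — the factorization $\varphi_\beta = \varphi_\Pi \circ \llbracket - \rrbracket$ — together with the completeness supplied by Lemma \ref{injective}. First I would record that, since Proposition \ref{keybis} is assumed to hold for $\beta$, the unique coalgebra morphism $\llbracket - \rrbracket$ from $\beta$ to $\Pi$ exists; composing it with the final morphism $\varphi_\Pi$ from $\Pi$ to $\omega$ yields a coalgebra morphism from $\beta$ into the final coalgebra $\omega$, which by finality of $\omega$ must equal $\varphi_\beta$. This gives the factorization on the nose.

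With the factorization available, both implications reduce to one-line diagram chases. For the forward direction I would assume $\llbracket y \rrbracket = \llbracket z \rrbracket$ and simply apply $\varphi_\Pi$, obtaining $\varphi_\beta(y) = \varphi_\Pi(\llbracket y \rrbracket) = \varphi_\Pi(\llbracket z \rrbracket) = \varphi_\beta(z)$. For the converse I would assume $\varphi_\beta(y) = \varphi_\beta(z)$, rewrite it via the factorization as $\varphi_\Pi(\llbracket y \rrbracket) = \varphi_\Pi(\llbracket z \rrbracket)$, and invoke the injectivity of $\varphi_\Pi$ from Lemma \ref{injective} to conclude $\llbracket y \rrbracket = \llbracket z \rrbracket$.

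I expect no genuine obstacle, since the substance has been front-loaded into the two results invoked: the forward direction is pure soundness (finality of $\omega$ forcing the factorization), while the backward direction is pure completeness (injectivity of $\varphi_\Pi$ on $\D A^\infty$, which itself rests on Theorem \ref{key} characterising a measure by its restriction to $S_\infty$). The only point meriting a moment's care is verifying that the standing hypothesis — that Proposition \ref{keybis} holds — is exactly what guarantees the existence of $\llbracket - \rrbracket$, so that the factorization is well-defined; once that is granted, the statement is immediate.
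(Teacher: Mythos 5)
Your proposal is correct and follows exactly the paper's argument: the paper establishes the factorization $\varphi_\Pi \circ \llbracket - \rrbracket = \varphi_\beta$ by finality in the remark immediately preceding the proposition (the ``soundness'' step you reconstruct), and its one-line proof then combines this with the injectivity of $\varphi_\Pi$ from Lemma \ref{injective}, precisely as you do. The only difference is presentational --- you spell out the two directions of the equivalence that the paper compresses into a single chain of ``iff''s.
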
 
\begin{proof}
By Lemma \ref{injective}, $\llbracket y \rrbracket = \llbracket z \rrbracket$ iff $(\varphi_\Pi \circ \llbracket - \rrbracket)(y) = (\varphi_\Pi \circ \llbracket - \rrbracket)(z)$ iff $\varphi_\beta(y) = \varphi_\beta(z)$.
\end{proof}
\noindent Back to $\alpha : X \to \P L X$ we check that Proposition \ref{keybis} holds for $\tilde{\alpha}^\# = \langle \tilde{\alpha}^\#_\1,\tilde{\alpha}^\#_\0, a \mapsto \tau_a \rangle$. Note that because $\alpha(x)(LX) = 1$, we have for all $m \in \D X$ that $m(X) = \int_X 1dm = \int_X \alpha(x)(LX) dm = \tilde{\alpha}^\#_\1(m)$. This justifies the last equality:
\begin{align*}
\tilde{\alpha}^\#_\1(m) &= \int_X \alpha(-)(LX) dm = \int_X \left(\alpha(-)(1) + \sum_{a\in A} \alpha(-)(\{a\} \times X) \right) dm
 \\ &= \int_X \alpha(-)(1) dm + \sum_{a\in A} \int_X \alpha(-)(\{a\} \times X) dm \\
 &= \tilde{\alpha}^\#_\0(m) + \sum_{a\in A} \tau_a(m)(X) = \tilde{\alpha}^\#_\0(m) + \sum_{a\in A} (\tilde{\alpha}_\1^\# \circ \tau_a)(m)
\end{align*}
\gr{Conclusion.} To any $\alpha : X \to \P L X$ can be given a canonical trace semantics via a determinization process. This is a function $\llbracket - \rrbracket : \D X \to \D A^\infty$.
\subsection{Related results}
\subsubsection*{Link with Kerstan's trace semantics}
In \cite{Kerstan13}, given an $\alpha : X \to \P L X$, the trace semantics $\gr{tr} : X \to \P A^\infty$ is defined by
\begin{align*}
&\gr{tr}(x)(\varepsilon A^\infty) = \alpha(x)(LX) \text{\indent} (= 1) & & \gr{tr}(\varepsilon) = \alpha(x)(1) \\
&\gr{tr}(x)(awA^\infty) = \int_X \gr{tr}(-)(wA^\infty) dt_a(x) & & \gr{tr}(x)(aw) = \int_X \gr{tr}(-)(w) dt_a(x)
\end{align*}
We will hereby prove that this sematics fits with ours, in the sense that the following diagram commutes.
\begin{center}
\begin{tikzpicture}
  \matrix (m) [matrix of math nodes,row sep=3em,column sep=4em,minimum width=2em]
  {
     X & \P A^\infty \\
     \D X & \D A^\infty \\};
  \path[-stealth]
  (m-1-1) edge node [above] {$\gr{tr}$} (m-1-2)
  (m-1-1) edge node [left] {$\eta_X$} (m-2-1)
  (m-1-2) edge node [right] {$\iota_{A^\infty}$} (m-2-2)
  (m-2-1) edge node [below] {$\llbracket - \rrbracket$} (m-2-2);
\end{tikzpicture}
\end{center}
\noindent Define $\ll - \rr = \llbracket - \rrbracket \circ \eta_X$ as in the discrete case.
\begin{lem}
For any $m \in \D X$ and any $S\in S_\infty$, $\llbracket m \rrbracket(S) = \int_X \ll - \rr(S) dm$.
\end{lem}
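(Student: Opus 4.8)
The plan is to prove the identity by induction on the word $w$ indexing the generators $\{w\}, wA^\infty \in S_\infty$, keeping $m \in \D X$ universally quantified throughout so that the hypothesis is available for every measure at the next stage. Recall from the proof of Theorem \ref{keybis} that $\llbracket - \rrbracket_{|S_\infty}$ is characterized recursively by $\llbracket m \rrbracket(\varepsilon A^\infty) = \tilde{\alpha}^\#_\1(m)$, $\llbracket m \rrbracket(\varepsilon) = \tilde{\alpha}^\#_\0(m)$, $\llbracket m \rrbracket(awA^\infty) = \llbracket \tau_a(m) \rrbracket(wA^\infty)$ and $\llbracket m \rrbracket(aw) = \llbracket \tau_a(m) \rrbracket(w)$, and that by definition $\ll - \rr = \llbracket - \rrbracket \circ \eta_X$, so $\ll x \rr(S) = \llbracket \eta_X(x) \rrbracket(S)$. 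I will also use that each function $\ll - \rr(S) = \llbracket - \rrbracket(S) \circ \eta_X$ is measurable, as needed to make the integrals meaningful.

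For the base cases I would treat $\emptyset$ trivially (both sides vanish), and for $\varepsilon A^\infty$ use the explicit form $\tilde{\alpha}^\#_\1(m) = \int_X \tilde{\alpha}_\1 \, dm$ coming from the determinization, together with the determinization identity $\tilde{\alpha}^\# \circ \eta_X = \tilde{\alpha}$ which gives $\ll x \rr(\varepsilon A^\infty) = \tilde{\alpha}^\#_\1(\eta_X(x)) = \tilde{\alpha}_\1(x)$; hence $\int_X \ll - \rr(\varepsilon A^\infty)\, dm = \int_X \tilde{\alpha}_\1\, dm = \llbracket m \rrbracket(\varepsilon A^\infty)$. The singleton $\{\varepsilon\}$ is identical with $\tilde{\alpha}_\0$ in place of $\tilde{\alpha}_\1$.

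The inductive step is where the real work lies. For $S = awA^\infty$ I would first rewrite $\llbracket m \rrbracket(awA^\infty) = \llbracket \tau_a(m) \rrbracket(wA^\infty)$ and apply the induction hypothesis to the measure $\tau_a(m)$ and the shorter word $w$, obtaining $\int_X \ll - \rr(wA^\infty) \, d\tau_a(m)$. The crux is the transfer formula
\[ \int_X f \, d\tau_a(m) = \int_X \left( \int_X f \, dt_a(x) \right) dm(x) \]
for measurable $f : X \to \I$, which I would derive from $\tau_a = \mu_X \circ \D t_a$ as follows: writing $g_f(\nu) = \int_X f\, d\nu$, the quantity $\int_X f\, d\tau_a(m) = g_f(\mu_X(\D t_a(m)))$ equals, by the already-proved identity $g_f \circ \mu_X = g \circ \D g_f$, functoriality of $\D$ (so $\D g_f \circ \D t_a = \D(g_f \circ t_a)$), and the image-measure integration formula from the Integration subsection applied to $\D t_a(m) = m \circ t_a^{-1}$, the value $g(\D(g_f \circ t_a)(m)) = \int_X (g_f \circ t_a)\, dm$, which is exactly the claim. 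Finally, since $\tau_a \circ \eta_X = t_a$, the inner integral reads $\int_X f \, dt_a(x) = \int_X f\, d\tau_a(\eta_X(x)) = \llbracket \tau_a(\eta_X(x)) \rrbracket(wA^\infty) = \llbracket \eta_X(x) \rrbracket(awA^\infty) = \ll x \rr(awA^\infty)$, using the induction hypothesis on $\eta_X(x)$ and the recursive definition, so that $\llbracket m \rrbracket(awA^\infty) = \int_X \ll - \rr(awA^\infty)\, dm$. The singleton case $S = \{aw\}$ runs verbatim with $w$ in place of $wA^\infty$.

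The main obstacle I anticipate is the transfer formula through $\tau_a$: it is the only place where the monad structure (the multiplication $\mu$ and the $g_f$ lemma) genuinely intervenes, and some care is needed to chain functoriality of $\D$ with the image-measure formula for $\D t_a(m) = m \circ t_a^{-1}$ in the right order. Everything else is bookkeeping of the recursive definition of $\llbracket - \rrbracket_{|S_\infty}$ and of the determinization identity $\tau_a \circ \eta_X = t_a$, after which the $\pi$–$\lambda$ extension of Theorem \ref{key} is not even needed since the statement only concerns generators in $S_\infty$.
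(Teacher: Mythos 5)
Your proof is correct, and its skeleton is the same as the paper's: an induction on $w$ with $m$ kept universally quantified, base cases read off from $\tilde{\alpha}^\#_\1(m) = \int_X \tilde{\alpha}_\1\, dm$ (and likewise with $\tilde{\alpha}_\0$), and an inductive step that pivots on the transfer formula $\int_X f\, d\tau_a(m) = \int_X \left(\int_X f\, dt_a(x)\right) m(dx)$ followed by a second application of the induction hypothesis to the measure $t_a(x) = \tau_a(\eta_X(x))$ — a step the paper also performs, though it labels it only as ``definition of $\tau_a$'', so your making this second use of the hypothesis explicit is a small improvement in clarity. The one genuine divergence is how the transfer formula is established. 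The paper proves it from scratch by a density argument: it checks the identity for indicators $f = \gr{1}_B$, where it reduces to the defining formula $\tau_a(m)(B) = \int_X t_a(-)(B)\, dm$, extends by linearity to simple functions, and passes to the limit by monotone convergence. You instead derive it from the computed categorical form $\tau_a = \mu_X \circ \D t_a$ of the determinization, by chaining the already-proved identity $g_f \circ \mu_X = g \circ \D g_f$, functoriality of $\D$, and the image-measure formula. This is sound (one cosmetic imprecision: the change of variables is ultimately applied to $m \circ (g_f \circ t_a)^{-1}$, obtained after functoriality, rather than to $\D t_a(m)$ itself, but the chain you display is the right one), and it buys modularity: the monotone-convergence work is done once, inside the lemma on $g_f$, instead of being repeated here. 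The price is a dependence on the formula $\tau_a = \mu_X \circ \D t_a$, whereas the paper's density argument needs only the pointwise expression of $\tau_a(m)$. You are also right that Theorem \ref{key} is not needed at this stage, since the statement concerns only generators in $S_\infty$; it intervenes only afterwards, when concluding that the two measures agree everywhere.
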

\begin{proof}
In this proof there may be times when $\int_X f dm$ is denoted by $\int_{x\in X} f(x) m(dx)$. First, show that for any measurable function $f : X \to \I$,
\[\int_X f d\tau_a (m) = \int_{x \in X} \left( \int_X f d t_a (x) \right) m(dx) \]
Using a density argument, first look at the case $f = \gr{1}_B$ for some $B \in \Sigma_X$. The equality becomes $\tau_a(m)(B) = \int_X t_a(-)(B) dm$, which is true by definition of $\tau_a$. Furthermore, the property is clearly preserved by linear combination, so it is true for simple functions. Now let $(f_n)_{n\in\N}$ be an increasing sequence of simple functions with pointwise limit $f$. Then $\int_X f dt_a(m) = \lim \int_X f_n dt_a(m) = \lim \int_{x \in X} \left(\int_X f_n dt_a(x)\right) m(dx) = \int_{x\in X} \left(\int_X \lim f_n dt_a(x)\right) m(dx) = \int_{x\in X} \left(\int_X f dt_a(x)\right) m(dx)$. The exchanges between limit and integral are justified by the monotone convergence theorem.\\\\
Note further that $\ll x \rr (\varepsilon A^\infty) = (\tilde{\alpha}^\#_\1 \circ \eta_X)(x) = \tilde{\alpha}_\1(x) = \alpha(x)(LX)$ and in the same way $\ll x \rr (\varepsilon) = \alpha(x)(1)$. Now let us prove the lemma by induction, for all $m\in \D X$. First \begin{align*}
& \llbracket m \rrbracket(\varepsilon A^\infty) = \tilde{\alpha}^\#_\1(m) = \int_X \alpha(-)(LX) dm = \int_X \ll - \rr (\varepsilon A^\infty) dm \\
& \llbracket m \rrbracket(\varepsilon) = \tilde{\alpha}^\#_\0(m) = \int_X \alpha(-)(1) dm = \int_X \ll - \rr (\varepsilon) dm 
\end{align*}
Assume the result is true for $wA^\infty$ and $w$. Take $\diamond \in \{\{\varepsilon\},A^\infty \}$.
\begin{align*} 
\llbracket m \rrbracket (aw\diamond) &= \llbracket \tau_a(m) \rrbracket (w\diamond) = \int_X \ll - \rr (w\diamond) d\tau_a(m) & \text{(induction hypothesis)}
\\ &= \int_{x\in X} \left( \int_X \ll - \rr (w\diamond) dt_a(x) \right) m(dx) & \text{(preliminary lemma)}
\\ &= \int_{x \in X} \llbracket \tau_a(\eta_X(x)) \rrbracket(w\diamond) m(dx) & \text{(definition of $\tau_a$)}
\\ &= \int_X \ll - \rr (aw\diamond) dm
\end{align*}
\end{proof}
\noindent Using this last lemma and that $\tau_a \circ \eta_X = t_a$, we have for any $x\in X$: 
\begin{align*}
&\ll x \rr (\varepsilon A^\infty) = \alpha(x)(LX) \\
&\ll x \rr (\varepsilon) = \alpha(x)(1) \\
&\ll x \rr (awA^\infty) = \llbracket (\tau_a \circ \eta_X)(x) \rrbracket (wA^\infty) = \llbracket t_a(x) \rrbracket (wA^\infty) = \int_X \ll - \rr(wA^\infty) dt_a(x) \\
&\ll x \rr (aw) = \llbracket (\tau_a \circ \eta_X)(x) \rrbracket (w) = \llbracket t_a(x) \rrbracket (w) = \int_X \ll - \rr(w) dt_a(x)
\end{align*}
Thus, for any $x\in X$, $\ll - \rr(x)$ and $(\iota_{A^\infty} \circ \gr{tr})(x)$ are measures in $\D A^\infty$ that coincide on $S_\infty$. Because of Theorem \ref{key}, they are equal. Consequently the above diagram commutes, which mean that the trace semantics we get via determinization and Eilenberg-Moore algebras is the same as the Kleisli trace semantics of \cite{Kerstan13}.
\begin{prop} The two trace semantics denoted by $\ll - \rr$ and $\gr{tr}$ coincide. \end{prop}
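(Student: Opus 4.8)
The plan is to reduce the asserted equality of measures to a check on the generating family $S_\infty$ and then invoke the uniqueness half of Theorem~\ref{key}. For a fixed $x \in X$, both $\ll x \rr = (\llbracket - \rrbracket \circ \eta_X)(x)$ and $(\iota_{A^\infty} \circ \gr{tr})(x)$ are sub-probability measures on $(A^\infty, \Sigma_{A^\infty})$, and $\Sigma_{A^\infty} = \sigma_{A^\infty}(S_\infty)$ with $S_\infty$ the covering semiring of Theorem~\ref{key}. Hence it is enough to prove that the two measures agree on every generator $\emptyset$, $\{w\}$ and $wA^\infty$; agreement on $\emptyset$ is immediate, and Theorem~\ref{key} then propagates agreement from $S_\infty$ to all of $\Sigma_{A^\infty}$.

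First I would use the preceding lemma, which gives $\ll x \rr(S) = \llbracket \eta_X(x) \rrbracket(S) = \int_X \ll - \rr(S)\, d\eta_X(x)$ for $S \in S_\infty$, together with the defining equations of $\llbracket - \rrbracket_{|S_\infty}$ from Theorem~\ref{keybis} and the identity $\tau_a \circ \eta_X = t_a$, to extract the explicit recursion $\ll x \rr(\varepsilon A^\infty) = \alpha(x)(LX)$, $\ll x \rr(\varepsilon) = \alpha(x)(1)$, $\ll x \rr(awA^\infty) = \int_X \ll - \rr(wA^\infty)\, dt_a(x)$ and $\ll x \rr(aw) = \int_X \ll - \rr(w)\, dt_a(x)$. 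The crux buried inside that lemma, and the main obstacle of the whole argument, is the integration identity $\int_X f\, d\tau_a(m) = \int_{x \in X}\!\left(\int_X f\, dt_a(x)\right) m(dx)$ for measurable $f : X \to \I$; I would establish it by the standard density ladder, verifying it for indicators $f = \gr{1}_B$ straight from the definition of $\tau_a$, extending to simple functions by linearity, and passing to the general case by monotone convergence.

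Finally I would set these four equations side by side with Kerstan's recursive definition of $\gr{tr}$. The base clauses coincide verbatim and the inductive clauses are the very same integrals against $t_a(x)$, so a simultaneous induction on $|w|$ shows that $\ll x \rr$ and $(\iota_{A^\infty} \circ \gr{tr})(x)$ agree on all of $S_\infty$. Theorem~\ref{key} then forces $\ll x \rr = (\iota_{A^\infty} \circ \gr{tr})(x)$ for every $x$, which is exactly the commutation of the square, completing the proof. The only bookkeeping subtlety beyond the integration identity is handling the cone case and the singleton case uniformly, which can be packaged by letting a single symbol range over $\{\varepsilon\}$ and $A^\infty$ in the inductive step.
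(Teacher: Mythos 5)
Your proposal is correct and follows essentially the same route as the paper: both extract the explicit recursion for $\ll - \rr$ on $S_\infty$ from the preceding lemma (whose core is exactly the integration identity $\int_X f\, d\tau_a(m) = \int_{x\in X}\bigl(\int_X f\, dt_a(x)\bigr)\, m(dx)$, proved by the same indicator/simple/monotone-convergence ladder), match it clause by clause against Kerstan's recursion by induction on word length, and conclude via the uniqueness part of Theorem~\ref{key} — including the same device of a single symbol ranging over $\{\varepsilon\}$ and $A^\infty$ to treat singletons and cones uniformly. The only cosmetic slip is your opening invocation of the lemma at $m = \eta_X(x)$, which is a vacuous Dirac identity; the application that actually matters (and that you do use later) is at $m = t_a(x)$.
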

\subsubsection*{Link with the discrete case}
In the event that $\alpha : X \to \P L X$ can be seen as a discrete system, i.e., for all $x \in X$, $\alpha(x)$ is a linear sum of Dirac distributions, then the general semantics coincide with those obtained in section $2$.

\subsubsection*{Link with a more general correspondence}
In \cite{Jacobs15}, an abstract link is established between Kleisli trace semantics and determinized trace semantics. It turns out that when both constructions are possible, and under some compatibility conditions, the two trace semantics can be compared. In the setting of our paper, it is proved above that indeed the trace semantics are the same. But we do not know if the general result can be directly applied here for at least one reason: the correspondence stated in \cite{Jacobs15} uses only one monad for both constructions. This allows to relate both constructions in an easier fashion, via for example the extension natural transformation $\e : \D L \Rightarrow F \D$. The moment we violate this is when we use the injection natural transformation $\iota : \P \Rightarrow \D$. It seems actually impossible to choose to use only $\P$ or $\D$. The Giry monad is necessary because we do need the sums-to-$1$ condition to ensure that Theorem \ref{keybis} is satisfied. The sub-Giry monad is necessary because the components of $\tilde{\alpha}$ do not sum to $1$. One may argue that $\P(A\times X +1)\simeq \D (A \times X)$ (via the function $m \mapsto m_{|{\Sigma_A \otimes \Sigma_X}}$), but this does not solve this issue. Indeed, Kleisli semantics of PTS of the form $X \to \D(A\times X)$ is trivial (see Theorem 3.33 in \cite{Kerstan13}).\\\\
Our work is largely done by hand because there are no general enough results about systems of the shape $X \to \P (A \times X + 1)$. Either this is because such systems are a really specific case and the existence of measure semantics is a little wonder, or this may be because we were not able to see a more stylish way to proceed.

\section*{Conclusion}
\addcontentsline{toc}{section}{Conclusion}
The recent formalization of automata through coalgebras allows to take a step back and understand why a trace semantics is the good one in a certain sense. In addition, it provides some tools such as bisimulation (up-to). We took as a starting point the trace semantics for continuous PTS given by Kerstan in \cite{Kerstan13} and redefined it using a determinization process in both discrete and continuous cases. It seems that a Kleisli approach is inadequate for taking into account infinite traces in $\gr{Sets}$, whereas our determinization approach can do this using a small and very localized amount of measure theory. For discrete PTS, bisimulations up-to turned out to be a fertile ground for finding an algorithm that checks trace equivalence.\\\\
In section $3$ the algorithmic considerations of section $2$ could be adapted, but this would be irrelevant. Indeed, such algorithms for continuous systems will not be computable as they involve infinite sums or integrals, in contrast to $\texttt{HKC}^\infty$ which only needs some matrix multiplications. Moreover, for any interesting general system, i.e., any system that can not be reducted to the discrete case, such algorithms will not terminate. The only case for which they would be useful is to prove that a given general system is equivalent to a given discrete system.\\\\
One could have worked with only one output in the machine functor, as this is usually the case. For example, if the total mass output is dropped, its information is not lost forever as we can compute it by summing the mass on every word using the termination output and the transitions. However, our way of doing is shaped to highlight the step-by-step motion of the automaton and the fact that each state carries two equally important pieces of information. This makes the coalgebraic treatment via the machine functor sound in regards to bisimulation.  Another reason is that using two outputs lead "easily" to the statement about the existence of a pseudo-final morphism into $\D A^\infty$.\vfill
\noindent \gr{Acknowledgements}. I would like to thank Marc Aiguier and Jan Rutten for helping me find this internship. Thanks to Damien Pous, Filippo Bonchi and Jan Rutten again for listening to me and asking inspiring questions. Of course thanks to my supervisor Jurriaan Rot for the same reasons plus his valuable comments, availability and cheerfullness. And lastly, thanks to Meven Bertrand for keeping me good company.
\newpage
\bibliographystyle{plain}
\bibliography{bibli}

\end{document}